\documentclass[a4paper,UKenglish]{lipics-v2016}

\usepackage{microtype}%

\bibliographystyle{plainurl}%

\title{Dynamic Complexity under Definable Changes\footnote{The authors acknowledge the financial support by DFG grant SCHW 678/6-2. The third author thanks the Simons Institute for the Theory of Computing for hosting him and providing excellent research conditions.}}

\author{Thomas Schwentick}
\author{Nils Vortmeier}
\author{Thomas Zeume}

\affil{TU Dortmund University\\
  Germany\\
  \texttt{\{thomas.schwentick, nils.vortmeier, thomas.zeume\}@tu-dortmund.de}}

\authorrunning{T. Schwentick, N. Vortmeier, T. Zeume} %

\Copyright{Thomas Schwentick, Nils Vortmeier and Thomas Zeume}%

\subjclass{F.4.1. Mathematical Logic}%
\keywords{Dynamic descriptive complexity, SQL updates, dynamic programs}%
\ArticleNo{1}

\usepackage{mathtools}

\usepackage[pagewise]{lineno}
\usepackage{xspace}
\usepackage{ifmtarg}
\usepackage{paralist}
\usepackage{tikz}
\usetikzlibrary{arrows}

\newif\ifcomments
\newif\ifchanges
\commentsfalse\changesfalse
\commentstrue\changestrue %

\newcommand  {\myclass} [1]  {\ensuremath{\textsc{#1}}}

\newcommand{\StaClass}[1]{\myclass{#1}\xspace}

\newcommand{\DynClass}[1]{\myclass{Dyn#1}\xspace}

\newcommand  {\myproblem} [1] {\textsc{#1}}

\newcommand  {\problem}[1] {\myproblem{#1}}

\newcommand     {\AC}   {\myclass{AC}}

\newcommand{\FO}{\StaClass{FO}}
\newcommand{\MSO}[1][\quant]{\StaClass{MSO}}

\newcommand{\EFO}{\QFO[\exists^*]}
\newcommand{\AFO}{\QFO[\forall^*]}

\newcommand{\CQ}[1][]{\StaClass{CQ}}
\newcommand{\UCQ}[1][]{\StaClass{UCQ}}
\newcommand{\CQneg}[1][]{\StaClass{CQ\ensuremath{^{\mneg}}}}
\newcommand{\UCQneg}[1][]{\StaClass{UCQ\ensuremath{^{\mneg}}}}

\newcommand{\mneg}{\neg} %

\newcommand{\DynProp}{\DynClass{Prop}}

\newcommand{\DynFO}{\DynClass{FO}}

\newcommand{\Emptiness}[1][]{\problem{Emptiness}\ifthenelse{\equal{#1}{}}{}{(#1)}\xspace}
\newcommand{\Consistency}[1][]{\problem{Consistency}\ifthenelse{\equal{#1}{}}{}{(#1)}\xspace}
\newcommand{\HI}[1][]{\problem{HistoryIndependence}\ifthenelse{\equal{#1}{}}{}{(#1)}\xspace}

\newcommand{\mtext}[1]{\textsc{#1}}

\providecommand {\calA}      {{\mathcal A}\xspace}
\providecommand {\calB}      {{\mathcal B}\xspace}

\providecommand {\calD}      {{\mathcal D}\xspace}

\providecommand {\calI}      {{\mathcal I}\xspace}

\providecommand {\calP}      {{\mathcal P}\xspace}

\providecommand {\calS}      {{\mathcal S}\xspace}

\newcommand{\N}{\ensuremath{\mathbb{N}}}

\newcommand{\bigO}{\ensuremath{\mathcal{O}}}
\newcommand{\tpl}{\bar}

\newcommand{\df}{\ensuremath{\mathrel{\smash{\stackrel{\scriptscriptstyle{
    \text{def}}}{=}}}} \;}

\makeatletter %
\newcommand{\auxramsey}[4]{
  \@ifmtarg{#1}{
    \@ifmtarg{#4}{
      \ensuremath{R(#2; #3)}
    }{
      \ensuremath{R^#4(#2; #3)}
    }
   }{
    \@ifmtarg{#4}{
      \ensuremath{R_{#1}(#2; #3)}
    }{
      \ensuremath{R^#4_{#1}(#2; #3)}
    }
  }
}

   \theoremstyle{plain}

    \theoremstyle{definition}
    \newenvironment{proofof}[1]{\begin{proof}[Proof (of #1).]}{\end{proof}}
\newcommand{\arity}{\ensuremath{\text{Ar}}}

\newcommand{\schema}{\ensuremath{\tau}\xspace}

\newcommand{\relSchema}{\schema_{\text{rel}}}

\newcommand{\conSchema}{\schema_{\text{const}}}

\newcommand{\struc}{\calS}

\newcommand{\quant}{\mathbb{Q}}

\newcommand{\db}{\ensuremath{\calD}\xspace}
\newcommand{\inp}{\ensuremath{\calI}\xspace}
\newcommand{\aux}{\ensuremath{\calA}\xspace}

\newcommand{\domain}{\ensuremath{ D}\xspace}

\newcommand{\rel}[1]{\ensuremath{#1}}
\newcommand{\TC}{\rel{TC}}

\newcommand{\dom}{\ensuremath{\text{dom}}}

\newcommand{\state}{\ensuremath{\struc}\xspace}

\newcommand{\inpSchema}{\ensuremath{\schema_{\text{in}}}\xspace}
\newcommand{\auxSchema}{\ensuremath{\schema_{\text{aux}}}\xspace}

\newcommand{\prog}{\ensuremath{\calP}\xspace}

\newcommand{\updateDB}[2]{\ensuremath{#1(#2)}}
\newcommand{\updateState}[3]{\ensuremath{#1_{#2}(#3)}}
\newcommand{\uf}[4]{
  \@ifmtarg{#4}{
    \ensuremath{\phi^{#1}_{#2}(#3)}
   }{
    \ensuremath{\phi^{#1}_{#2}(#3; #4)}
  }
}
\newcommand{\huf}[4]{
  \@ifmtarg{#4}{
    \ensuremath{\widehat{\phi}^{#1}_{#2}(#3)}
   }{
    \ensuremath{\widehat{\phi}^{#1}_{#2}(#3; #4)}
  }
}

\newcommand{\ufb}[4]{
  \@ifmtarg{#4}{
    \ensuremath{\psi^{#1}_{#2}(#3)}
   }{
    \ensuremath{\psi^{#1}_{#2}(#3; #4)}
  }
}

  \makeatletter %
  \newcommand{\ufsubstitute}[5]{
    \@ifmtarg{#5}{
      \ensuremath{\phi^{#2}_{#3}[#1](#4)}
    }{
      \ensuremath{\phi^{#2}_{#3}[#1](#4; #5)}
    }
  }

  \makeatletter %
  
  \makeatletter %

\newcommand{\ut}[4]{
  \@ifmtarg{#4}{
    \ensuremath{t^{#1}_{#2}(#3)}
   }{
    \ensuremath{t^{#1}_{#2}(#3; #4)}
  }
}

\newcommand{\ite}[3]{
  \@ifmtarg{#1}{
    \ensuremath{\mtext{ITE}}
   }{
    \mtext{ITE}(#1,#2,#3)  
  }
}

\newcommand{\mf}[3]{
  \@ifmtarg{#3}{
    \ensuremath{\mu_{#1}(#2)}
   }{
    \ensuremath{\mu_{#1}(#2; #3)}
  }
}

\newcommand{\mfos}[4]{
  \@ifmtarg{#4}{
    \ensuremath{{#1}_{#2}(#3)}
   }{
    \ensuremath{{#1}_{#2}(#3; #4)}
  }
}

  \newcommand{\insertion}[2]{\textbf{insert}\ #2\ \textbf{into}\ #1}
  \newcommand{\deletion}[2]{\textbf{delete}\ #2\ \textbf{from}\ #1}
  \newcommand{\replace}[2]{\textbf{replace}\ #1\ \textbf{by}\ #2}

  \newcommand{\changeRule}[3]{
          \textbf{on change} #1 \textbf{update}  #2  \textbf{as}  #3
  }

\providecommand{\nc}{\newcommand}

\ifcomments
\nc{\commentbox}[1]{\noindent\framebox{\parbox{0.98\linewidth}{#1}}}
\nc{\todo}[1]{\ \\ {\color{red} \fbox{\parbox{0.98\linewidth}{{\sc
          ToDo}:\\  #1}}}}

\setlength{\marginparwidth}{2.5cm}
\setlength{\marginparsep}{3pt}

\newcommand{\acomment}[2]{\ \\ \fbox{\parbox{0.98\linewidth}{{\sc #1}: #2}}}
\newcommand{\mcomment}[2]{{\color{blue}(#1)}\footnote{#1: #2}} %
\else
\nc{\commentbox}[1]{}
\newcommand{\mcomment}[2]{}
\newcommand{\acomment}[2]{}
\fi

\ifchanges

\newcommand{\loldnew}[3]{\commentbox{{\textcolor{blue}{\setlength{\fboxsep}{1pt}\fbox{\small
          #1}}} \textcolor{red}{\footnotesize #2}}
  \textcolor{blue}{#3}}
\setul{}{0.2mm}
\setstcolor{red}
\newcommand{\oldnew}[3]{{\textcolor{blue}{\setlength{\fboxsep}{1pt}\fbox{\small
        #1}}} \st{\footnotesize #2} \textcolor{blue}{#3}}

\else
\newcommand{\loldnew}[3]{#3}
\newcommand{\oldnew}[3]{#3}
\fi

 \nc{\tzm}[1]{\mcomment{TZ}{#1}}
 \nc{\tsm}[1]{\mcomment{TS}{#1}}
 \nc{\nilsm}[1]{\mcomment{NV}{#1}}
 
 \nc{\tz}[1]{\acomment{TZ}{#1}}
 \nc{\thz}[1]{\acomment{TZ}{#1}}
 \nc{\ts}[1]{\acomment{TS}{#1}}
 \nc{\nils}[1]{\acomment{NV}{#1}}

\nc{\tzon}[2][]{\oldnew{TZ}{#1}{#2}} 
\nc{\tson}[2][]{\oldnew{TS}{#1}{#2}}
\nc{\nilson}[2][]{\oldnew{NV}{#1}{#2}}

\nc{\tzlon}[2][]{\loldnew{TZ}{#1}{#2}} 
\nc{\tslon}[2][]{\loldnew{TS}{#1}{#2}}
\nc{\nilslon}[2][]{\loldnew{NV}{#1}{#2}}

\newcommand{\qr}{\ensuremath{\text{qd}}} %
\newcommand{\bd}{\ensuremath{\text{bd}}} %
\newcommand{\id}{\ensuremath{\text{id}}} %
\newcommand{\liff}{\ensuremath{\leftrightarrow}}
\newcommand{\Sigmae}{\ensuremath{\Sigma_\epsilon}}

\renewcommand{\EFO}{\ensuremath{\Sigma_1}}
\renewcommand{\AFO}{\ensuremath{\Pi_1}}

\newcommand{\Qreach}{\ensuremath{q_{\text{Reach}}}}
\usetikzlibrary{arrows, shapes, calc}%

\tikzstyle{mnode}=[
  circle,
  fill=\mnodefillcolor, 
  draw=\mnodedrawcolor,
  minimum size=6pt, 
  inner sep=0pt
]

\tikzstyle{mnodeinvisible}=[
  minimum size=6pt, 
  inner sep=0pt
]

\tikzstyle{invisible}=[
  minimum size=0pt, 
  inner sep=0pt
]

\tikzstyle{invisiblel}=[
  minimum size=10pt, 
  inner sep=0pt
]

\tikzstyle{invisibleEdge}=[
  transparent
]

\tikzstyle{nameNode}=[
  font=\scriptsize
]

\tikzstyle{namingNode}=[
  font=\normalsize
]

\tikzstyle{mEdge}=[
  -latex', %
  thick, 
  shorten >=3pt, 
  shorten <=3pt,
  draw=black!80,
]

\tikzstyle{dDashedEdge}=[
  -latex', %
  thick, 
  shorten >=3pt, 
  shorten <=3pt,
  draw=black!80,
  dashed
]

\tikzstyle{dEdge}=[
  -latex', %
  thick, 
  shorten >=1pt, 
  shorten <=1pt,
  draw=black!80,
]

\tikzstyle{dhEdge}=[
  -latex', %
  thick, 
  shorten >=3pt, 
  shorten <=3pt,
  draw=black!80,
]
\tikzstyle{uEdge}=[
  thick, 
  shorten >=3pt, 
  shorten <=3pt,
  draw=black!80,
]
\tikzstyle{uhEdge}=[
  thick, 
  shorten >=3pt, 
  shorten <=3pt,
  draw=black!80,
]

\tikzstyle{cEdge}=[
  ultra thick, 
  shorten >=3pt, 
  shorten <=3pt,
  draw=black!80,
]

\tikzstyle{dotsEdge}=[
  very thick, 
  loosely dotted, 
  shorten >=7pt, 
  shorten <=7pt
]

\tikzstyle{class rectangle}=[
  draw=black,
  inner sep=0.2cm,
  rounded corners=5pt,
  thick
]

\tikzstyle{mline}=[
  draw=black,
  inner sep=0.2cm,
  rounded corners=5pt,
  thick
]

\tikzstyle{mainclass rectangle}=[
  draw=blue,
  inner sep=0.2cm,
  rounded corners=5pt,
  very thick
]

\newcommand{\mnodedrawcolor}{black!80}
\newcommand{\mnodefillcolor}{black!40}

\pgfdeclarelayer{background}
\pgfdeclarelayer{substructure}
\pgfdeclarelayer{edges}
\pgfdeclarelayer{foreground}
\pgfsetlayers{background,substructure,edges,main,foreground}

\tikzstyle{background rectangle}=[
  fill=black!10,
  draw=black!20,
  line width = 5pt,
  inner sep=0.4cm,
  outer sep=0.4cm,
  rounded corners=5pt
]

\begin{document}
  \maketitle
  \begin{abstract}
This paper studies dynamic complexity under definable change
operations in the DynFO framework by Patnaik and Immerman.
It is shown that for changes definable by parameter-free
    first-order formulas, all (uniform) $\AC^1$ queries can be maintained  by
    first-order dynamic programs.  Furthermore, many maintenance
    results for single-tuple changes are extended to more powerful
    change operations: (1) The reachability query for undirected graphs is first-order
    maintainable under single tuple changes and first-order defined
    insertions, likewise the reachability query for directed acyclic graphs
    under quantifier-free insertions. (2) Context-free languages are first-order
    maintainable under $\EFO$-defined changes. %
    These results are complemented by several
    inexpressibility results, for example, that the reachability query cannot be maintained by quantifier-free programs under definable, quantifier-free deletions.
  \end{abstract}

  \section{Introduction}\label{section:introduction}

In the setting of \emph{Dynamic Complexity}, a database \db is being changed and an update program \prog tries to answer a standing query $q$ after each change. The program usually consists of logical formulas which can make use of additional, \emph{auxiliary} relations which in turn need to be updated after each change. Dynamic Complexity can be seen as a logic-based counterpart of \emph{Dynamic Algorithms}, where algorithms use auxiliary data structures to keep track of properties of structures like graphs under change operations. The Dynamic Complexity framework was introduced in \cite{PatnaikI97} and a similar framework, FOIES, in \cite{DongST95}. 

In Dynamic Complexity, one usually allows first-order logic formulas as update mechanism for the auxiliary relations. This is in line with the database-oriented framework, since first-order logic correspond to database languages  like relational algebra. Just as in Dynamic Algorithms, for most investigations the possible change operations are limited to insertions and deletions of single tuples. The class of queries maintainable in this fashion is called \DynFO.
 This line of research has seen recent progress, particular with respect to the question whether the reachability query can be maintained in \DynFO for directed graphs \cite{DattaHK14,DattaKMSZ15}.%

Although the restriction to single-tuple changes can be justified by the need to concentrate on the basic phenomena of dynamic maintainability of queries, it is also clear that from a more practical perspective one would be interested in more complex change operations at a time. One  approach is to specify changes by ``$\Delta$-relations'', e.g., by sets of tuples to be inserted or deleted. This is basically the viewpoint of \emph{Incremental View Maintenance} (see for example \cite{GuptaMS93}). However, it is clear that arbitrary $\Delta$-relations can make the auxiliary relations useless. 

In this work, we consider a different extension of the single-tuple-change paradigm that is inspired by SQL update queries (for a theoretical view at SQL updates we refer to \cite{AmelootBW13}). We model such queries by \emph{replacement queries} which can modify several relations at a time by first-order formulas that can use tuples of elements as parameters. Similar but slightly weaker frameworks were introduced in \cite{HesseI02, WeberS07}, but these papers did not study maintainability under such complex changes. %

\subparagraph*{Contributions}

The generalized setting yields a huge range of research questions, e.g., all previously studied questions in Dynamic Complexity in combination with replacement queries of varying expressiveness, and this paper can only start to investigate a few of them.

We are mainly interested in positive results. In Section~\ref{section:reach} we study first-order definable \emph{insertion} queries (supplementing the single tuple changes). It turns out that the reachability query can still be maintained in \DynFO for undirected graphs under first-order definable insertions (Theorem~\ref{theorem:tc:undirected:insertion}) and for directed acyclic graphs under quantifier-free insertions (Theorem~\ref{theorem:tc:acyclic:quantifierfree}). In Section~\ref{section:ac1}, we investigate parameter-free replacement queries. We show that \emph{all} queries that can be expressed in uniform $\AC^1$ (and thus all queries that can be computed with logarithmic space) can be maintained in \DynFO under first-order definable parameter-free replacement queries (Theorem~\ref{theorem:ac1}). In Section~\ref{section:languages}, we show that many maintainability results for formal languages \cite{PatnaikI97,GeladeMS12} carry over to quantifier-free or \EFO-definable replacement queries (Theorems~\ref{theorem:languages:DynProp} and~\ref{theorem:languages:EFO}). 

It is notoriously difficult to prove inexpressibility results in Dynamic Complexity. One would expect that allowing more general change operations simplifies such results. In Section~\ref{section:lower}, we confirm this intuition to some extent and present cases where general replacement queries disable certain kinds of update programs to maintain queries that are maintainable under single-tuple changes.

\subparagraph*{Related work}
In addition to the related work mentioned already above, several other prior results for Dynamic Complexity under more general changes have been obtained. The reachability query for directed graphs has been studied  under deletions of sets of edges and nodes that form an anti-chain in \cite{DongP97} and under insertions of sets of tuples that are cartesian-closed in \cite{DongST95}. Hesse observed that the maintenance procedure for this query under single tuple changes from \cite{DattaKMSZ15} can deal with the replacement of the set of outgoing edges of a node (or, alternatively, the set of incoming edges). Edge contractions have been studied in \cite{Siebertz11}. Koch considered more general sets of changes in \cite{Koch10}, though only for non-recursive queries.

Implementations of work on Dynamic Complexity are reported in \cite{PangDR05} and \cite{Koch10}.

  \section{Preliminaries}\label{section:preliminaries}

As much of the original motivation for the investigation of dynamic
complexity came from incremental view maintenance (cf.\ \cite{DongT92,DongS93,PatnaikI97}),
it is common to consider logical structures as relational
databases and to use notation from relational databases.

A \emph{(relational) schema} $\schema$ consists of a set $\relSchema$
of relation symbols, accompanied by an arity function $\arity:
\relSchema \rightarrow \N$, and a set $\conSchema$ of constant
symbols. In this work, a \emph{domain} is a finite set. A
\emph{database} $\db$ over schema $\schema$ with domain $\domain$
assigns to every relation symbol $R \in \relSchema$ a relation of
arity~$\arity(R)$ over $\domain$  and to every constant symbol $c \in
\conSchema$ an element (called \emph{constant}) \mbox{from
  $\domain$}.   
A  $\schema$-\emph{structure} $\struc$ is a pair $(\domain, \db)$
where $\domain$ is a domain and $\db$ is a database with domain
$\domain$ over schema $\schema$. By $\dom(\struc)$ we refer to $\domain$.
For a relation symbol $R \in \schema$ and a constant symbol~$c \in \schema$ we denote by $R^\state$ and $c^\state$ the relation and constant, respectively, that are assigned to those symbols in~$\state$. A \emph{$k$-ary query} $q$ on $\schema$-structures is a mapping that assigns a subset of $\domain^k$ to every $\schema$-structure over \mbox{domain $\domain$} and is closed under isomorphisms.

We represent graphs as structures over a schema that contains a single binary relation $E$. The \emph{reachability query} $\Qreach$ maps graphs to their transitive closure relation.

In Section~\ref{section:languages} we consider databases that represent words over some  alphabet $\Sigma$. In a nutshell, the positions of a word correspond to elements of the domain and the letters at positions are indicated by unary relations. More formally,
words are represented by databases with an immutable linear order on their domain and one unary relation $R_\sigma$ for every $\sigma\in\Sigma$.
For simplicity, we always assume that the domain of such a database is of the form $\{1,\ldots,n\}$ and the linear order ist just the natural order. At any point in time, an element of the domain is allowed to be in at most\footnote{There are ways to get rid of this requirement, but we keep it for simplicity.} one relation $R_\sigma$. However, elements need not to be in any relation $R_\sigma$ and, in this case, they do not correspond to a position with a symbol but rather to the empty word $\epsilon$. Thus, we first associate with every position $i$ an element $w_i\in\Sigmae$, where by $\Sigmae$ we denote the set~$\Sigma\cup\{\epsilon\}$, and say that the database represents the string $w=w_1\cdots w_n$. As an example, the database with domain $\{1,2,3,4,5\}$ and $R_a=\{2,4\}, R_b=\{1\}$ represents the string $baa$. As a further convenience, we assume that databases have constants $\min$ and $\max$ that represent the smallest and the largest element, $1$ and $n$, respectively.\footnote{This assumption can be avoided by, e.g., using additional prefix and suffix relations in the proof of Theorem~\ref{theorem:languages:DynProp}, in the spirit of \cite{GeladeMS12}.} We will not allow the linear order, $\min$ or $\max$ to be modified by change operations.
In this paper, we will rarely distinguish between a database and the string it represents. %

We use several notions from finite model theory (see, e.g., \cite{Libkin04}). By $\qr(\varphi)$, we denote the \emph{quantifier-rank} of a first-order formula $\varphi$, that is, its maximum nesting depth of quantifiers. We denote the set of \emph{rank-$k$ types} of tuples of arity $\ell$ by $\FO[k,\ell]$ (cf. \cite[Definition 3.14]{Libkin04}). The existential fragment of first-order logic is denoted by $\EFO$.

  \section{Dynamic Programs with Complex Changes} \label{section:framework}

In this section we lift the definitions from \cite{SchwentickZ16} to more general change operations.  We first define (general) change operations, then we adapt the definition of dynamic programs presented in \cite{SchwentickZ16} to those  more complex changes.

\subparagraph*{Change Operations} The change operations that we consider in this paper are based on queries. In their most general form, they can modify a given database over schema $\schema$ by replacing some of its relations with the results of first-order-defined  queries on the database. These queries are allowed to use parameters.
 
To this end, a \emph{replacement rule} $\rho_R$ for relation $R$ is of the form $\replace{R}{\mf{R}{\tpl p}{\tpl {x}}}$. Here, $R$ is a relation symbol and $\mf{R}{\tpl p}{\tpl x}$ is a first-order formula over $\schema$, where the tuple $\tpl x$ has the same arity as $R$ and $\tpl p$ is another tuple of variables, called the \emph{parameter tuple}.  A \emph{replacement query} $\rho(\tpl p)$ is a set of replacement rules for distinct relations with the same parameter tuple $\tpl p$. In the case of replacement queries $\rho$ that consist of a single replacement rule, we usually do not distinguish between $\rho$ and its single replacement formula $\mu_R$.

For a database $\db$, a change operation $\delta=(\rho,\tpl a)$ consists of a replacement query and a tuple of elements of (the domain of) \db with the same arity as the parameter tuple of~$\rho$. We often use the more concise notation $\rho(\tpl a)$ and refer to change operations simply as \emph{changes}.

The result $\updateDB{\delta}{\db}$ of an application of a change operation $\delta = (\rho,\tpl a)$ to a database $\db$ is defined in a straightforward way: each relation $R$ in \db, for which there is a replacement rule $\rho_R$ in $\rho$, is replaced by the relation resulting from evaluating $\mu_R$, that is, by $\{\tpl b\mid \db \models \mf{R}{\tpl a}{\tpl b}\}$.

If a replacement query has no parameters we say that it is \emph{parameter-free}.

\begin{example}
  \begin{enumerate}[(a)]
    \item As a first example, we consider directed graph structures, that is, structures with a single binary relation $E$. Let, for some graph $G$,  $\delta_1=(\rho_1,u)$ be the change operation  with replacement query $\mf{E}{p}{x,y} = E(x,y) \vee (x = p)$ and node $u$. Then, in $\delta_1(G)$, there is an edge from $u$ to every node of $G$.
    \item We recall that words over the alphabet $\Sigma = \{a, b , c\}$ are represented by databases with a linear order on their domain and one unary relation $R_\sigma$ for every $\sigma \in \Sigma$. Let \db be a database representing a word $w$ and $i$ an element of $\db$. Let $\delta_2=(\rho_2,i)$ be a change operation, where the replacement query $\rho_2$ consists of the rules 
    $\replace{R_a}{\mf{R_a}{p}{x}}$ and  $\replace{R_b}{\mf{R_b}{p}{x}}$
    with $\mf{R_a}{p}{x} = \big((x < p) \wedge R_b(x)\big) \vee \big(\neg (x < p) \wedge R_a(x) \big)$ and $\mf{R_b}{p}{x} = \big((x < p) \wedge R_a(x)\big) \vee \big(\neg (x < p) \wedge R_b(x) \big)$. Then, $\delta_2(\db)$ represents the word obtained from $w$ by swapping $a$ and $b$ symbols on all positions before $i$ and leaving all other positions unchanged.
  \end{enumerate}
\end{example}

Some of our investigations will focus on (syntactically) restricted replacement queries that either only remove or only insert tuples to relations. For an \emph{insertion rule} $\rho_R$, the  replacement formula $\mf{R}{\tpl p}{\tpl x}$ has the form $R(\tpl x) \lor \varphi_R$. Similarly, \emph{deletion rules} have replacement formulas $\mf{R}{\tpl p}{\tpl x}$ of the form $R(\tpl x) \land \varphi_R$. In \cite{AmelootBW13}, the change operations \emph{replace}, \emph{insert}, \emph{delete} and \emph{modify} have been studied, in particular with respect to their expressive power. These operations are captured by our change operations\footnote{In \cite{AmelootBW13} the domain of the database can be infinite.}.

Another syntactic restriction to be studied extensively in this work are the quantifier-free replacement queries, that allow only quantifier-free change formulas to be used. A special case of quantifier-free changes are the \emph{single tuple changes}. We refer by $\insertion{R}{\tpl p}$ to the insertion query $\replace{R}{\mf{R}{\tpl p}{\tpl x}}$, where $\mf{E}{\tpl p}{\tpl x} = R(\tpl x) \lor (\tpl p = \tpl x)$ and by  $\deletion{R}{\tpl p}$ to the deletion query $\replace{R}{\mf{R}{\tpl p}{\tpl x}}$, where \mbox{$\mf{R}{\tpl p}{\tpl x} = R(\tpl x) \land \neg(\tpl p = \tpl x)$}.  As mentioned before, single tuple changes are the best studied change operations in previous work on dynamic complexity. To emphasize the difference we sometimes refer to arbitrary (not single-tuple) change operations as \emph{complex changes}. For any schema $\tau$ we denote by $\Delta_\tau$ the set of single-tuple replacement queries for the relations (with symbols) in~$\tau$. In the case of graphs, we simply write $\Delta_E$. In case of strings over some alphabet $\Sigma$, we write $\Delta_\Sigma$.

\subparagraph*{Dynamic Programs}

We now introduce dynamic programs, closely following the exposition in \cite{SchwentickZ16}.
Inputs in dynamic complexity are represented as relational structures as defined in Section~\ref{section:preliminaries}.  The domain is fixed from the beginning, but the database in the initial structure is empty. This initially empty structure is then modified by a sequence of change operations.

The goal of a dynamic program is to answer a given query for the database that results from any change sequence. To this end, the program can use an auxiliary data structure represented by an auxiliary database over the same domain. Depending on the exact setting, the auxiliary
database might be initially empty or not.

A dynamic program $\prog$ operates on an \emph{input database} $\inp$ over a schema $\inpSchema$ and updates an \emph{auxiliary database} $\aux$ over a schema\footnote{To simplify the exposition, we will usually not mention schemas explicitly and always assume that all structures we talk about are compatible with respect to the schemas at hand.} $\auxSchema$, both sharing the same domain $D$ which is fixed during a computation. We call $(D,\inp, \aux)$ a \emph{state} and consider it as one relational structure. The relations of $\inp$ and $\aux$ are called \emph{input and auxiliary relations}, respectively.

A \emph{dynamic program} has a set of update rules that specify how auxiliary relations are updated after a change. An \emph{update rule} for updating an auxiliary relation $T$ after a replacement query $\rho(\tpl p)$ is of the form \changeRule{$\rho(\tpl p)$}{$T(\tpl x)$}{$\varphi_T(\tpl p, \tpl x)$}
where the \emph{update formula}  $\varphi_T$ is over $\inpSchema \cup \auxSchema$. 

The semantics of a dynamic program is as follows.  When a change operation $\delta = \rho(\tpl a)$ is applied to the input database $\inp$, then the new state $\state$ of $\prog$ is obtained by replacing the input database by $\delta(\inp)$ and by defining each auxiliary relation $T$ via $T \df \{\tpl b \mid (\inp, \aux) \models \varphi_T(\tpl a, \tpl b)\}$.  For a change operation $\delta$ we denote the updated state by $\prog_\delta(\state)$. For a sequence $\alpha=(\delta_1,\ldots,\delta_k)$ we write $\prog_\alpha(\state)$ for the state obtained after successively applying $\delta_1,\ldots,\delta_k$ to $\state$.

A \emph{dynamic query} is a tuple $(q, \Delta)$ where $q$ is a query over schema $\inpSchema$ and $\Delta$ is a set of replacement queries. The dynamic program $\prog$ \emph{maintains} a dynamic query $(q, \Delta)$ with $k$-ary $q$ if it has a $k$-ary auxiliary relation $Q$ that, after each change sequence over $\Delta$, contains the result of $q$ on the current input database. More precisely, for each non-empty\footnote{This technical restriction ensures that we can handle, e.g., Boolean queries with a yes-result on empty structures without initialization of the auxiliary relations. Alternatively, one could use an extra formula to compute the query result from the auxiliary (and input) structure.} sequence $\alpha$ of changes and each empty input structure $\inp_\emptyset$, relation $Q$ in $\prog_\alpha(\state_\emptyset)$ and $q(\alpha(\inp_\emptyset))$ coincide. Here, $\state_\emptyset=(\inp_\emptyset, \aux_\emptyset)$, where $\aux_\emptyset$ denotes the empty auxiliary structure over the domain of $\inp_\emptyset$.

The class of dynamic queries $(q, \Delta)$ that can be maintained by a dynamic program with update formulas from first-order logic is called $\DynFO$. We also say that the query $q$ can be maintained in $\DynFO$ under change operations $\Delta$. The class of dynamic queries maintainable by quantifier-free update formulas is called $\DynProp$. 

The following very simple example shows how the transitive closure of a directed graph subject to single edge insertions can be maintained in this set-up. 

\begin{example}\label{example:tcinsert}
Let $\Qreach$ be the reachability query that returns all pairs $(u,v)$ of a graph, for which there is a path from $u$ to $v$. The dynamic query $(\Qreach,\{\insertion{E}{\tpl p}\})$ can be maintained by a dynamic program that uses one auxiliary relation $T$, which always contains the transitive closure of the edge relation $E$. Its only update rule is given by the formula $\varphi_T(p_1,p_2;x,y)=T(x,y) \lor \big(T(x,p_1) \wedge T(p_2,y)\big)$. 
 \qed%
\end{example}

Our general framework follows \cite{PatnaikI97} and thus does not allow inserting new elements into or removing existing elements from the domain as in the FOIES framework \cite{DongST95}. The step from Dynamic Complexity to FOIES can be done by adding two more change operations,  $\text{add}(x)$ and $\text{remove}(x)$. Our results of Section~\ref{section:reach}  easily carry over, and those of Section~\ref{section:languages} carry over if, say, new elements are always added at the end of the string. Since $\text{add}(x)$ and $\text{remove}(x)$ have parameters, they do not quite fit into the parameter-free framework of Section~\ref{section:ac1}. However, Theorem~\ref{theorem:ac1} survives if parameter-free remove queries are allowed.

\subparagraph*{Complex Change Operations and Initialization of Dynamic Programs.}

In the presence of complex replacement queries, the initialization of the auxiliary relations requires some attention. In the original setting of Patnaik and Immerman, the input database is empty at the beginning, and the auxiliary relations are initialized by first-order formulas evaluated on this (empty) initial input database. Since tuples can be inserted only one-by-one, the auxiliary relations can be adapted slowly and it can be ensured that, e.g., always a linear order \cite{PatnaikI97} or arithmetic \cite{Etessami98} on the active domain is available. 

For complex changes, the situation is more challenging for a dynamic program: as an example, in the setting of strings, the first change could insert all positions of the domain into relation $R_a$ and thus let the database represent the word $a^n$, if $n$ is the size of the underlying domain. To enable the dynamic program to answer whether the string is in some language after this change, it needs some suitable (often: non-empty) initial values of the auxiliary relations. Since in this paper, we are mainly interested in the maintenance of queries and not so much in the specific complexity of the initialization, we do not define variants of \DynFO with different power of initialization, but rather  follow a pragmatic approach: whenever initialization is required, we say that the query can be maintained \emph{with suitable initialization} and specify in the context what is actually needed. In all cases, it is easy to see that the initialization of the auxiliary relations can be computed in polynomial time. 

An alternative approach would be to restrict the semantics of replacement queries to elements of the active domain of the current database and to allow the activation of elements only via tuple insertions.

  \section{Reachability and Definable Insertions}\label{section:reach}
In this section, we study the impact of first-order definable complex change operations on the (binary) reachability query $\Qreach$. We present positive cases, where previous maintainability results survive under such stronger change operations. Negative results, where such operations destroy previous maintainability results, are given in Section~\ref{section:lower}. 

In the classical \DynFO setting with single-tuple change operations it was shown early on that $\Qreach$ can be maintained in \DynFO for two important graph classes: undirected graphs and directed, acyclic graphs (dags). It turns out that these results still hold in the presence of complex \emph{insertions}: first-order insertions for undirected graphs and quantifier-free insertions for dags. In fact, in both cases basically the same auxiliary relations can be used as in the case of single-tuple changes. %

We first show that for undirected graphs, the reachability query can be maintained in \DynFO, for first-order insertions and the set $\Delta_E$ of single-edge insertions and deletions. We follow the convention from \cite{GradelS12} that modifications for undirected graphs are symmetric in the sense that if an edge $(a,b)$ is inserted then so is the edge $(b, a)$ (and likewise for deletions).

\begin{theorem}\label{theorem:tc:undirected:insertion}
Let $\Delta$ be a finite set of first-order insertion queries. Then $(\Qreach,\Delta\cup\Delta_{E})$ can be maintained in \DynFO for undirected graphs.
\end{theorem}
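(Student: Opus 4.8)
The plan is to extend the classical dynamic program that maintains $\Qreach$ for undirected graphs under single-tuple changes. Besides the transitive-closure relation $T$ of the edge relation $E$, that program keeps a spanning forest $F$ of the current graph together with a few supporting relations---say a root per tree and the ancestor relation $A$ of the rooted forest---that make the two halves obtained by deleting a forest edge first-order definable; we additionally use a linear order $<$ on the domain, which we may assume to be available (as is standard, or supplied by the initialization; it will only be needed for canonical choices). Under the single-edge insertions and deletions of $\Delta_E$ the relations $T$, $F$, $A$ are updated exactly as in the known proof that $(\Qreach,\Delta_E)\in\DynFO$ for undirected graphs~\cite{PatnaikI97,GradelS12}. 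Hence it suffices to treat, for each first-order insertion query $\rho(\tpl p)\in\Delta$ with replacement formula $\varphi_E(\tpl p;x,y)=E(x,y)\lor\psi(\tpl p;x,y)$, a change $\rho(\tpl a)$: it adds the (symmetrised) edge set $S=\{(x,y)\mid G\models\psi(\tpl a;x,y)\lor\psi(\tpl a;y,x)\}$, and we write $E'=E\cup S$.

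For the new transitive closure, observe first that it equals the transitive closure of $E'$, hence---as $E\subseteq T$---also of $T\cup S$, and that $S$-edges joining two vertices of the same connected component of $G$ are irrelevant for it; so only $S_{\mathrm{cross}}$, the set of $S$-edges between distinct components, matters. The key point is that, although $S_{\mathrm{cross}}$ may be dense, it is a \emph{bounded union of first-order-definable rectangles}. Let $k=\qr(\psi)$ and let $t$ be the corresponding Gaifman radius. For a vertex $x$ let $\hat\ell(x)$ consist of the rank-$k$ type of the $t$-ball around $x$ in $(G,\tpl a)$ (with $x$ and the parameters in that ball marked) together with the set of parameters lying in the component of $x$; there are only finitely many values of $\hat\ell$, and each set $\{x\mid\hat\ell(x)=\sigma\}$ is first-order definable (the type part from $E$, the component part from $T$). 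By Gaifman locality (cf.~\cite{Libkin04}), using that for $x,y$ in distinct components the $t$-ball around $\{x,y,\tpl a\}$ is the disjoint union of a piece in the component of $x$, a piece in the component of $y$, and pieces around the remaining parameters, membership of $(x,y)$ in $S_{\mathrm{cross}}$ is determined by $\hat\ell(x)$, $\hat\ell(y)$ and first-order-definable global data. Consequently $S_{\mathrm{cross}}$ is a union of boundedly many sets $(X_j\times Y_j)\cap\{(x,y)\mid x,y\text{ in distinct components}\}$ with the $X_j,Y_j$ first-order definable. Finally, adding such a rectangle to a graph merges exactly the components meeting $X_j$ with the components meeting $Y_j$ into one component (when both families are nonempty)---a first-order update of the transitive closure---so processing the boundedly many rectangles one after another, each step first-order over the transitive closure obtained so far, yields the required update formula $\varphi_T(\tpl p;x,y)$.

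The spanning forest and the supporting relations are updated in the same way, one rectangle at a time. A single rectangle-merge keeps all present forest edges and attaches the old trees of the merged components to one or two designated old trees---chosen via $<$ so that the linking edges are among the newly added rectangle edges and together form a spanning tree of the merged component---re-rooting the attached trees; this is a first-order update of $F$, the roots and $A$, because re-rooting a tree is first-order definable from its ancestor relation. Composing the boundedly many rectangle-merges gives the first-order update of the whole auxiliary structure, which is all that later single-tuple deletions (and insertions) will rely on.

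The main obstacle is the locality analysis of the second paragraph: one must verify carefully that $S_{\mathrm{cross}}$ indeed decomposes into boundedly many first-order rectangles---here Gaifman's theorem, applied with the parameters $\tpl a$ among the free variables, does the real work---and then check that each rectangle-merge together with the ensuing re-rooting bookkeeping stays first-order. The linear order is used only for these canonical choices and not for the single-tuple part, so ``basically the same auxiliary relations'' really do suffice.
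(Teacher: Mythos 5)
Your proposal is sound in outline but replaces the paper's key combinatorial step by a genuinely different one. The paper proves a \emph{bridge boundedness} property (Lemma~\ref{lemma:tc:undirected:boundnewedges}): via Feferman--Vaught and a pigeonhole argument on rank-$k$ $1$-types of the old connected components, any two nodes connected in $\delta(G)$ are joined by a path using at most a constant number of newly inserted edges; the spanning forest is then rebuilt by selecting, per old component and per ``queen level'' (bridge distance to the minimal node of the new component), a lexicographically least connecting bridge. You instead show that the cross-component part of the inserted edge set is a bounded union of first-order definable rectangles $X_j\times Y_j$ (via Gaifman locality plus Feferman--Vaught) and merge rectangle by rectangle with a star-like linking scheme. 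The two statements are close relatives --- your rectangle decomposition implies bridge boundedness (a couple of bridges per rectangle suffice), though not conversely --- and both reduce the otherwise non-first-order quotient-graph reachability to a bounded-depth computation. Your route invokes the heavier Gaifman machinery where the paper only needs Feferman--Vaught on disjoint unions of components, but it yields a cleaner structural picture of the inserted edges.

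Two caveats. First, as you define it, $\hat\ell(x)$ records only the type of the $t$-ball around $x$ plus \emph{which} parameters share $x$'s component; if some parameter $a_i$ lies in the component of $x$ at distance greater than $t$ but at most $2t$, the ball around $\{x,a_i\}$ is not a disjoint union of the two individual balls, and its type is not recoverable from your $\hat\ell(x)$ together with global data. You should take $\hat\ell(x)$ to be the rank-$k$ type of the $t$-ball around $x$ \emph{together with all parameters in its component}; this is still first-order definable (using $T$ for the component test) and still has boundedly many values, so the argument survives. Second, you assume a linear order outright. The theorem is stated without an initialization proviso, and the paper spends the last third of its proof eliminating the order: it maintains $F$ and $PV$ only on the set $A_\alpha$ of parameters touched so far (which grows by a bounded amount per change and can hence be ordered incrementally as in \cite{Etessami98}) and shows that all nodes outside $A_\alpha$ behave homogeneously, so the transitive closure on the full domain is recovered by a simple case distinction. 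Your proof needs this (or an analogous) argument, or else the explicit concession ``with suitable initialization''.
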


We use the approach for maintaining $\Qreach$ for undirected graphs under single-edge insertions and deletions from \cite[Theorem 4.1]{PatnaikI97} and maintain a spanning forest and (essentially) its transitive closure relation. The crucial observation for extending this approach to first-order insertions is that, after such an insertion, between each pair of nodes in a (new) connected component, there is a connecting path that uses only a bounded number of newly inserted edges. This allows the update of the spanning forest and its transitive closure in a first-order definable way.

The observation is stated more precisely next. For two connected nodes $u,v$ in a graph $G'=\delta(G)$ that is obtained from a graph $G$ by an insertion $\delta$, we define\footnote{Since $G$ and $\delta$ will be always clear from the context, we do not add them as parameters to this notation.} the \emph{bridge distance} $\bd(u,v)$ as the minimal number $d$, such that there is a path from $u$ to $v$ in $G'$ that uses $d$ edges that were newly inserted by $\delta$.

\begin{lemma}\label{lemma:tc:undirected:boundnewedges}
For each first-order insertion query  $\rho$ there is a constant $m \in \N$ such that for each undirected graph $G$, each change $\delta = \rho(\tpl a)$ and all nodes $u$ and $v$ of $G$ that are connected in $\delta(G)$ it holds $\bd(u,v)\le m$.
\end{lemma}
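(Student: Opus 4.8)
The statement says: for a fixed first-order insertion query $\rho$, there is a bound $m$ on the bridge distance in any graph after applying $\rho$. So the "newly inserted edges" form a set $N$ defined by a first-order formula $\varphi_N$ (the part of $\mu_E$ that is not $E(x,y)$), and we want to argue that whenever two vertices become connected via the new edges, only a bounded number of them are needed along some connecting path.

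\medskip
\textbf{First step: reduce to new-edge-only reachability.} Fix $u,v$ connected in $G' = \delta(G)$. Pick any path $\pi$ from $u$ to $v$ in $G'$. Whenever $\pi$ traverses a maximal sub-path consisting only of old edges (edges already in $G$), those vertices are connected in $G$, hence by a single "old" step in the obvious quotient. So it suffices to bound the number of new edges on a shortest path, and for that I would work in the graph $H$ whose vertices are the connected components of $G$ and whose edges are the new edges (each new edge $(a,b)$ with $a,b$ in possibly different $G$-components gives an $H$-edge). A path in $G'$ from $u$ to $v$ using $d$ new edges projects to a walk in $H$ of length $d$ between the components of $u$ and $v$; conversely a short walk in $H$ lifts back to a $G'$-path with few new edges. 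So $\bd(u,v)$ is (up to $\pm 1$) the distance in $H$, and I need: the diameter of $H$ is bounded by a constant depending only on $\rho$.

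\medskip
\textbf{Main step: bounding the diameter of $H$ via locality/Ramsey-type arguments.} The key point is that $\varphi_N(\tpl p, x, y)$ — with the parameters $\tpl a$ plugged in — is a first-order formula of fixed quantifier rank $k$ over $G$, and after fixing the parameters it is still a formula of rank $k$ with $|\tpl a|$ extra constants. The plan is: if $H$ had large diameter, then $H$ would contain a long induced path $c_0, c_1, \dots, c_\ell$ of $G$-components (large $\ell$). Picking one representative $G$-vertex in each $c_i$ gives a long sequence of vertices $w_0, \dots, w_\ell$ such that $N(w_i, w_{i+1})$ holds but, roughly, $N$ does not short-circuit between far-apart $w_i, w_j$ (otherwise we could shorten). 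Now I would invoke an Ehrenfeucht–Fra\"iss\'e / indistinguishability argument: among $\ell$ such vertices, if $\ell$ is large enough relative to $k$ and $|\tpl a|$, two of them, say $w_i$ and $w_j$ with $i < j$ and $j - i \geq 2$, realize the same rank-$k$ type over $(G, \tpl a)$ in a sufficiently strong sense (a "2-type"/"order-type" version: the pair $(w_{i-1}, w_i)$ and $(w_{i-1}, w_j)$ are indistinguishable). Since $N(w_{i-1}, w_i)$ holds and $N$ is rank-$k$ definable, $N(w_{i-1}, w_j)$ must hold too — giving a shortcut in $H$ and contradicting that the path was shortest (induced). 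Quantifying this carefully: there are only finitely many rank-$k$ types of pairs of elements in a structure with $|\tpl a|$ constants, say $t$ many; so taking $m = t + c$ for a small additive constant $c$ handles the reduction losses between $\bd$ and $H$-distance. This is essentially the standard "definable relations have bounded-diameter transitive-closure-like behaviour unless they genuinely encode reachability" phenomenon, and since $\rho$ is a fixed insertion query (not something that can depend on $G$), it applies uniformly.

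\medskip
\textbf{Where the difficulty lies.} The routine parts are the reduction from $\bd(u,v)$ to diameter of $H$ and the counting of types. The delicate part is making the type-repetition argument actually produce a \emph{shortcut edge} in $H$: I need the two repeated vertices to have the same type \emph{relative to a common neighbour or relative to each other}, not just the same $1$-type, because $N$ is a binary relation. The clean way is to consider the sequence of \emph{pairs} $(w_0, w_1), (w_0, w_2), \dots$ or to use the fact that along a shortest path the "back-edge pattern" is constrained, and then apply a Ramsey-type pigeonhole on rank-$k$ $2$-types of pairs $(w_i, w_j)$ to find $i < i' < j' < j$ with matching types, yielding $N(w_i, w_{j'})$ or $N(w_{i'}, w_j)$ and hence a chord. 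Getting the exact indices to line up so the chord genuinely shortens the path — and confirming the shortening survives when we lift back from $H$ to $G'$ — is the one place that needs care; everything else is bookkeeping.
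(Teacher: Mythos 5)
Your overall skeleton matches the paper's: pass to the connected components of $G$, observe that a path with many bridges visits many pairwise distinct components, and use a pigeonhole on first-order types to find a definable chord that shortens the path. You also correctly locate the delicate point. But the mechanism you propose for that point does not close the argument, and the idea that actually closes it is missing. First, the ``standard phenomenon'' you invoke --- that a definable binary relation cannot have long shortest paths --- is false in general: the successor relation on a linear order is quantifier-free definable and its reachability distances are unbounded. So something specific to this situation must be used. Second, plain pigeonhole (or Ramsey) on rank-$k$ $2$-types of pairs $(w_i,w_j)$ gives you two \emph{arbitrary} pairs with the same type; to produce a chord you need one of the matched pairs to be an actual bridge $(u_i,v_i)$ and the other to be a long pair $(u_i,v_j)$ with $j>i$, and nothing in the pigeonhole forces the indices to line up this way. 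You flag this yourself (``getting the exact indices to line up \dots is the one place that needs care'') but leave it unresolved, and it is precisely the non-routine content of the lemma.

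The paper's resolution uses the disjointness of the components in an essential way, via the Feferman--Vaught composition theorem: after discarding bridges whose endpoints are $G$-connected and shortcutting repeated components, all relevant bridge endpoints lie in pairwise distinct connected components of $G$. Since the rank-$k$ type of a tuple in a disjoint union is determined by the rank-$k$ types of its parts, the $\FO[k,\ell+2]$-type of $(u_i,v_j,\tpl a)$ is determined by the $\FO[k,1]$-type of $u_i$ \emph{inside its own component} (together with the rest of the structure). Hence a pigeonhole on $\FO[k,1]$-types of single vertices over $m'+\ell$ components (with $m'$ the number of such types and $\ell$ the arity of $\tpl a$, to avoid components containing parameters) already yields $u_i,u_j$ with $i<j$ such that $(u_i,v_j,\tpl a)$ and $(u_j,v_j,\tpl a)$ have the same rank-$k$ type; since $\mu$ defines the bridge $(u_j,v_j)$, it must also define the chord $(u_i,v_j)$, which strictly reduces the number of bridges. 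This composition step is what your proposal lacks; without it (or an equivalent locality argument exploiting that the components are disjoint and non-adjacent in $G$), the type-repetition argument does not go through. Note also that this is exactly why the analogous lemma for acyclic directed graphs in the paper is restricted to quantifier-free insertions: there the bridge endpoints need not lie in distinct components, and the Feferman--Vaught argument is unavailable.
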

We informally refer to this property as the \emph{bridge boundedness property}.
\begin{proof}

The proof makes use of the result by Feferman-Vaught that the depth $k$ first-order type of the disjoint union of two structures is determined by the depth $k$ first-order types of these two structures \cite{Feferman57, FefermanV59} (see also \cite{Makowsky04}). 

Let $\mu(\tpl p;\tpl x)$ be the first-order formula underlying $\rho$ and $k$ its quantifier-rank. Let $\ell$ be the arity of $\tpl p$, $m'$ the number of $\FO[k,1]$-types of undirected graphs and $m=\ell+m'$.

Let $G$ be an undirected graph and let $\delta = \rho(\tpl a)$ for some tuple $\tpl a$ of nodes of $G$. 
Let $u,v$ be two nodes that are connected by some path $\pi$ of the form $u = w_0, w_1,\ldots, w_{r} = v$ in $\delta(G)$ with $q$ bridges, that is, edges that are not in $G$. Our goal is to show that there exists such a path with at most $m$ bridges. Thus, if $q\le m$, there is nothing to prove, so we assume $q>m$. It suffices to show that there is a path from $u$ to $v$ with fewer than $q$ bridges.
Let $(u_1,v_1),\ldots,(u_q,v_q)$ be the bridges in $\pi$. If for some $i$, the nodes $u_i$ and $v_i$ are in the same connected component of $G$ (before the application of $\delta$), we can replace the bridge $(u_i,v_i)$ by a path of ``old'' edges resulting in an overall path with $q-1$ bridges. Similarly, if $u_i$ and $u_j$  are in the same connected component of $G$, for some $i<j$, we can shortcut $\pi$ by a path from $u_i$ to $u_j$ inside $G$.
Therefore, we can assume that, for every $i$, the nodes $u_i$ and $v_i$ are in different connected components of $G$, and likewise $u_i$ and $u_j$ for $i<j$.

We show that in this case there are $i, j$ with $i < j$ such that $\mu$ defines an edge between $u_i$ and $v_j$, and therefore a path with fewer bridges can be constructed by shortcutting the path $\pi$ with the edge $(u_i, v_j)$.  By the choice of $m$ there must be two nodes $u_i$ and $u_j$, with $i<j$, in distinct connected components of $G$ that do not contain any element from $\tpl a$, such that $u_i$ and $u_j$ have the same $\FO[k,1]$-type in their respective connected components. By Feferman-Vaught, it follows that $(u_i,v_j,\tpl a)$ and $(u_j,v_j,\tpl a)$ have the same $\FO[k,\ell+2]$-types and therefore, since $\mu$ defines an edge between $u_j$ and $v_j$, it also defines one between $u_i$ and~$v_j$.
\end{proof}

\begin{proofof}{Theorem~\ref{theorem:tc:undirected:insertion}}
The dynamic program presented in \cite[Theorem 4.1]{PatnaikI97} maintains the transitive closure of undirected graphs under single-edge changes with the help of auxiliary relations $F$ and $PV$. 
The binary relation $F$ is a spanning forest of the input graph $G$ and $(u,v,w) \in PV$ means that $w$ is a node in the path from $u$ to $v$ in $F$. 
Observe that two nodes $u$ and $v$ are connected in an undirected graph if and only if $(u,u,v) \in PV$ holds. %

We show how to maintain the relation $F$ and $PV$ under \FO insertions. For the moment we assume a predefined linear order $\leq$ on the domain to be present.
Let $\rho$ be an insertion query and $m$ the bound on the number of bridges by Lemma \ref{lemma:tc:undirected:boundnewedges}.
Let $G$ be an undirected graph and $\delta = \rho(\tpl a)$ an insertion, $F$ a spanning forest of $G$ and $PV$ as described above.
We show how to \FO-define the auxiliary relations $F'$ and $PV'$ for the modified graph~$G'=\delta(G)$.

We first describe a strategy to define $F'$ and then argue that it can be implemented by a first-order formula. Let $C'$ be a (new) connected component in $\delta(G)$. We call the smallest node of $C'$ with respect to $\leq$ the \emph{queen} $u_0$ of $C'$. For each connected component $C$ of~$G$ that is a subgraph of $C'$, we define its \emph{queen level} as the (unique) number $\bd(u,u_0)$, for nodes $u\in C$. A bridge in $C'$ is inserted into $F'$ if for a connected component $C$ of~$G$ of some level $i$ it is the lexicographically smallest edge with respect to $\leq$ that connects $C$ with some component of level $i-1$. This clearly defines a spanning forest. The chosen edges can be defined by a first-order formula because, for each number $h$, there are formulas $\theta_h(x,y)$ expressing that $\bd(x,y)\le h$. 

Since the construction of $F'$ ensures that each path in $F'$ from a node to the queen of its connected component only contains at most $m$ new edges, and thus each path in $F'$ contains at most $2m$ new edges, it is straightforward to extend the update formula for $PV'$ from~\mbox{\cite[Theorem 4.1]{PatnaikI97}}.

It remains to show how the assumption of a predefined linear order can be eliminated. For a change sequence $\alpha$, we denote by $A_\alpha$ the set of parameters used in $\alpha$. When applying $\alpha$ to an initially empty graph, a linear order on $A_\alpha$ can be easily constructed as in the case of single-tuple changes \cite{Etessami98}. The remaining nodes in $V \setminus A_\alpha$ behave very similarly. More precisely,  one can show by induction on $|\alpha|$, that for all nodes $a\in V$ and $b,b'\in V-A_\alpha$ it holds $(a,b) \in E \Leftrightarrow (a,b') \in E$ (and likewise for $(b,a)$ and $(b',a)$).

The dynamic program for maintaining $\Qreach$ for undirected graphs now maintains the relations $F$ and $PV$ as described above, yet restricted to the induced (and ordered) subgraph of $G$ on~$A_\alpha$. The transitive closure can be defined from those relations and the edge relation by a simple case distinction.
\begin{itemize}\item 
Two nodes $a,a' \in A_\alpha$ are connected by a path if and only if there is a path from $a$
  to $a'$  in $F$ or if there are nodes $b, b' \in A_\alpha$ and a node $c \in V \setminus  A_\alpha$ such that there are paths from $a$ to $b$ and from
  $a'$  to $b'$
  in $F$ as well as edges $(b,
  c)$ and $(b', c)$. 
\item Two nodes $a \in A_\alpha$ and $b \in V \setminus A$ are connected by a path if and only if there is a node $a' \in A_\alpha$ such that there is a path from $a$ to
  $a'$  in  $F$
  and an edge $(a',b)$. 
\item Finally, two nodes $a, a' \in V \setminus
  A_\alpha$ are connected if and only if there is an edge $(a,
  a')$ or there is an edge $(a, b)$ for some $b \in
  A_\alpha$ (and therefore also an edge $(a', b)$).
\end{itemize}
\end{proofof}

Now we turn to the other graph class, acyclic graphs, for which \DynFO maintainability under complex insertions (and single-edge deletions) is preserved; albeit (we are able to show that) only for quantifier-free insertions. In \cite[Theorem 4.2]{PatnaikI97}, edge insertions are only allowed if they do not add cycles. Of course, given the transitive closure of the current edge relation it can be easily checked by a first-order formula (a \emph{guard}), whether a new edge closes a cycle. We will see that this is also possible for the complex insertions we consider.

\begin{theorem}\label{theorem:tc:acyclic:quantifierfree}
Let $\Delta$ be a finite set of quantifier-free insertion queries. Then $(\Qreach,\Delta\cup\Delta_{E})$ can be maintained in \DynFO for directed, acyclic graphs. Furthermore, for each quantifier-free insertion, there is a first-order guard which checks whether the insertion destroys the acyclicity of the graph.
\end{theorem}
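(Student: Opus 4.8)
The plan is to mimic the proof of Theorem~\ref{theorem:tc:undirected:insertion}. We keep the auxiliary relations and update rules of the dynamic program for dags under single-tuple changes from \cite[Theorem~4.2]{PatnaikI97} --- in particular its transitive closure relation $T$ --- which already handle every change in $\Delta_E$, so it remains to (i) give update formulas for the auxiliary relations under a complex quantifier-free insertion $\delta=\rho(\tpl a)$ and (ii) exhibit, for each quantifier-free insertion query~$\rho$, a first-order guard. Both tasks follow from a \emph{bridge boundedness} property in the spirit of Lemma~\ref{lemma:tc:undirected:boundnewedges}: for every quantifier-free insertion query $\rho$ there is a constant $m$ so that for every dag $G$ and change $\delta=\rho(\tpl a)$, (a) if $\delta(G)$ is acyclic and $v$ is reachable from $u$ in $\delta(G)$, then some $u$-$v$ path in $\delta(G)$ uses at most $m$ \emph{new} edges (edges not already in $G$), and (b) if $\delta(G)$ contains a directed cycle at all, then it contains one using at most $m$ new edges. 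From (a), the updated transitive closure is first-order definable: $T'(x,y)$ holds iff for some $s\le m$ there are pairs $(p_1,q_1),\dots,(p_s,q_s)$ with $\varphi(\tpl a;p_i,q_i)$ for all $i$, with $x$ reaching $p_1$, each $q_i$ reaching $p_{i+1}$, and $q_s$ reaching $y$ in $G$, each ``reaches'' being an application of the reflexive-transitive variant of $T$; the remaining auxiliary relations of \cite{PatnaikI97}, if any, are updated analogously. From (b), ``$\delta(G)$ has a cycle'' is equivalent to the first-order statement that for some $1\le s\le m$ there are pairs $(p_1,q_1),\dots,(p_s,q_s)$ with $\varphi(\tpl a;p_i,q_i)$ for all $i$ and $q_i$ reaching $p_{(i\bmod s)+1}$ in $G$ for all $i$; this is the guard.

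The bridge boundedness property is where quantifier-freeness enters. Over the schema of graphs, the only atomic subformulas of the change formula $\varphi(\tpl p;x,y)$ mentioning \emph{both} $x$ and $y$ are $x=y$, $E(x,y)$ and $E(y,x)$; every other atom that mentions $x$ (respectively $y$) refers, apart from $x$ (respectively $y$), only to parameters and constants. Hence, once $\tpl p$ has been instantiated by $\tpl a$, whether $\varphi(\tpl a;a,b)$ holds is a fixed Boolean function of the truth values of $a=b$, $E(a,b)$ and $E(b,a)$, of the \emph{left type} $A(a)$ --- the truth values at $a$ of the (constantly many) atoms of $\varphi$ mentioning $x$ but not $y$ --- and of the symmetrically defined \emph{right type} $B(b)$. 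There are only constantly many left types, and we take $m$ to be a suitable constant depending only on $\rho$.

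It remains to prove (a) and (b). For (a): fix $G$, $\delta$ with $G'=\delta(G)$ acyclic, reachable $u,v$, and a $u$-$v$ path in $G'$ using a minimum number $t$ of new edges $(p_1,q_1),\dots,(p_t,q_t)$, listed in order and separated by (possibly trivial) old paths. Standard shortcutting, using that $G'$ is acyclic, shows that the $p_i$ are pairwise distinct, that $G'$ contains no old edge $(p_i,q_j)$ whatsoever, that $(q_j,p_i)\in E$ only for consecutive indices, and that $p_i=q_j$ only for consecutive indices --- violating any of these would yield a $u$-$v$ path with fewer new edges. If $t$ exceeds $m$, pigeonhole yields non-consecutive $i<j$ with $A(p_i)=A(p_j)$. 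As $(p_j,q_j)$ is genuinely new and $G'$ is acyclic, the three binary atoms above are all false at $(p_j,q_j)$, so $\varphi(\tpl a;p_j,q_j)$ holds ``because of'' $A(p_j)=A(p_i)$, $B(q_j)$ and those atoms being false; by the structural facts just listed, the same three atoms are false at $(p_i,q_j)$ (and if $(p_i,q_j)$ happens to lie in $E$ we are done at once), hence $\varphi(\tpl a;p_i,q_j)$ holds and $(p_i,q_j)$ is an edge of $G'$. Shortcutting the path between $p_i$ and $q_j$ by this single edge contradicts minimality, so $t\le m$. Part (b) is argued the same way with a minimum-new-edge cycle in place of the path, after first re-decomposing the cycle so that all $t$ new edges are genuinely new and noting that for $t\ge2$ no new self-loop can occur.

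I expect part (b), i.e.\ the guard, to be the main obstacle: there one may not assume $\delta(G)$ acyclic --- that is exactly what the guard decides --- so the shortcutting steps must be justified purely by minimality of $t$ (a closed sub-walk that uses at least one edge again contains a cycle), which complicates the bookkeeping of which index pairs count as consecutive, and one must also handle the constantly many remaining small cases $t\le m$ explicitly inside the first-order guard. Everything else is routine: the update rules of \cite{PatnaikI97} for single-tuple changes are inherited verbatim, and, since none of the new update formulas or the guard needs a linear order, the initialization subtleties of Section~\ref{section:framework} do not affect the update step.
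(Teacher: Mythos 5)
Your proposal is correct and takes essentially the same route as the paper: its Lemma~\ref{lemma:tc:acyclic:boundnewedges} is exactly your bridge-boundedness property (stated there as a single disjunction rather than split into (a) and (b)), proved likewise by pigeonholing on quantifier-free types of bridge endpoints relative to the parameter tuple --- which is precisely your left/right-type decomposition, since only the atoms $x=y$, $E(x,y)$, $E(y,x)$ mention both variables --- and then shortcutting, with the guard and the update of the transitive closure read off just as you describe. The only cosmetic differences are that the paper pigeonholes on the targets $v_i$ of the bridges rather than the sources $p_i$, and that its case analysis for the cyclic situation is terser than your (correct) bookkeeping.
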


As in the case of undirected graphs, the proof relies on a bridge boundedness property. This property allows extending the technique for maintaining the transitive closure relation of acyclic graphs under single tuple changes used in \cite{PatnaikI97} and \cite{DongS95} to quantifier-free insertions. As in  \cite{PatnaikI97} and \cite{DongS95} no further auxiliary relations besides the transitive closure relation are needed. In Section \ref{section:lower} we show that the transitive closure relation does not suffice for maintaining $\Qreach$ for acyclic graphs subjected to $\EFO$-definable insertions\footnote{Indeed, the graphs $G$ and $G'$ used in the proof of Theorem \ref{theorem:tc:general:insertion}(b) show that the following lemma fails already for $\EFO$-insertions. }.

In the following lemma, the bridge distance $\bd$ is defined just as above. However, we can no longer assume that bridges connect (formerly) different connected components, therefore the lemma only holds for quantifier-free insertions.
The proof can be found in the full version of this paper.

\begin{lemma}\label{lemma:tc:acyclic:boundnewedges}
For each quantifier-free insertion query  $\rho$ there is a constant $m \in \N$ such that for each directed, acyclic graph $G$ and each change $\delta = \rho(\tpl a)$ it holds that $\delta(G)$ has a cycle with at most $m$ bridges, or for all nodes $u$ and $v$ of $G$ with a path from $u$ to $v$ in $\delta(G)$,  it holds $\bd(u,v)\le m$.
\end{lemma}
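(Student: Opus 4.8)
The plan is to mimic the proof of Lemma~\ref{lemma:tc:undirected:boundnewedges}, but the crucial point is that for a \emph{quantifier-free} insertion query $\rho$ the edges added by $\delta = \rho(\tpl a)$ are determined locally, so there is no longer any need to invoke Feferman--Vaught. Let $\mu(\tpl p; x, y)$ be the quantifier-free formula underlying $\rho$ and let $\ell$ be the arity of $\tpl p$. The key observation is that whether $\mu(\tpl a; b, c)$ holds depends only on the atomic type of the tuple $(b, c, \tpl a)$, i.e., on the $E$-relations among these $\ell + 2$ elements and on the equalities among them. In particular, if $b$ is a node not occurring in $\tpl a$ and having no $E$-edge (in $G$) to or from any element of $\tpl a$, then the atomic type of $(b, c, \tpl a)$ as far as $\mu$ is concerned is governed by a bounded amount of information: the edges $E(b,c)$, $E(c,b)$, the loop $E(b,b)$, whether $b=c$, and the atomic type of $(c, \tpl a)$. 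Hence, for such $b$, the set of nodes $c$ for which $\mu(\tpl a; b, c)$ holds is one of only constantly many possibilities, depending on a bounded "signature" of $c$.

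First I would fix the constant: let $m$ be a bound, depending only on $\ell$ and on the quantifier-free formula $\mu$, on the number of distinct "reachability-signatures" a node $b$ outside the closed neighborhood of $\tpl a$ can have — concretely, $m$ can be taken as $\ell$ plus (a bound on) the number of possible atomic $\mu$-behaviors times a constant coming from the finitely many atomic types over $\ell+2$ variables; the exact value is a routine count. Now suppose $\delta(G)$ is acyclic (otherwise we are in the first disjunct, and we separately argue that a shortest cycle can be taken with at most $m$ bridges by the same shortcutting argument below). Take nodes $u, v$ with a path $\pi$ from $u$ to $v$ in $\delta(G)$ using $q$ bridges $(u_1, v_1), \ldots, (u_q, v_q)$, listed in the order they appear along $\pi$, and assume $q > m$. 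As in the undirected proof, if $u_i$ reaches $v_i$ in $G$ for some $i$, we can replace that bridge by an old path; and if $u_i$ reaches $u_j$ in $G$ for some $i < j$ (or $v_i$ reaches $v_j$, or $u_i$ reaches $v_j$ "illegitimately"), we can shortcut $\pi$; in each case the number of bridges strictly drops. So assume no such $G$-reachabilities exist. Since $q > m \ge \ell$, at least $q - \ell > m - \ell$ of the $u_i$ lie outside $\tpl a$; by the choice of $m$, among these there are two indices $i < j$ such that $u_i$ is outside the closed neighborhood of $\tpl a$ (we may additionally discard the boundedly many $u_i$ that have an $E$-edge to/from $\tpl a$ — recounting $m$ if necessary) and such that $(u_i, \tpl a)$ and $(u_j, \tpl a)$ have the same atomic type and $u_i, u_j$ are non-adjacent to each other and loop-free. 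Because $\mu$ defines the bridge $(u_j, v_j)$, i.e.\ $\mu(\tpl a; u_j, v_j)$ holds, and $u_i$ has the same relevant atomic behavior as $u_j$ with respect to $\tpl a$, the formula $\mu$ also defines an edge $(u_i, v_j)$ in $\delta(G)$; replacing the segment of $\pi$ from $u_i$ to $v_j$ by this single edge yields a path from $u$ to $v$ with strictly fewer bridges. Iterating drives the bridge count down to at most $m$.

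The step I expect to be the main obstacle is pinning down precisely which atomic information about $u_i$ must match that of $u_j$ so that the existence of the edge $\mu(\tpl a; u_i, v_j)$ genuinely follows, and simultaneously ensuring that the "shortcut" edge $(u_i, v_j)$ does not create a cycle (the lemma's conclusion is stated under acyclicity of $\delta(G)$, so an introduced shortcut must not contradict that). The resolution is that $(u_i, v_j)$ already \emph{is} an edge of $\delta(G)$ — it is not newly created but merely revealed by $\mu$ — so acyclicity of $\delta(G)$ is preserved automatically; and the matching atomic type must record the $E$-edges between $u_i$ and each coordinate of $\tpl a$, the equalities between $u_i$ and the coordinates of $\tpl a$, and the loop $E(u_i, u_i)$, which is exactly a rank-$0$ type of $(u_i, \tpl a)$, of which there are boundedly many. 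A secondary subtlety is the bookkeeping when some bridges share endpoints with $\tpl a$ or with each other; these are handled by the preliminary shortcutting reductions, at the cost of enlarging $m$ by a constant depending only on $\ell$. The remaining details are the routine type count already sketched, and carry over almost verbatim from the proof of Lemma~\ref{lemma:tc:undirected:boundnewedges}, with atomic types in place of rank-$k$ types and direct substitution in place of Feferman--Vaught.
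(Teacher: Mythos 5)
Your overall strategy is essentially the paper's: pigeonhole on atomic types of bridge endpoints relative to $\tpl a$, use quantifier-freeness to transfer a bridge to a shortcut edge $(u_i,v_j)$, and let the bad configurations be absorbed by the ``cycle with at most $m$ bridges'' disjunct. (The paper pigeonholes on the heads $v_i$ rather than the tails $u_i$; that difference is immaterial.) Two steps as written, however, do not go through.

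First, the claim that you ``may additionally discard the boundedly many $u_i$ that have an $E$-edge to/from $\tpl a$'' is false: the closed neighborhood of $\tpl a$ in $G$ can contain \emph{every} vertex (e.g.\ every node may have an incoming edge from $a_1$), so after discarding there may be nothing left to pigeonhole on. The restriction is also unnecessary: whether $\mu(\tpl a;b,c)$ holds is determined by the atomic type of the full tuple $(b,c,\tpl a)$, and the atomic type of $(b,\tpl a)$ already records which edges $b$ has to and from each coordinate of $\tpl a$ and which equalities hold. The paper accordingly pigeonholes on the $\FO[0,\ell+1]$-types of $(v_i,\tpl a)$ over \emph{all} bridges, taking $m$ to be one more than the number of such types.

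Second, your proposed resolution of the ``main obstacle'' --- that the information to be matched ``is exactly a rank-$0$ type of $(u_i,\tpl a)$'' --- is incomplete, and the missing piece is exactly where the lemma's first disjunct is earned. To infer $\mu(\tpl a;u_i,v_j)$ from $\mu(\tpl a;u_j,v_j)$ you need the atomic types of $(u_i,v_j,\tpl a)$ and $(u_j,v_j,\tpl a)$ to coincide, which in addition to the types of $(u_i,\tpl a)$ and $(u_j,\tpl a)$ requires the pairwise relations to $v_j$ to agree: $(u_i,v_j)\notin E$ (otherwise an old-edge shortcut exists, fine), and $(v_j,u_i)\in E$ iff $(v_j,u_j)\in E$. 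If $(v_j,u_j)\in E$, the bridge $(u_j,v_j)$ closes a $2$-cycle; if $(v_j,u_i)\in E$, that edge together with the subpath of $\pi$ from $u_i$ to $v_j$ (at most $m$ bridges) closes a cycle --- these are precisely the paper's cases (1) and (2), which output the first disjunct. Your blanket assumption that $\delta(G)$ is acyclic does rule these configurations out, but you never make that connection explicit, and your claim that the cyclic case follows ``by the same shortcutting argument'' is only asserted. The paper's organization is cleaner here: it runs a single three-way case analysis on the path $\pi$ that directly produces either a cycle with at most $m$ bridges or a shortcut, with no upfront split on acyclicity. With these two repairs your plan becomes the paper's proof.
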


\begin{proofof}{Theorem~\ref{theorem:tc:acyclic:quantifierfree}}
 In \cite[Theorem 4.2]{PatnaikI97} and \cite[Theorem 3.3]{DongS95}, dynamic programs are given that maintain the transitive closure of acyclic graphs under single-edge modification, using only the transitive closure as auxiliary relation. Thanks to Lemma~\ref{lemma:tc:acyclic:boundnewedges}, these programs can be easily extended. Indeed, since the number of bridges of cycles created by the insertion, and, if the graph remains acyclic, the bridge distance between two path-connected nodes are bounded by a constant, a guard formula and an update formula for the transitive closure can be constructed in a straightforward manner.
\end{proofof}

  \section{Parameter-free Changes}\label{section:ac1}
\newcommand{\Ind}[1]{\ensuremath{\text{IND}[#1]}}
\newcommand{\BIT}{\ensuremath{\text{BIT}}}

In this section we consider replacement queries without parameters on ordered databases. It turns out that in this case a large class of queries can be maintained in \DynFO: all queries that can be expressed in uniform $\AC^1$ and thus, in particular, all queries that can be answered in logarithmic space. 
This result exploits the fact that for a fixed set of replacement queries without parameters there is only a constant number of possible changes to a structure.

An \emph{ordered} database $\db$ contains a built-in linear order $\leq$ on its domain that is not modified by any changes. One might suspect that parameter-free replacement queries are not very powerful, especially when they are applied to the initially empty input database. However, thanks to the linear order, one can actually construct every finite graph with relatively simple replacement queries (and similarly for other kinds of databases). For instance, one can cycle through all pairs of nodes in lexicographic order. If $(u,v)$ is the current maximal pair, operation \emph{keep} can move to $(u,v+1)$ (inserting it into $E$) while leaving $(u,v)$ in $E$ and \emph{drop} can 
move to $(u,v+1)$ while taking $(u,v)$ out from $E$.

The update programs constructed in this section use, as additional auxiliary relation, a binary \BIT-relation containing all pairs $(i,j)$ of numbers, for which the $i$-th bit of the binary representation of $j$ is 1. Here, we identify elements of an ordered database with numbers. In the following, the minimal element with respect to $\leq$ is considered as $0$.

By $\AC^1$ we denote the class of problems that can be decided by a uniform\footnote{for concreteness: first-order uniform \cite{ImmermanDC}} family of circuits of ``and'', ``or'' and ``not'' gates with polynomial size, depth $\bigO(\log n)$ and unbounded fan-in.
We show the following theorem.

\begin{theorem}\label{theorem:ac1}
Let $q$ be an $\AC^1$ query over ordered databases and $\Delta$ a finite set of parameter-free first-order definable replacement queries. Then $(q, \Delta)$ is in $\DynFO$ with suitable initialization.
\end{theorem}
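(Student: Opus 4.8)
The plan is to exploit two facts: first, that an $\AC^1$ circuit of depth $O(\log n)$ can be evaluated by iterating a first-order step $O(\log n)$ times; and second, that with a parameter-free set $\Delta$ of replacement queries, after each change the current input database is completely determined by the change applied, so the ``history'' that matters is just the sequence of constantly-many change symbols applied so far --- together, crucially, with the round number. The key auxiliary structure I would maintain is a counter (using the built-in order $\leq$ and the $\BIT$ relation, both available and immutable) that counts the number of changes performed modulo a suitable value, or just counts up to $n$ and saturates; this counter tells the program ``how far into the $\log n$-step evaluation we are''. So the program never actually re-derives the query answer from scratch at each change; instead it keeps, in auxiliary relations, a partial evaluation of the $\AC^1$ circuit on the \emph{current} input that advances by one circuit-layer per change, and on top of that it keeps a ``lagging'' copy of the last fully-computed answer so that it can output a correct result at every step.

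First I would fix, for the given $\AC^1$ query $q$, a first-order-uniform circuit family $(C_n)$ of size $n^c$ and depth $d\log n$; by uniformity the gate structure of $C_n$ is first-order definable (with $\BIT$) over the ordered domain. The evaluation of $C_n$ on input $\inp$ is the $(d\log n)$-fold iteration of a single first-order ``one-layer-up'' operator $\Theta$ acting on a relation $V$ that records the currently-known gate values; this is the standard descriptive-complexity characterization of $\AC^1$ as $\FO$ plus a logarithmic number of iterations (via $\BIT$-encoded divide-and-conquer on gate indices), so each iteration is first-order definable. Now, because $\Delta$ is parameter-free and finite, each change $\delta\in\Delta$ replaces each relevant relation by a fixed first-order-defined relation of the \emph{current} database; the program can therefore, in one first-order update step, (i) compute the new input database $\delta(\inp)$ in its input relations, (ii) reset the partial-evaluation relation $V$ to layer $0$ (the input gates, read off from the new input), and (iii) apply $\Theta$ once. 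Over the next $d\log n - 1$ changes the program keeps applying $\Theta$ (detecting via the counter that it is in the ``catch-up'' phase for the current input), so that after $d\log n$ changes the relation $V$ holds the full evaluation, i.e. the correct $q(\delta_k(\inp))$ for the input as of $d\log n$ rounds ago. To have a \emph{correct answer available at every round}, I would maintain $d\log n + 1$ staggered copies of this evaluation process, copy $j$ being $j$ layers behind, arranged in a ring; at any round, the copy that has just completed holds the answer for a recent-enough input, but this is still not the \emph{current} input.

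The real subtlety --- and the main obstacle --- is this staleness: a change can drastically alter the database (e.g.\ turn the empty graph into an arbitrary graph in one shot), yet we have only $O(1)$ first-order work per step and need $\Theta(\log n)$ iterations to evaluate the circuit, so the query answer for the \emph{current} input genuinely cannot be produced in the same round the input appears. The resolution I would use is to observe that this $\log n$-round latency is harmless \emph{only if we are allowed suitable initialization and we never need the answer to be current} --- but the definition of \DynFO\ (see the ``maintains'' definition in the excerpt) does demand the answer be correct after \emph{every} change. So the actual fix must be to do the $\log n$ iterations ``in parallel across the domain dimension'': encode the layer index as an extra coordinate, so that the auxiliary relation is $V(\ell, g)$ meaning ``gate $g$ has value $1$ at layer $\ell$'', and define all layers \emph{simultaneously} by a single first-order formula that, using $\BIT$, computes layer $\ell$ from layer $\ell/2$-style halving --- i.e.\ implement the $\AC^1$ evaluation as one $\FO$ formula over the structure augmented with $V$-at-all-previous-layers, exploiting that the recursion depth is logarithmic so a $\BIT$-driven telescoping works in a bounded formula. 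Concretely: maintain $V$ holding the values of \emph{all} gates at \emph{all} layers for the current input; after a change $\delta$, the new $V'$ is obtained by a first-order formula that reads the input-layer values off $\delta(\inp)$ (first-order since $\Delta$ is parameter-free and fixed) and then defines layer $\ell$ in terms of $V$ restricted to layers $<\ell$ via the uniform gate-structure formula --- but $V$ is the \emph{old} table, so this only shifts the whole evaluation forward one step, reproducing the staleness.

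I expect that the clean way out, and what the authors likely do, is the following: maintain, as a \emph{single} auxiliary relation, the transitive-closure-like ``reachability in the circuit DAG along $1$-valued wires'' information, which is itself first-order updatable because the circuit DAG is first-order fixed and a parameter-free change only alters the $O(1)$ input gates' values in a first-order-definable way --- and since there are only $O(1)$ possible changes, one can precompute, for each $\delta\in\Delta$, a first-order formula $\varphi_{q,\delta}$ that takes the current full gate-evaluation table $V$ and the current input and outputs the full gate-evaluation table for $\delta(\inp)$ in \emph{one} step. The point is that going from $\inp$ to $\delta(\inp)$ changes the input gates, and the effect on each gate's value is a \emph{monotone-circuit-value difference} that, thanks to having the old complete table $V$ \emph{and} the fixed logarithmic depth, can be recomputed by a bounded-depth first-order formula using $\BIT$ to telescope the $\log n$ levels. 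So the key steps are: (1) fix the uniform $\AC^1$ circuit and set up $V$ as the full evaluation table, initialized (``suitable initialization'') to the evaluation on the empty database, which is polynomial-time computable; (2) for each of the finitely many $\delta\in\Delta$, write the first-order update formula $\varphi_{q,\delta}(V,\inp)$ recomputing $V$ on $\delta(\inp)$, using $\BIT$ and uniformity to do the $\log n$-depth recomputation inside a fixed-quantifier-rank formula; (3) read the query answer off $V$ at the output gates. The \emph{main obstacle} is exactly step (2): showing that one first-order step suffices to recompute an $\AC^1$ circuit's value when the input changes by a fixed parameter-free query, given the previous complete value table --- the argument must use that the circuit depth is $O(\log n)$ together with the $\BIT$ predicate to express the $\log n$-fold composition as a bounded formula, and must carefully handle that a parameter-free change can nonetheless alter $\Theta(n)$ input bits at once.
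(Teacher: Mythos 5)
You correctly isolate the central obstacle: the query needs $\Theta(\log n)$ iterations of a first-order operator to evaluate, but a \DynFO program gets only one first-order step per change and must be correct after every change, while a parameter-free change can rewrite $\Theta(n)$ input bits at once. However, neither of your proposed resolutions closes this gap. The ``telescoping'' idea --- defining all $\log n$ circuit layers simultaneously by a single bounded first-order formula using \BIT --- cannot work: if a fixed first-order formula could compute the full evaluation table of a uniform $\AC^1$ circuit from the input, then $\AC^1$ would collapse to $\FO(<,\BIT)=\AC^0$, contradicting e.g.\ the parity lower bound. You in fact notice that the honest version of this only ``shifts the whole evaluation forward one step, reproducing the staleness.'' Your fallback --- that the old complete gate-value table $V$ lets one recompute the new table in one first-order step --- is asserted, not proved, and there is no reason it should hold: since the change can alter linearly many input gates, the new evaluation bears no useful relation to the old one, and the claimed update formula would again have to compress a $\log n$-fold composition into bounded quantifier rank.

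The idea you are missing is that the hypotheses ``$\Delta$ finite'' and ``parameter-free'' are used not to make a single update step powerful, but to make \emph{prefetching} feasible (the ``squirrel technique''). Because there are only $d=|\Delta|$ possible changes at each step and no parameters, there are only $d^{2\log n}=n^{2\log d}$ possible change sequences of length $2\log n$, each encodable as a domain tuple of constant arity. At every time $t$ the program launches one thread per such hypothetical future $\beta$: over the next $\log n$ real steps the thread applies two of $\beta$'s changes per step to the time-$t$ database (computing the target database $\beta(G_t)$ by time $t+\log n$), and over the following $\log n$ steps it applies $\varphi_q$ a constant number of times per step to evaluate $q$ on that target. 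Threads whose guess diverges from the actual change sequence are discarded; the surviving thread delivers $q$ of the \emph{current} database exactly at time $t+2\log n$, so after the initialization covers the first $2\log n$ steps, a correct answer is available at every round. Each real step thus performs only $O(1)$ first-order work per thread, over polynomially many threads indexed by tuples --- which is precisely what a first-order update formula can do. Without this anticipation of futures, the latency problem you identified is fatal, so the proposal as it stands does not establish the theorem.
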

\begin{proof}
We first explain the idea underlying the proof. 

It uses the characterization of $\AC^1$ by iterated first-order formulas. More precisely, we use the equality 
$\AC^1=\Ind{\log n}$ from \cite[Theorem 5.22]{ImmermanDC}, where $\Ind{t(n)}$ is the class of problems that can be expressed by applying a first-order formula $\bigO(t(n))$ times and $n$ is the size of the domain\footnote{In the setting of \cite{ImmermanDC}, first-order formulas may use built-in relations $\leq$ and \BIT. The relation $\leq$ is also present here, the relation $\BIT$ can be generated by a suitable initialization, see \cite[Exercise 4.18]{ImmermanDC}.}. We only give an example and refer to \cite[Definition 4.16]{ImmermanDC} for a formal definition. Consider the formula $\varphi_{\text{TC}}(x, y) = (x = y) \vee E(x,y) \vee \exists z \big( R(x, z) \wedge R(z, y) \big)$. When applying the formula to a graph and an empty relation $R$ it defines the relation $R_1$ of paths of length $1$, applying it to $ R \df R_1$ defines the paths of length  $2$; in general applying the formula to $R \df R_i$ defines the paths of length $2^i$. Thus $\log n$-fold application of $\varphi_{\text{TC}}$ defines the transitive closure relation of a graph with $n$ vertices and therefore $\Qreach$ is in $\Ind{\log n}$.

Let $q$ be a query in $\AC^1$ and let $k$ be such that $q$ can be evaluated by $k\log n$ applications of a formula $\varphi_q$.

The program $\prog$ uses a technique inspired from prefetching, which was called \emph{squirrel technique} in \cite{ZeumeS14}. At any point $t$ in time\footnote{We count the occurrence of one change as one time step.}, it starts a thread $\theta_\beta$, for each possible future sequence $\beta$ of $2\log n$ change operations.

Within the next $\log n$ steps (i.e.~changes), it compares whether the actual change sequence $\alpha$ is the prefix of $\beta$ of length $\log n$. If not, thread $\theta_\beta$ is abandoned, as soon as  $\alpha$ departs from~$\beta$. For each of these $\log n$ steps, $\theta_\beta$ simulates two change operations of $\beta$ and applies them to the graph $G_t$ at time $t$, consecutively. After  $\log n$ steps, that is, at time $t+\log n$, thread $\theta_\beta$ has computed the \emph{target graph} $\beta(G_t)$. 

During the next $\log n$ steps until time $t+2\log n$, $\theta_\beta$ evaluates $q$ on 
$\beta(G_t)$ by repeatedly applying the formula $\varphi_q$, $k$ times for each single step. Again, if the actual change sequence departs from $\beta$ then $\theta_\beta$ is abandoned. However, if $\beta$ is the actual change sequence from time $t$ to $t+2\log n$, the thread $\theta_\beta$ does not stop and has the correct query result $q(\beta(G_t))$ at time $t+2\log n$. 

 We note that, although the time window in the above sketch stretches over $2\log n$ change operations from time $t$ to $t+2\log n$, the actual sequences whose effect on the current graph is precomputed are never longer than $\log n$. This is because the application of all $2\log n$ operations of a sequence takes until time $t+\log n$ and by that time the first $\log n$ of these operations already lie in the past. 

Of course, $\prog$ uses a lot of prefetching. However, this is possible, because only a constant number, $d=|\Delta|$,  of change operations is available at any time (and there are no parameters). Thus, there are only $d^{2\log n}=2^{2\log d \log n}=n^{2\log d}$ many different change sequences, each of which can be encoded by a tuple of arity $2\log d$ over the domain.  

This explains how $\prog$ can give correct answers for all times $t\ge 2 \log n$. All previous time points have to be dealt with by the initialization. This initialization also equips the program with the \BIT\ relation. Clearly, the initialization can be computed in $\AC^1$, and therefore also in polynomial time.
More  details of this proof can be found in the full version of this paper.
\end{proof}

  \section{Formal Languages  and $\Sigma_1$-definable Change Operations}\label{section:languages}
In this section, we consider the membership problem for formal languages and how it can be maintained, for regular and context-free languages, under certain kinds of complex changes.

The problem of maintaining formal languages dynamically has been studied intensely in the context of single insertions to and deletions from the relations $R_\sigma$ (cf.\ Section~\ref{section:preliminaries}). 
In that setting, the class of regular languages is exactly the class of languages maintainable in \DynProp\footnote{So, only using quantifier-free update formulas.} and all context-free languages can be maintained in \DynFO \cite{GeladeMS12}. All regular, some context-free, and even some non-context-free languages  can be maintained in \DynFO with only unary auxiliary relations \cite{Hesse03}, but this is not possible for all context-free languages \cite{Zeume15, Vortmeier13}.

Here, we consider the problem of maintaining formal languages under first-order definable change operations. 
We assume that only replacement queries are used whose application results in structures where each position is in at most one $R_\sigma$ relation.
For a given formal language $L$ we denote the membership query for $L$ as $q_L$. 

We prove that regular and context-free languages can be maintained dynamically for large classes of change operations: all regular languages can be maintained in \DynProp under quantifier-free change operations and all context-free languages can be maintained in \DynFO under \EFO-definable (and, dually, \AFO-definable) change operations. A setting, in which language membership can be maintained with respect to simple changes but not with respect to definable change operations is exhibited in Section~\ref{section:lower}.
For quantifier-free change operations, the results are obtained by generalizations of the techniques of \cite{GeladeMS12}.

\begin{theorem}\label{theorem:languages:DynProp}
Let $L$ be a regular language and $\Delta$ a finite set of
quantifier-free replacement queries. Then $(q_L,\Delta)$ can be
maintained in \DynProp  with suitable initialization.
\end{theorem}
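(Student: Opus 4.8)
The plan is to maintain, for a fixed DFA $\calA$ recognizing $L$, the usual "transition profile" auxiliary relations of \cite{GeladeMS12}, adapted to cope with the fact that a quantifier-free replacement query can rewrite an arbitrary interval-pattern of the $R_\sigma$ relations in one step. For states $p,q$ of $\calA$ and domain elements $i \le j$, let the auxiliary relation $\RelName{Reach}$ record whether reading the substring $w_i \cdots w_j$ takes $\calA$ from $p$ to $q$; together with a few boundary relations (a relation recording $\delta^*$ from the initial state to each position, and the symmetric one towards $\max$) this suffices to answer $q_L$ after each change, exactly as in \cite{GeladeMS12}. The key structural observation is that a quantifier-free change formula $\mf{R_\sigma}{p}{x}$ with a single parameter $a$ partitions the domain into a bounded number of intervals $I_1 < I_2 < \cdots < I_c$ (the number $c$ depending only on the Boolean combination of order-atoms $x < a$, $x = a$ over the one free variable $x$ and the parameter) such that on each $I_\ell$ the new membership of a position $x$ in $R_\sigma$ depends only on which of the interval classes $x$ and $a$ fall in and on the old values $R_\tau(x)$ for the various $\tau$. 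Hence, on each such interval the new letter $w'_x$ is a \emph{fixed function} of the old letter $w_x$ — i.e.\ the change acts on each interval as one of finitely many letter-renamings (including possibly erasing to $\epsilon$ or forcing a constant letter), and the renaming applied in interval $I_\ell$ is determined by $\ell$ alone.

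First I would record, as quantifier-free-definable data, the interval boundaries: since there is one parameter $a$ and the atoms are among $x<a, x=a, \max, \min$, the cut points are $\{\min, a, a{+}\text{(successor, unavailable)}\dots\}$ — more carefully, one works directly with the order atoms, so the "interval to which $x$ belongs" is a quantifier-free predicate in $x$ and $a$. Next, for each pair of interval indices $\ell \le \ell'$ and each renaming $r$, observe that the new transition profile $\RelName{Reach}'(p,q,i,j)$, for $i$ in interval $I_\ell$ and $j$ in interval $I_{\ell'}$, is obtained by composing: the profile from $i$ to the right end of $I_\ell$ under renaming $r_\ell$, then a "whole-interval" profile for each intermediate interval under its renaming, then the profile from the left end of $I_{\ell'}$ to $j$ under $r_{\ell'}$. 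The crux is that these pieces are all expressible quantifier-free from the old auxiliary relations, provided one also maintains, for every renaming $r$ in the finite set induced by $\Delta$, the relation $\RelName{Reach}_r(p,q,i,j)$ giving the $\calA$-profile of the $r$-renamed substring $r(w_i)\cdots r(w_j)$ — and these "renamed" profiles are themselves quantifier-free updatable in parallel by the same interval analysis (a renaming of a renaming is again one of finitely many renamings, so the family is closed). A whole-interval profile "from just past the right end of $I_{\ell-1}$ to the right end of $I_\ell$" is just $\RelName{Reach}_{r_\ell}(p,q,\cdot,\cdot)$ evaluated at the (quantifier-free definable, because they are $\min$, $a$, or expressed via order atoms) endpoints, with a fixed bounded composition over the at most $c$ intervals — a Boolean combination of a constant number of atoms, hence quantifier-free.

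The boundary relations ($\delta^*$ from the initial state to position $x$, and towards $\max$) are updated the same way by splitting at the interval containing $x$. Finally, the initialization: since a single change can instantly produce an arbitrary word, we initialize all the $\RelName{Reach}_r$ and boundary relations to the correct transition profiles of the initial (empty, all-$\epsilon$) word over the given ordered domain — this is a fixed relation depending only on $\calA$, $\Delta$ and $n$, trivially polynomial-time (indeed quantifier-free with $\le, \BIT$) computable, which is what "suitable initialization" permits. I expect the main obstacle to be bookkeeping rather than conceptual: one must check that the interval decomposition really uses only a \emph{bounded} number of intervals independent of $n$ (true, because a quantifier-free formula in one parameter and one free variable over a pure linear order has a bounded number of 1-types), and that the finite set of renamings generated by closing $\Delta$ under composition is actually finite (true, since there are only $|\Sigmae|^{|\Sigmae|}$ renamings total); and then that the bounded composition of interval profiles, whose "shape" depends on which interval each of $i,j$ lies in, can be written as a single quantifier-free formula by a finite case distinction on those interval indices. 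Handling changes that replace several $R_\sigma$ simultaneously is no harder: the joint interval decomposition is the common refinement of the individual ones, still of bounded size, and on each cell the combined effect is again a single renaming. No quantifiers are ever needed because every "intermediate endpoint" we must refer to is either a constant ($\min,\max$), the parameter $a$, or an element whose membership in the relevant interval is a quantifier-free order-condition.
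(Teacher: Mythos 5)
Your proposal is correct and follows essentially the same route as the paper: both maintain the family of relabeled transition profiles $S^f_{q,r}$ (your $\RelName{Reach}_r$), closed under composition of the finitely many relabelings, observe that a quantifier-free change acts as a fixed relabeling on each cell of a bounded order-interval decomposition determined by the parameters, and update by a quantifier-free bounded composition over those cells, initializing via the $\epsilon$-images of the relabelings. The only nuance worth noting is that the relabeling applied on an interval depends not just on the interval index but also on the old symbols at the parameter positions; this is absorbed by the same finite quantifier-free case distinction you already invoke (the paper's disjunction over $R_\tau(p)$).
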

\begin{proof}
Let $L$ be a regular language of strings over alphabet $\Sigma$ and $\calA = (Q, \Sigma, \gamma, s, F)$ a corresponding deterministic finite automaton with set $Q$ of states, transition function\footnote{Since in this paper $\delta$ denotes change operations, we use $\gamma$ for transition functions.} $\gamma$, initial state $s$, and set $F$ of accepting states. 
In \cite[Proposition 3.3]{GeladeMS12}, the main auxiliary relations are of the form $S_{q,r}(i,j)$ where $q,r$ are states of $\calA$ and $i,j$ are positions of the string under consideration. The intended meaning of $S_{q,r}$ is that $(i,j) \in S_{q,r}$ if and only if $\gamma^*(q,w_{i+1}\cdots w_{j-1})=r$.\footnote{The relations $S_{q,r}$ were actually named $R_{q,r}$ in \cite{GeladeMS12}, but we want to avoid confusion with the $R_\sigma$ relations. Since \cite{GeladeMS12} did not use constants $\min$ and $\max$, it used further auxiliary relations of the form $I_r$ and $F_q$ that contain all positions $i$ with $\gamma^*(s,w_1\cdots w_{i-1})=r$, and $\gamma^*(q,w_{i+1}\cdots w_n)\in F$, respectively.} Notice that $w_i$ and $w_j$ are not relevant for determining whether $(i,j) \in S_{q,r}$.

In the presence of quantifier-free change operations it suffices to maintain binary auxiliary relations of the form $S^f_{q,r}$, where $f:\Sigmae\to\Sigmae$ is a \emph{relabeling function}. The intended meaning is that  $(i,j) \in S^f_{q,r}$ if and only if 
$\gamma^*(q,f(w_{i+1}\cdots w_{j-1}))=r$, where $f$ is extended to strings in the straightforward way.\footnote{It should be noted that $f$ need not be a homomorphism since $f(\epsilon)\not=\epsilon$ is allowed.} Clearly, $S_{q,r}=S^\id_{q,r}$. 

For simplicity we show how to update $S^f_{q,r}$ for replacement queries of the form $\rho(p)$ with \emph{one} parameter $p$. The general case works analogously, but is notationally more involved. A replacement query with one parameter basically consists of one quantifier free formula $\mu_\sigma(p;x)$, for each element $\sigma\in\Sigma$. 

We show how the relations $S^f_{q,r}$ can be maintained by quantifier-free update formulas $\phi^{f}_{q,r}(p;x,y)$. Then the (Boolean) query relation can be updated by the formula 
\[\bigvee_{q,r\in Q,r'\in F} \psi_{s,q}(\min)\land \phi^{\id}_{q,r}(p;x,y)(\min,\max)\land  \psi_{r,r'}(\max),\] 
where $\psi_{q,r}(x)$ is a formula that expresses that $\delta(q,w_x)=r$.

Intuitively, each formula $\mu_\sigma(p;x)$ determines whether position $x$ carries $\sigma$ after the change. Whether this is the case only depends on (1) the current symbol at position~$x$, (2) the current symbol at position $p$, and (3) on the relative order of $x$ and $p$. Thus, the impact of a change can be described as follows: some relabeling function $f_{\leftarrow}$ is applied at all positions $x<p$, some change might occur at position $p$ and some relabeling function $f_{\rightarrow}$ is applied at all positions $x>p$. More precisely, from $\rho$ one can derive, for each\footnote{Since the schema is clear from the context, we use $\tau$ here to denote a symbol from $\Sigma$.} $\tau\in\Sigmae$, relabeling functions $f^\tau_\leftarrow$, $f^\tau_\rightarrow$ and a symbol $\sigma(\tau)$ such that the update formula for a relation $S^f_{q,r}$ can be described by the formula
\[
\begin{split}
\phi^f_{q,r}(p;x,y)= x < y \wedge \bigvee_{\tau\in\Sigmae}  \bigg(R_\tau(p) \land 
\Big( (p\le x \land  R^{f\circ f^\tau_\rightarrow}_{q,r}(x,y)) \lor   
(p\ge y \land  R^{f\circ f^\tau_\leftarrow}_{q,r}(x,y)) \lor\\ 
\big(x<p<y \land \bigvee_{q',r'}  (R^{f\circ f^\tau_\leftarrow}_{q,q'}(x,p) \land \chi_{q',f(\sigma(\tau)),r'}\land R^{f\circ f^\tau_\rightarrow}_{r',r}(p,y))\big) 
\Big)
\bigg),
\end{split}
\]
where formulas of the form $\chi_{q',a,r'}$ are defined as $\top$ if $\delta(q',a)=r'$ and $\bot$, otherwise.

The initialization of the relations $S^f_{q,r}$ is straightforward. If, for a relabeling function $f$, $f(\epsilon)=\sigma$ then a pair $(i,j)$ is in $S^f_{q,r}$ if and only if $\gamma^*(q,\sigma^{j-i-1})=r$.
\end{proof}

We next turn to context-free languages. The ideas underlying the proof of Theorem~\ref{theorem:languages:DynProp} can be adapted to show that the result from \cite{GeladeMS12}, that (membership for) context-free languages can be maintained in \DynFO under simple change operations, survives under quantifier-free change operations. Through some little extra effort, this can be extended to $\EFO$-definable change operations (and dually, $\AFO$-definable change operations).

\begin{theorem}\label{theorem:languages:EFO}
Let $L$ be a context-free language and $\Delta$ a finite set of
$\EFO$-definable replacement queries. Then $(q_L,\Delta)$ can be
maintained in \DynFO with suitable initialization.
 \end{theorem}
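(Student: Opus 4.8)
The plan is to follow the blueprint of Theorem~\ref{theorem:languages:DynProp}, replacing the finite automaton by a context-free grammar in Chomsky normal form and the quantifier-free reasoning by the first-order machinery already used in \cite{GeladeMS12} for the single-tuple case. Fix a grammar $G = (N, \Sigma, P, S_0)$ in Chomsky normal form for $L$. In the static/single-tuple setting one maintains, for each nonterminal $A$, a binary relation $D_A$ with $(i,j) \in D_A$ iff $A \Rightarrow^* w_{i+1}\cdots w_{j-1}$; these can be updated in \FO by guessing the topmost rule application $A \to BC$ and a split point $k$ between $i$ and $j$, using $D_B(i,k)$ and $D_C(k,j)$. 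As in the regular case, to cope with relabelings I would instead maintain relations $D_A^f$ for each relabeling function $f : \Sigmae \to \Sigmae$, with the meaning $(i,j) \in D_A^f$ iff $A \Rightarrow^* f(w_{i+1}\cdots w_{j-1})$; note there are only finitely many such $f$, so this is a legal finite auxiliary schema, and $D_A = D_A^{\id}$.

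The first step is the \textbf{quantifier-free case}, which is a direct transcription: for a quantifier-free replacement query $\rho(\tpl p)$, the effect at a position $x$ depends only on the current symbols at $x$ and at the (constantly many) parameter positions, together with the relative order of $x$ and the parameters. Hence the parameters partition the line into a constant number of intervals, on each of which some fixed relabeling function $f_I$ is applied, while each individual parameter position may undergo an individual relabeling. The update formula for $D_A^f$ then says: either $(i,j)$ lies entirely inside one interval $I$, in which case we look up $D_A^{f \circ f_I}$; or $(i,j)$ straddles one or more parameter positions, in which case we split at a parameter position $p_\ell$, guess the rule $A \to BC$, and combine $D_B^{f \circ f_{I}}(i,p_\ell)$ (for the interval $I$ ending at $p_\ell$), a test of whether $B$ or $C$ derives the single relabeled symbol sitting at $p_\ell$, and $D_C^{f\circ f_{I'}}(p_\ell,j)$. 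This is a finite disjunction over nonterminals, parameter positions, and interval patterns, with one existential quantifier for the split point inside an interval — so it is in \FO.

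The second step is the \textbf{extension to $\EFO$-definable changes}, which is the main obstacle and the only genuinely new point over \cite{GeladeMS12}. For an existential formula $\mu_\sigma(\tpl p; x) = \exists \tpl z\, \psi_\sigma(\tpl p, x, \tpl z)$, whether position $x$ carries $\sigma$ after the change is witnessed by some tuple $\tpl z$ of positions; the key observation to exploit is that, by an Ehrenfeld\-Feucht/Hanf-style locality or type argument (or, concretely, by reasoning about $\EFO$-types of the linearly-ordered coloured string), the \emph{new} colour of $x$ depends only on the $\EFO$-type of the string together with the position of $x$ relative to the parameters — so again the line is cut by the parameters into a constant number of zones, and on each zone the relabeling applied is one of a constant number of functions, the choice of which depends only on a bounded amount of global information (an $\EFO$-sentence about the current string) that can itself be tested in \FO. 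Once this "constantly many zones, constant relabeling per zone" structure is established, the update formulas are exactly as in the quantifier-free case. I would therefore organize the proof as: (i) a lemma stating that for any $\EFO$-definable $\rho$ the effect of a change is a "zoned relabeling" with a constant number of zones and a constant repertoire of relabeling functions, the relevant data being \FO-decidable from the pre-change string; (ii) the update formulas for the $D_A^f$ built from this lemma; (iii) the Boolean query formula $\bigvee$ over rules for $S_0$ together with colour tests at $\min$ and $\max$; (iv) the (polynomial-time, indeed \FO-with-\BIT) initialization of the $D_A^f$, which is trivial since for each $f$ with $f(\epsilon) = a$ the value of $D_A^f$ on an empty string is just "does $A$ derive $a^{j-i-1}$", a property of $j-i$ only. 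The dual statement for $\AFO$-definable changes follows by complementation of the colouring formulas.
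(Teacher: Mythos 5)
Your overall architecture matches the paper's: first handle quantifier-free replacement queries by indexing the auxiliary relations of \cite{GeladeMS12} with relabeling functions, then lift to \EFO-definable changes. However, there are two genuine gaps. First, your choice of \emph{binary} relations $D_A^f$ and the proposed update formula do not work, already in the quantifier-free case. The single-tuple construction of \cite{GeladeMS12} does not maintain the CYK relations ``$A \Rightarrow^* w_{i+1}\cdots w_{j-1}$''; it maintains $4$-ary ``derivation with a hole'' relations $S_{X,Y}(i_1,j_1,i_2,j_2)$ meaning $X \Rightarrow^* w_{i_1}\cdots w_{j_1}\, Y\, w_{i_2}\cdots w_{j_2}$, and this is essential: after a change at a parameter position $p$ strictly inside $(i,j)$, a derivation of the new substring from $A$ has a spine from the root down to the leaf at $p$ of \emph{unbounded} length, peeling off subtrees alternately to the left and right of $p$. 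Your formula forces the topmost rule $A\to BC$ to split exactly at a parameter position, which most derivations do not do; and decomposing along the spine with only binary relations would require unboundedly many quantifier blocks. The paper's proof accordingly uses relations $S^{f_1,f_2}_{X,Y}$ (relabeling-indexed versions of the hole relations) and plugs the new symbol at $p$ into the hole $Y$. Without this device your quantifier-free step collapses.

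Second, your ``zoned relabeling'' lemma for \EFO-definable changes is false as stated. Take $\mu_\sigma(p;x)=\exists y\,\bigl(y<x \wedge R_a(y)\bigr)$: within the single zone $\{x : x<p\}$ (or $\{x: x>p\}$), positions before the first $a$ and positions after it are treated differently, so the effect on a zone cut out only by the parameters is \emph{not} a uniform relabeling, and no amount of \FO-decidable global information repairs this, because the non-uniformity is positional. The correct fix --- and the paper's route --- is to introduce the witness positions themselves as additional zone boundaries: one normalizes $\mu_\sigma$ into a disjunction of formulas $\exists \tpl y\,\psi_m$ where $\psi_m$ fixes a full atomic type with $y_1<\dots<y_\ell<x<y_{\ell+1}<\dots<y_k$, shows that \emph{canonical} (lexicographically extremal) witnesses can be chosen uniformly for all $x$, treats these as extra parameters of a now quantifier-free replacement query, and finally guards the resulting update formulas with an existential block that binds these extra parameters to their canonical values. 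Your plan as written has no mechanism for cutting zones at non-parameter positions, so step (i) of your outline cannot be proved.
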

The proof of Theorem \ref{theorem:languages:EFO} can be found in the full version of this article. It first shows how
context-free languages can be maintained under quantifier-free
changes, basically combining the idea of the proof of Theorem~\ref{theorem:languages:DynProp}
with that of \cite[Theorem 4.1]{GeladeMS12}. Then it shows how the
case of \EFO-definable changes can be reduced to the quantifier-free
case.

  \section{Inexpressibility Results}\label{section:lower}
We finally turn to inexpressibility results. It is notoriously difficult to show that a query \emph{cannot} be maintained by a \DynFO program. Indeed, there are no inexpressibility results for \DynFO besides those that follow from the easy observation that every query that can be maintained in \DynFO under single-tuple insertions is computable in polynomial time.

We expect that it should be easier to prove inexpressibility results
for \DynFO in the presence of first-order definable change
operations. However, we have no results of this form yet. But the
following results confirm that, unsurprisingly, complex change
operations can make it harder to maintain a
query. We give two examples where allowing complex changes destroy a
previous maintainability result, Theorems~\ref{theorem:tc:general:insertion} and~\ref{theorem:languages:EAFO}, and one
example, Theorem~\ref{theorem:tc:lower:quantifier-free} where we are able to show an inexpressibility result in the
presence of complex deletions but not yet for single-tuple changes.

Towards our first result, we recall that the reachability query can be
maintained under single-tuple insertions with the transitive closure of the edge
relation as only auxiliary relation and that this does not hold if one
allows single-tuple deletions \cite{DongLW95}. We show next that the
transitive closure also does not suffice in the presence of
single-tuple insertions and one complex insertion query.

For general directed graphs, a parameter-free and quantifier-free
insertion query suffices, for acyclic graphs a parameter-free insertion query defined by an existential formula suffices. The latter result should be contrasted with Theorem \ref{theorem:tc:acyclic:quantifierfree}.

\begin{theorem}\label{theorem:tc:general:insertion}
\begin{enumerate}[(a)]
\item There is a quantifier-free and parameter-free insertion query $\rho$ such that
$(\Qreach,\{\insertion{E}{\tpl p},\rho\})$ cannot be maintained in \DynFO
on ordered directed graphs, if all auxiliary relations besides the query
relation and the linear order are unary.
\item There exists an \EFO-definable and parameter-free insertion query $\rho'$, for which the above statement holds even restricted to \emph{acyclic}, directed graphs.
\end{enumerate}
\end{theorem}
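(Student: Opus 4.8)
Here is a plan of attack for Theorem~\ref{theorem:tc:general:insertion}.

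The plan for both parts is the same: pick a single parameter-free insertion query whose effect on reachability is governed by a ``counting'' property that the update formulas cannot express, build a family of hard instances by single-tuple insertions, and then rule out that the unary auxiliary relations help, by a Ramsey-style homogenisation argument along the change sequence together with an Ehrenfeucht--Fra\"iss\'e argument. For part~(a) I would take $\rho$ to be the quantifier-free, parameter-free insertion query with change formula $\varphi_E(x,y)=E(y,x)$, so that $\rho$ symmetrises the edge relation; consequently, for every graph $G$, the relation $\Qreach(\rho(G))$ is exactly the ``same weakly connected component'' relation of $G$. Now fix a putative dynamic program $\prog$ maintaining $\Qreach$ under single-edge insertions and $\rho$ with only $Q$ and $\le$ as non-unary auxiliary relations, say with $k$ unary auxiliary relations and update formulas of quantifier rank at most $r$, and choose $P$ much larger than $k$ and $r$. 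For large $n$, let $G_n$ over the ordered domain $\{1,\dots,nP\}$ be the disjoint union of $P$ ``zig-zag fences'': component $t$, for $0\le t<P$, lives on the positions $\equiv t\pmod P$ and, on these positions $p_1<p_2<\cdots$, consists of the edges $p_{2j-1}\to p_{2j}$ and $p_{2j+1}\to p_{2j}$. Each $G_n$ is obtained from the empty graph by a sequence $\alpha_n$ of single-tuple edge insertions. Two features of $G_n$ matter: after symmetrisation the graph has exactly $P$ connected components, namely the residue classes modulo $P$, so $\Qreach(\rho(G_n))$ is precisely the relation ``$x\equiv y\pmod P$''; and the transitive closure of $G_n$ only relates each source vertex $p_{2j-1}$ of a fence to its two nearby sink vertices $p_{2j-2},p_{2j}$, hence reveals no long-range information about which vertices share a component.

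The contradiction is obtained by showing that the update formula $\varphi_Q$ that $\prog$ uses for the change $\rho$ would have to define ``$x\equiv y\pmod P$'' on the pre-change state $(G_n,\Qreach(G_n),\bar U_n,\le)$, and that this is impossible because the state simply does not contain enough information: the order alone does not define residue classes, the transitive closure is too local (as just noted), and --- this is the crux --- the $k<P$ unary relations $\bar U_n$ cannot carry it either. To make the last point precise I would argue that $\prog$ never gains first-order access to a residue class (neither the initialisation over the empty ordered structure nor any later state, whose edge relation is always a subgraph of some $G_n$, first-order defines one), so no residue class is ever stored in the auxiliary relations; then a Ramsey/indiscernibility argument over the $P$ components yields $\ge 2^{r+2}$ pairwise ``aux-twin'' components on which graph, transitive closure and auxiliary relations all look the same up to rank $r$, so that the relation ``same residue class'' restricted to their union is a ``same block'' relation on $\ge 2^{r+2}$ indiscernible, interleaved blocks. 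Picking a vertex $x$ deep inside one such component, a vertex $y$ deep inside the same component, and the corresponding vertex $y'$ deep inside a twin component, the pairs $(x,y)$ and $(x,y')$ have the same first-order type up to rank $r$ in $(G_n,\Qreach(G_n),\bar U_n,\le)$, whereas $(x,y)\in\Qreach(\rho(G_n))$ and $(x,y')\notin\Qreach(\rho(G_n))$, so $\varphi_Q$ must err. I expect this homogenisation step --- taming the adversarial unary auxiliary relations in the presence of the rigid linear order and extracting a long indiscernible family of components --- to be the main obstacle; the rest is careful bookkeeping around the fences.

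For part~(b) the same scheme is carried out on directed acyclic graphs with the \EFO-definable, parameter-free insertion query $\rho'$ given by $\varphi_E(x,y)=(x<y)\land\exists z\,(E(z,x)\land E(z,y))$ --- ``add the forward edge $x\to y$ whenever $x$ and $y$ share a predecessor''. Since $\rho'$ only ever adds forward edges and all instances used have all their edges oriented consistently with $\le$, every graph that occurs stays acyclic. The hard instances are layered acyclic graphs with unboundedly many layers, each layer an interval of the domain containing $P$ slots, with the original (consecutive-layer) edges encoding a family of permutations of the slots; applying $\rho'$ once adds, inside each layer, a forward edge between two slots sharing a predecessor, so that a directed path from the first slot of the first layer to a slot of the last layer exists in $\rho'(G)$ exactly when a global consistency condition on the permutations holds --- which is precisely the failure of the bridge-boundedness property of Lemma~\ref{lemma:tc:acyclic:boundnewedges} for \EFO-insertions mentioned in the footnote, since such a path uses one fresh edge per layer. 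The same Ramsey/Ehrenfeucht--Fra\"iss\'e argument as in part~(a) then shows that no dynamic program whose only non-unary auxiliary relations are $Q$ and $\le$ can keep $\Qreach$ correct across this single $\rho'$-change; the contrast with Theorem~\ref{theorem:tc:acyclic:quantifierfree} is accounted for exactly by the extra existential quantifier in $\rho'$, which is what makes bridge-boundedness fail.
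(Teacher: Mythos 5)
Your high-level reduction has the same shape as the paper's: build, via single-tuple insertions, a graph whose transitive closure (and hence the query relation $Q$) is local and trivially definable, then apply one parameter-free complex insertion whose result has a globally nontrivial reachability relation, and conclude that the update formula for $Q$ would have to express that relation in first-order logic using only the linear order and unary help relations. The gap is that you never prove the inexpressibility statement this reduces to, and that statement is the entire technical content of the theorem. Concretely, for part (a) you must show: for every first-order $\varphi(x,y)$ of rank $r$ and every $k$, there is an instance such that for \emph{all} choices of the $k$ unary relations (together with the local relations $E$ and $\Qreach(G_n)$ and the built-in order), $\varphi$ fails to define ``same residue class mod $P$''. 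Your sketch of this step is not a proof: the preliminary claim that ``$\prog$ never gains first-order access to a residue class, so no residue class is ever stored'' is not a legitimate argument form (the update formulas are adversarial, so you must quantify over all possible unary relations --- which your Ramsey step then in effect attempts), and the ``Ramsey/indiscernibility'' step --- extracting $\ge 2^{r+2}$ components on which the pairs $(x,y)$ and $(x,y')$ have the same rank-$r$ type --- is exactly where the difficulty lies. With a rigid linear order and arbitrary unary colourings, interleaved components are not automatically indiscernible: Ramsey gives components with matching colour patterns in some weak sense, but transferring this to equality of rank-$r$ types of \emph{pairs} of elements requires a composition/locality argument over the ordered structure. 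That is precisely what the paper supplies via the Extension Theorem in Lemma~\ref{lemma:lower-reach-order} (and, for the unordered case, by citing Lemma~\ref{lemma:tc:pathgraphs}); without an argument of that calibre your proof does not close.

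Part (b) is sketchier still: the ``layered permutation'' instances are not specified concretely enough to check that they are reachable from the empty graph under the allowed changes, that the pre-change transitive closure is first-order definable, that all intermediate graphs stay acyclic, and that the post-change reachability encodes a query that is provably not first-order definable with unary help relations. The paper obtains (b) almost for free from (a) by replacing each self-loop gadget with an apex node $v'_i$ and using the insertion formula $E(x,y)\lor\exists z\,\bigl(E(z,x)\land E(z,y)\bigr)$, so that the very same path-graph lemma applies; you would be better served by a construction that reduces to the same core lemma as part (a) than by a second, independent hard instance whose hardness would again have to be proved from scratch.
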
 
\begin{proof}
For ease of presentation, we first give a proof for  unordered directed graphs.

The proof follows an approach that has been used often before and that was made precise in \cite{Zeume15}. We say that a $k$-ary query $q$ is expressed by a formula $\varphi(\tpl x)$ with \emph{help relations of schema $\tau$}, if, for every database $\db$, there is a $\tau$-structure $H$ over the same domain such that for every $k$-tuple $\tpl a$ over the domain of $\db$ it holds\footnote{This notion should not be confused with definability of the query $q$ in existential second-order logic. In the latter case, the relations can be chosen depending on $\tpl a$, but here the relations need to ``work'' for all tuples $\tpl a$.}: $\tpl a \in q(\db)$ if and only if $(\db,H)\models \varphi(\tpl a)$.

The proof is by contradiction and proceeds in the same way in both
cases, (a) and~\mbox{(b)}.  Our goal is to show that, under the assumption that there is a dynamic program for (a) or~(b), the transitive closure of path graphs, that is, graphs that consist of a single directed path, can be expressed with unary help relations, contradicting the following lemma from~\cite{Zeume15}, which is not hard to prove with the help of locality arguments.

\begin{lemma}[{\cite[Lemma 4.3.2]{Zeume15}}]\label{lemma:tc:pathgraphs}
The transitive closure of path graphs cannot be expressed by
a first-order formula with unary help relations.
\end{lemma}

We refer to
Figure~\ref{figure:tc:general:insertiona} for an illustration of the
following high-level sketch. We start from an arbitrary path graph $G_0$ and equip it with some unary
relations $C_0,C_1,C_2$. From $G_0,C_0,C_1,C_2$ we define a graph $G$ in a
first-order fashion, whose simple directed paths have length at most
2, so the transitive closure relation $\TC$ of $G$ is definable by a first-order formula. 
Finally, the crucial step happens: the change operation $\delta$ 
transforms $G$ into a graph $G'=\delta(G)$ with the property
that $\Qreach(G_0)$ can be defined from $\Qreach(G')$ by a first-order formula.
We can conclude that $\Qreach(G_0)$ can be defined by a first-order formula with the help
of suitable unary help relations,  since all steps from $G_0$ to $G'$ are
first-order definable, $\TC$ is first-order definable from $G$, and we
assume that there is a dynamic program that computes $\Qreach(G')$
from $G$, $\TC$ and some unary auxiliary relations. This contradicts Lemma \ref{lemma:tc:pathgraphs}.

For (a), we use the insertion query $\rho = \mf{E}{x,y}{} \df E(x,y) \lor
\big( E(x,x) \land E(y,y) \land E(y,x) \big)$ that adds all edges
$(x,y)$ for which there is an edge $(y,x)$ and both $x$ and $y$ have
self-loops.  We assume that there is a \DynFO-program $\prog$ that
maintains the reachability query on directed graphs under
insertion queries $\{\insertion{E}{\tpl p}, \rho\}$. We further assume that $\prog$ uses (only) unary auxiliary relations $B_1,\ldots,B_k$, for some $k$, besides the binary relation $Q$ intended to store the query result. We show how to construct from $\prog$ a first-order formula $\varphi$ that expresses the reachability query $\Qreach$ for simple paths with unary help relations $B_1,\ldots,B_k, C_0,C_1,C_2$, contradicting Lemma \ref{lemma:tc:pathgraphs}.

Let $G_0=(V_0,E_0)$ be a simple path with $V_0=\{v_0,\ldots,v_n\}$, for which we
want to define $\Qreach$ using unary help relations $B_1,\ldots, B_k, C_0, C_1, C_2$. Let $C_0,C_1,C_2$ be defined by $C_i=\{v_j\mid 0\le j\le n, j \equiv_3 i\}$ where $\equiv_3$ denotes modulo 3 equivalence.
From $G_0$ and $C_0, C_1, C_2$ we define the following graph $G$ with nodes $v_0,\ldots,v_n$.
The graph $G$ has an edge from vertex $v$ to $w$ if one of the following cases holds:
\begin{itemize}
\item $v \in C_0, w \in C_1$ and $(v,w)$ is an edge in $G_0$, 
\item  $v \in C_1, w \in C_2$ and $(w,v)$ is an edge in $G_0$, 
\item $v \in C_2, w \in C_0$ and $(v,w)$ is an edge in $G_0$, or
\item $v \in C_1\cup C_2$ and $v=w$.
\end{itemize}
We observe that the graph $G$ can be first-order defined from $G_0$ and $C_0, C_1, C_2$.

  \begin{figure}[ht] 
\begin{tikzpicture}[
      xscale=1.0,
      yscale=1.0,
      font=\footnotesize,
    ]
          \node (tmp) at (-1,0) {$G_0$:};
          \node (0) at (0,0)[mnode, label=below:$v_0$, label=above:$C_0$] {};
          \node (1) at (1,0)[mnode, label=below:$v_1$, label=above:$C_1$] {};
          \node (2) at (2,0)[mnode, label=below:$v_2$, label=above:$C_2$] {};
          \node (3) at (3,0)[mnode, label=below:$v_3$, label=above:$C_0$] {};
          \node (4) at (4,0)[mnode, label=below:$v_4$, label=above:$C_1$] {};
          \node (5) at (5,0)[mnode, label=below:$v_5$, label=above:$C_2$] {};
          \node (6) at (6,0)[mnode, label=below:$v_6$, label=above:$C_0$] {};
          \node (7) at (7,0)[mnode, label=below:$v_7$, label=above:$C_1$] {};
          \node (8) at (8,0)[mnode, label=below:$v_8$, label=above:$C_2$] {};
          \node (9) at (9,0)[mnode, label=below:$v_9$, label=above:$C_0$] {};
          \node (10) at (10,0)[mnode, label=below:$v_{10}$, label=above:$C_1$] {};
          \node (11) at (11,0)[mnode, label=below:$v_{11}$, label=above:$C_2$] {};
 
           \draw [dEdge] (0) to (1);
           \draw [dEdge] (1) to (2);
           \draw [dEdge] (2) to (3);
           \draw [dEdge] (3) to (4);
           \draw [dEdge] (4) to (5);
           \draw [dEdge] (5) to (6);
           \draw [dEdge] (6) to (7);
           \draw [dEdge] (7) to (8);
           \draw [dEdge] (8) to (9);
           \draw [dEdge] (9) to (10);
           \draw [dEdge] (10) to (11);

    \end{tikzpicture}  
  
    \begin{tikzpicture}[
      xscale=1.0,
      yscale=1.0,
      font=\footnotesize,
    ]
          \node (tmp) at (-1,0) {$G$:};
          \node (0) at (0,0)[mnode, label=below:$v_0$] {};
          \node (1) at (1,0)[mnode, label=below:$v_1$] {};
          \node (2) at (2,0)[mnode, label=below:$v_2$] {};
          \node (3) at (3,0)[mnode, label=below:$v_3$] {};
          \node (4) at (4,0)[mnode, label=below:$v_4$] {};
          \node (5) at (5,0)[mnode, label=below:$v_5$] {};
          \node (6) at (6,0)[mnode, label=below:$v_6$] {};
          \node (7) at (7,0)[mnode, label=below:$v_7$] {};
          \node (8) at (8,0)[mnode, label=below:$v_8$] {};
          \node (9) at (9,0)[mnode, label=below:$v_9$] {};
          \node (10) at (10,0)[mnode, label=below:$v_{10}$] {};
          \node (11) at (11,0)[mnode, label=below:$v_{11}$] {};
 
           \draw [dEdge] (0) to (1);
           \draw [dEdge] (2) to (1);
           \draw [dEdge] (2) to (3);
           \draw [dEdge] (3) to (4);
           \draw [dEdge] (5) to (4);
           \draw [dEdge] (5) to (6);
           \draw [dEdge] (6) to (7);
           \draw [dEdge] (8) to (7);
           \draw [dEdge] (8) to (9);
           \draw [dEdge] (9) to (10);
           \draw [dEdge] (11) to (10);

           \path[->] (1) edge  [dEdge,loop above] node {} ();
           \path[->] (2) edge  [dEdge,loop above] node {} ();
           \path[->] (4) edge  [dEdge,loop above] node {} ();
           \path[->] (5) edge  [dEdge,loop above] node {} ();
           \path[->] (7) edge  [dEdge,loop above] node {} ();
           \path[->] (8) edge  [dEdge,loop above] node {} ();
           \path[->] (10) edge  [dEdge,loop above] node {} ();
           \path[->] (11) edge  [dEdge,loop above] node {} ();
                      
    \end{tikzpicture}

        \begin{tikzpicture}[
      xscale=1.0,
      yscale=1.0,
      font=\footnotesize
    ]
          \node (tmp) at (-1,0) {$G'$:};
          \node (0) at (0,0)[mnode, label=below:$v_0$] {};
          \node (1) at (1,0)[mnode, label=below:$v_1$] {};
          \node (2) at (2,0)[mnode, label=below:$v_2$] {};
          \node (3) at (3,0)[mnode, label=below:$v_3$] {};
          \node (4) at (4,0)[mnode, label=below:$v_4$] {};
          \node (5) at (5,0)[mnode, label=below:$v_5$] {};
          \node (6) at (6,0)[mnode, label=below:$v_6$] {};
          \node (7) at (7,0)[mnode, label=below:$v_7$] {};
          \node (8) at (8,0)[mnode, label=below:$v_8$] {};
          \node (9) at (9,0)[mnode, label=below:$v_9$] {};
          \node (10) at (10,0)[mnode, label=below:$v_{10}$] {};
          \node (11) at (11,0)[mnode, label=below:$v_{11}$] {};
 
           \draw [dEdge] (0) to (1);
           \draw [dEdge, bend left] (1) to (2);
           \draw [dEdge, bend left] (2) to (1);
           \draw [dEdge] (2) to (3);
           \draw [dEdge] (3) to (4);
           \draw [dEdge, bend left] (4) to (5);
           \draw [dEdge, bend left] (5) to (4);
           \draw [dEdge] (5) to (6);
           \draw [dEdge] (6) to (7);
           \draw [dEdge, bend left] (7) to (8);
           \draw [dEdge, bend left] (8) to (7);
           \draw [dEdge] (8) to (9);
           \draw [dEdge] (9) to (10);
           \draw [dEdge, bend left] (10) to (11);
           \draw [dEdge, bend left] (11) to (10);

           \path[->] (1) edge  [dEdge,loop above] node {} ();
           \path[->] (2) edge  [dEdge,loop above] node {} ();
           \path[->] (4) edge  [dEdge,loop above] node {} ();
           \path[->] (5) edge  [dEdge,loop above] node {} ();
           \path[->] (7) edge  [dEdge,loop above] node {} ();
           \path[->] (8) edge  [dEdge,loop above] node {} ();
           \path[->] (10) edge  [dEdge,loop above] node {} ();
           \path[->] (11) edge  [dEdge,loop above] node {} ();
    \end{tikzpicture}

    \caption{The graphs from the proof of Theorem \ref{theorem:tc:general:insertion}(a).}
      \label{figure:tc:general:insertiona}
      
  \end{figure}
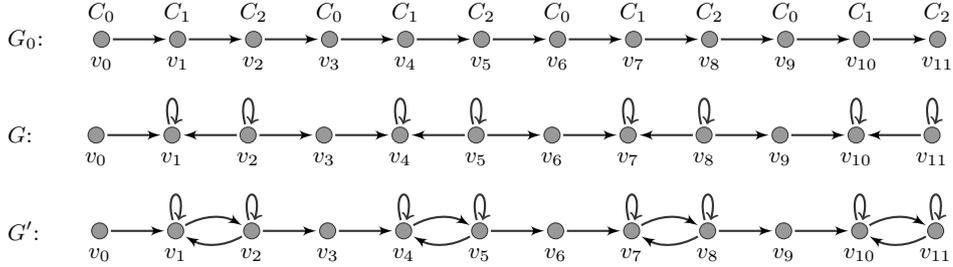

 Let $\delta\df\rho$ and\footnote{Since $\rho$ is parameter-free,
   the insertion query $\rho$  and its corresponding change
 operation $\delta$ are basically the same.} $G'\df\delta(G)$. The graphs $G_0$, $G$ and $G'$ for $n = 11$ are depicted
in Figure~\ref{figure:tc:general:insertiona}. By our assumption, the
update formula $\psi=\uf{Q}{\delta}{x_1,x_2}{}$ of $\prog$ for the
query relation $Q$ and operation $\delta$ defines the reachability
query for $\delta(G)=G'$ with the help of suitable auxiliary relations
$B_1,\ldots,B_k$ and the transitive closure $\TC$ of the edge relation of~$G$. 

Altogether, $G=f(G_0,C_0,C_1,C_2)$, for some first-order definable
function $f$, $\TC$ is first-order definable from $G$, $G'=\delta(G)$, $\Qreach(G')$ is first-order definable
from $G$, $\TC$ and $B_1,\ldots,B_k$, and therefore
\[\Qreach(G_0) = \Qreach(G') \setminus \{(w,v) \mid v \in C_1, w \in
C_2, (v,w) \text{ is an edge in } G_0\}\] is first-order definable
from $G_0$, $C_0,C_1,C_2$, and $B_1,\ldots,B_k$, contradicting Lemma \ref{lemma:tc:pathgraphs}, as desired. 

The proof for (b) and the extension to ordered graphs can be found in the full version of this paper.
\end{proof}

We now turn towards inexpressibility by quantifier-free update formulas. Very likely quantifier-free update formulas are too weak to maintain
$\Qreach$ even under single-tuple changes. Yet only restricted
inexpressibility results have been obtained so far. The query
$\Qreach$ cannot be maintained in \DynProp under single-tuple changes when the auxiliary relations are at most binary or when the initialization is severely restricted \cite{ZeumeS15}. For the more general alternating reachability query quantifier-free update formulas do not suffice~\cite{GeladeMS12}.
The next result shows that $\Qreach$ cannot be maintained in \DynProp,
even if besides
single-edge insertions only a single, very simple deletion query is allowed.

\begin{theorem}\label{theorem:tc:lower:quantifier-free}
  There is a quantifier-free deletion query $\rho$ with one parameter such that $(\Qreach,\Delta_{E}\cup\{\rho\})$ cannot be maintained in \DynProp.
\end{theorem}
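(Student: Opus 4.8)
The plan is to exploit the locality of quantifier-free update formulas. For a \DynProp program the new value of an auxiliary tuple $T(\tpl b)$ after a change $\rho(\tpl a)$ depends only on $\tpl b$, on $\tpl a$, and on the restrictions of the input and auxiliary relations to the elements occurring in $\tpl b$ and $\tpl a$. This yields a \substruclemma in the style of~\cite{ZeumeS15}: if two states of the program agree on the substructure induced by a set $A$ of elements, and a change sequence is applied that uses only parameters from $A$, then the resulting states still agree on the $A$-substructure; in particular they agree on whether any fixed pair over $A$ belongs to the query relation $Q$.

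For $\rho$ I would take the deletion query with $\mf{E}{p}{x,y}=E(x,y)\wedge x\neq p\wedge y\neq p$, which removes all edges incident to the parameter~$p$. Assume for contradiction that a \DynProp program $\prog$ with auxiliary relations of arity at most~$r$ maintains $(\Qreach,\Delta_E\cup\{\rho\})$. Over a sufficiently large domain I build, using only single-edge insertions from $\Delta_E$, the graph $G$ that is the disjoint union of $N$ directed paths $a_\ell\to b_\ell\to c_\ell$ of length two. Fix three pairwise distinct indices $\ell_1,\ell_3$ and $m^\ast$. For every further index $m\notin\{\ell_1,\ell_3\}$ let $G_m$ be obtained from $G$ by inserting the two edges $c_{\ell_1}\to a_m$ and $c_m\to a_{\ell_3}$, which links the three paths $\ell_1,m,\ell_3$ into a single route from $a_{\ell_1}$ to $c_{\ell_3}$. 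Applying $\rho(b_{m^\ast})$ to $G_m$ then leaves $a_{\ell_1}$ and $c_{\ell_3}$ connected precisely if $m\neq m^\ast$: the route avoids $b_{m^\ast}$ and survives, whereas for $m=m^\ast$ it runs through $b_{m^\ast}$, is cut, and no other route exists. Crucially, none of the graphs $G_m$ contains an edge among the three elements $a_{\ell_1},c_{\ell_3},b_{m^\ast}$, so all of them agree on the $E$-part of that substructure.

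It then suffices to find one index $m\neq m^\ast$ for which the states reached by $\prog$ on $G_m$ and on $G_{m^\ast}$ agree on the substructure induced by $\{a_{\ell_1},c_{\ell_3},b_{m^\ast}\}$. Applying the further change $\rho(b_{m^\ast})$ to both states then gives, by the \substruclemma, states that agree on whether $(a_{\ell_1},c_{\ell_3})\in Q$, which is impossible since $a_{\ell_1}$ reaches $c_{\ell_3}$ in $\rho(b_{m^\ast})(G_m)$ but not in $\rho(b_{m^\ast})(G_{m^\ast})$. To produce such an~$m$ I would use a Ramsey/pigeonhole argument: since $r$ and the auxiliary schema are fixed, there are only constantly many isomorphism types of the restriction of a program state to a constant-size tuple of elements, so among enough of the $N$ paths of $G$ one finds a large homogeneous family of indices; chasing the resulting isomorphisms through the two edge insertions, again via the \substruclemma, transports the route from one index to another while keeping the triple $a_{\ell_1},c_{\ell_3},b_{m^\ast}$ and its auxiliary data fixed, forcing the states for $G_m$ and $G_{m^\ast}$ to coincide on that triple for a suitable~$m$.

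The main obstacle is exactly this last step. It amounts to showing that no bounded-arity auxiliary schema maintained by quantifier-free formulas can record whether a given vertex is a cut vertex separating $a_{\ell_1}$ from $c_{\ell_3}$ — $2$-connectivity information that a $1$-local update formula cannot recover from the induced substructure at the moment the query is asked. Making this argument uniform over all auxiliary schemas is what forces the homogeneity extracted from Ramsey's theorem to be strong enough that a ``route through $b_{m^\ast}$'' instance and a ``route avoiding $b_{m^\ast}$'' instance become indistinguishable on the distinguished three elements; the careful bookkeeping of which elements serve as fixed context and which are permuted by the isomorphisms is the delicate part of the construction, and a variant of this choice of $\rho$ and of $G$ should also settle the ordered case along the lines of Theorem~\ref{theorem:tc:general:insertion}.
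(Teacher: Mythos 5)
Your overall framework---the \substruclemma for quantifier-free updates plus a Ramsey/homogeneity argument, following \cite{ZeumeS15} and \cite{Zeume14}---is the same as the paper's, but your concrete construction has a gap that is fatal rather than merely ``the delicate part'': the adversary scenario you set up can actually be maintained by a simple \DynProp program, so the indistinguishability you need in the last step is provably unattainable. Concretely, consider the program whose only auxiliary relation is the transitive closure $T$, maintained under single-edge insertions by the quantifier-free formula of Example~\ref{example:tcinsert}, and updated under your deletion $\mf{E}{p}{x,y}=E(x,y)\wedge x\neq p\wedge y\neq p$ by $\varphi_T(p;x,y)=T(x,y)\wedge\neg\bigl(T(x,p)\wedge T(p,y)\wedge x\neq p\wedge y\neq p\bigr)$. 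Since your graphs $G_m$ are disjoint unions of simple directed paths, paths between any two vertices are unique, so ``$b_{m^\ast}$ lies on the route from $a_{\ell_1}$ to $c_{\ell_3}$'' is exactly equivalent to $T(a_{\ell_1},b_{m^\ast})\wedge T(b_{m^\ast},c_{\ell_3})$, which is part of the auxiliary data \emph{on the distinguished triple itself}. Hence the states for $G_m$ ($m\neq m^\ast$) and $G_{m^\ast}$ already differ on the substructure induced by $\{a_{\ell_1},c_{\ell_3},b_{m^\ast}\}$, no homogeneous family can make them agree there, and this program answers your single final deletion correctly. Your claim that bounded-arity quantifier-free programs cannot record cut-vertex information is false in this setting: binary reachability restricted to the triple records it exactly.

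The missing idea is that the lower bound must defeat programs of \emph{every} arity $k$, which forces a change sequence whose length exceeds $k$ and whose outcome depends on a property of the \emph{set} of all $k+1$ deletion parameters---something $k$-ary auxiliary relations over a monochromatic parameter set cannot encode. The paper does this with the deletion $\mf{E}{p}{x,y}=E(x,y)\wedge\neg E(p,x)$, a graph $\{s,t\}\cup A\cup C$ in which a node $a_X$ for each $(k+1)$-subset $X\subseteq C$ keeps its edge to $t$ after the sequence $\rho(c_1),\dots,\rho(c_{k+1})$ precisely when $X=\{c_1,\dots,c_{k+1}\}$ and $a_X$ was placed in an adversarially chosen set $B$; a combinatorial lemma then supplies a monochromatic $C'\subseteq C$ containing both a ``yes'' and a ``no'' $(k+1)$-subset, and the \substruclemma gives the contradiction. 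To repair your proof you would need to redesign both the deletion query and the graph so that the answer after the change sequence is a genuinely $(k{+}1)$-ary function of the parameters, rather than something recoverable from binary reachability on a constant-size tuple.
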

\begin{proof}
For the proof, we combine the standard tool for obtaining
inexpressibility results for $\DynProp$, the Substructure Lemma
\cite{ZeumeS15, GeladeMS12}, with a combinatorial technique based on
upper and lower bounds of Ramsey numbers \cite{Zeume14}. 

The intuition behind the Substructure Lemma is as follows. When
updating an auxiliary tuple $\tpl d$ after a quantifier-free change
parameterized by  $\tpl p$, a quantifier-free update formula only has
access to $\tpl d$ and $\tpl p$. Thus, if a change operation  changes
a tuple \emph{inside} a substructure  $\calA$ of a state $\calS$ of a
dynamic program, the auxiliary data of $\calA$ is not affected by any
information from \emph{outside} of $\calA$. In particular, two
isomorphic substructures $\calA$ and $\calB$ remain isomorphic, when
corresponding changes are applied to them. The Substructure Lemma is
formally stated in \cite[Lemma 2]{ZeumeS15}. Even though the lemma is phrased for single-tuple changes only,
the same proof, using the intuition explained above, extends to
quantifier-free replacement queries.

For the actual proof, we assume, towards a contradiction, that there is a quantifier-free
dynamic program $\prog$ over schema $\tau$ of arity $k$ that maintains
$\Qreach$ under the quantifier-free deletion $\rho(p) = \mu(p;
x,y) \df E(x,y) \wedge \neg E(p, x)$ 
which deletes an edge $(x,y)$ if there is an edge $(p,x)$. Our goal is
to construct a graph $G$ such that not all change sequences of length
$k+1$ can be maintained, no matter the initial
auxiliary data.  
  
Let $n$ be a sufficiently large number, to be specified later. The vertex set of the graph is of the form $\{s,t\}\cup A \cup C$, for some disjoint sets $A$ and~$C$, with $|C| = n$.  
The set $A$ contains a node for every subset of size $k+1$ of $C$, that is, $A \df \{a_X \mid X \subseteq C \text{ and } |X| = k+1\}$. 
Let~$B$ be a subset of $A$, to be specified later.

  The graph has the following edges:
  \begin{enumerate}
   \item For each $a_X \in A$ there is an edge $(s, a_X)$. 
   \item For each $a_X \in B$ there is an edge $(a_X, t)$.  
   \item There is an edge $(c, a_X)$ for nodes $c \in C, a_X \in A$ if $c \not\in X$.
  \end{enumerate}

  Intuitively, the nodes in $C$ control how edges from $A$ to $t$ can be removed. Each node $c \in C$ is connected to a subset $A' \subseteq A$, and thus applying a change $\rho(c)$ will result in removing all edges $(a_X, t)$ for all $a_X \in  A'$. The graph is constructed in such a way that
  \begin{enumerate}
    \item[($\star$)] for a change sequence $\alpha=(\rho(c_1),\ldots, \rho(c_{k+1}))$ with $|\{c_1, \ldots, c_{k+1}\}| = k+1$ it holds $(s,t)\in \Qreach(\alpha(G))$ if and only if $a_{\{c_1,\ldots,c_{k+1}\}} \in B$. 
  \end{enumerate}

  To see this, observe that after applying changes $\rho(c_1), \ldots, \rho(c_{k+1})$, all edges $(a_X,t)$ are deleted, for which $\{c_1,\ldots,c_{k+1}\}\not\subseteq X$.
  Thus at most the edge $(a_{\{c_1,\ldots,c_{k+1}\}}, t)$ is still present. However, this edge was at all present in the graph if and only if $a_{\{c_1,\ldots,c_{k+1}\}} \in B$.

  For choosing the size of $C$ and the set $B$, we employ the
  combinatorial Lemma 2 from~\cite{Zeume14}. The lemma guarantees
  that, depending on the schema $\tau \cup \{c_s, c_t\}$, there is an $n_0$ such that
  for every $n>n_0$ there is some $m$ such that the following holds.
  \begin{description}
  \item[(S1)] For every state $\state$ of the dynamic program for $G$,
    and each set $C$ with at least $n$ vertices of $G$ with a
    linear order $<$, there is a subset $C'$ of $C$ of size at least
    $m$ such that the $k$-ary auxiliary data on $C'$ is
    <-monochromatic in the structure $(\state, s, t)$, i.e. all
    $<$-ordered $k$-tuples over $C'$ have the same quantifier-free
    type (including their relationships to the interpretations $s, t$
    of the constants $c_s, c_t$). 
  \item[(S2)] There is a subset
    $B$ of $A$ such that for every subset $\hat{C}$ of $C$ of size
    $m$, there are $(k+1)$-element sets
    $Y = \{c_1, \ldots, c_{k+1}\}, Y' = \{c'_1, \ldots, c'_{k+1}\}
    \subseteq \hat{C}$ with $a_Y \in B$ and $a_{Y'} \notin B$.
  \end{description}

  We outline how the graph $G$ is used to obtain a contradiction. Let $\state$ be a state of the dynamic program for the graph $G$ with $|C| = n > n_0$ and let $<$ be a linear order. Choose $B$ as described above and a subset $C'$ of $C$ of size $|C'| =m$ that is <-monochromatic in $(\state, s, t)$.  Choose $Y = \{c_1, \ldots, c_{k+1}\}, Y' = \{c'_1, \ldots, c'_{k+1}\} \subseteq C'$  with $a_Y \in B$ and $a_{Y'} \notin B$. By the Substructure Lemma from \cite{ZeumeS15} generalized to quantifier-free changes, the dynamic program $\prog$ yields the same result for the tuple $(s,t)$ for the change sequences $(\rho(c_1),\ldots, \rho(c_{k+1}))$ and $(\rho(c'_1),\ldots, \rho(c'_{k+1}))$ since $C'$ is <-monochromatic.  Yet the result should be different due to~($\star$) and $a_Y \in B$, $a_{Y'} \notin B$. This is a contradiction.
\end{proof}

Finally, we turn to lower bounds for the maintenance of languages. We
exhibit an example that illustrates that maintaining regular languages
under full first-order replacement queries might be hard: there
is a regular language $L$ that  can 
be maintained in \DynFO under single-tuple changes with nullary auxiliary relations, but there is a
relatively simple replacement query, for which this no longer holds.
This is no general hardness result, as we only allow very restricted
auxiliary relations, but it demonstrates the barrier of our
techniques. The proof of the following result can be found in the full version of this paper.

\begin{theorem}\label{theorem:languages:EAFO}
There is a regular language $L$ over some alphabet $\Sigma$ and a replacement query $\rho$, such that $(q_L,\Delta_\Sigma)$  can 
be maintained in \DynFO with nullary auxiliary relations, but not
$(q_L,\Delta_\Sigma\cup\{\rho\})$.
\end{theorem}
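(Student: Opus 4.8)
The plan is to exhibit an explicit witness pair $(L,\rho)$. Over the one-letter alphabet $\Sigma=\{a\}$, let $L$ be the regular language of all words containing an \emph{odd} number of $a$'s; in the string encoding of Section~\ref{section:preliminaries} this is exactly the set of databases in which $|R_a|$ is odd. Let $\rho$ be the parameterised, quantifier-free replacement query whose single rule replaces $R_a$ by the formula $x\le p$. Applying $\rho(j)$ to any database over the domain $\{1,\dots,n\}$ sets $R_a=\{1,\dots,j\}$ and therefore yields the word $a^{j}$, regardless of the previous content.

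For the positive part I would maintain $(q_L,\Delta_\Sigma)$ already in $\DynProp$ (hence in $\DynFO$), using a single nullary auxiliary relation $Q$ which also serves as the query relation. The invariant is that $Q$ holds iff $|R_a|$ is odd. This holds vacuously in the initially empty state — there $Q$ is empty and $|R_a|=0$ is even — so no initialization is needed. The update formulas are quantifier-free: under $\insertion{R_a}{p}$ one sets $\varphi_Q(p):=(Q\wedge R_a(p))\vee(\neg Q\wedge\neg R_a(p))$, since the parity of $|R_a|$ flips exactly when $p$ was not yet in $R_a$; dually, under $\deletion{R_a}{p}$ one sets $\varphi_Q(p):=(Q\wedge\neg R_a(p))\vee(\neg Q\wedge R_a(p))$. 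A one-line induction on the change sequence shows the invariant is preserved, so after every non-empty sequence $Q$ carries the correct answer.

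For the negative part I would argue by contradiction. Suppose some dynamic program $\prog$ with only nullary auxiliary relations — and with an arbitrary but fixed initialization of those relations — maintains $(q_L,\Delta_\Sigma\cup\{\rho\})$, and let $k$ bound the quantifier rank of all its update formulas. The key observation is that $\prog$ must already be correct on change sequences of length one. Fix $n:=2^{k+2}$ and the domain $\{1,\dots,n\}$, and for $j\in\{1,\dots,n\}$ consider $\alpha_j:=(\rho(j))$. Since $\alpha_j$ applied to the empty input yields $a^{j}$, which lies in $L$ iff $j$ is odd, correctness forces the (single, fixed) initial state of $\prog$ to satisfy $\varphi^{\rho}_Q(j)$ precisely when $j$ is odd, where $\varphi^{\rho}_Q$ is the update formula for the query relation under $\rho$. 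But that initial state consists only of the empty input word, the built-in linear order with the constants $\min,\max$, and fixed truth values for the nullary auxiliary relations; hence $\varphi^{\rho}_Q(x)$ would define the set of odd numbers of $\{1,\dots,n\}$ over this fixed linear-order structure. This is impossible: by a standard Ehrenfeucht--Fra\"iss\'e argument on finite linear orders, distinguishing a position $j\approx n/2$ (so that $j$ and $j+1$ are both at least $2^{k}$ apart from each of $1$ and $n$) yields pointed structures that are $\equiv_k$-equivalent — the empty unary relation $R_a$ and the fixed nullary bits add no distinguishing power — so $\varphi^{\rho}_Q$ cannot separate $j$ from $j+1$, although exactly one of them is odd. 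Contradiction.

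I expect the only delicate point — and in that sense the main obstacle — to be conceptual rather than computational: one must notice that a single application of $\rho$ already traps the program, so that no preprocessing by earlier changes can help, and that, since the auxiliary relations are nullary, no initialization can make $\varphi^{\rho}_Q$ depend on the parameter $j$ in the way required. The argument in fact rules out $\DynFO$, not merely $\DynProp$; as the surrounding text stresses, this illustrates a limitation of the present techniques rather than a general hardness result, since $(q_L,\Delta_\Sigma\cup\{\rho\})$ becomes maintainable once, say, a built-in $\mathrm{BIT}$ predicate is available.
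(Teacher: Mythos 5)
Your proposal is correct as a proof of the statement as written, but it proves it with a different witness and a substantially different (and much shorter) negative argument than the paper. The paper takes $L$ to be the strings over $\{a,b\}$ with an even number of $a$'s and, crucially, a \emph{parameter-free} FO replacement query (relabel a $b$ to $a$ whenever its predecessor carries an $a$); its lower bound then needs a pigeonhole over $2^{m+1}$ carefully chosen strings versus the $2^m$ possible valuations of the nullary bits, an Ehrenfeucht--Fra\"iss\'e invariant showing the two chosen strings stay $\equiv_k$ throughout repeated applications of $\rho$, and a parity-evolution argument showing the correct answers eventually diverge. You instead exploit a one-parameter replacement query that overwrites $R_a$ with the initial segment $\{1,\dots,j\}$, so that already the length-one change sequence $\rho(j)$ forces the update formula for $Q$ to define the odd positions of a bare linear order from the empty initial state --- impossible by a standard EF argument. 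Both arguments are sound; yours buys brevity and the stronger conclusion that a single change suffices, while the paper's buys a parameter-free witness, which is the more telling example in the context of Section~\ref{section:ac1} (it shows the obstruction is not merely the ability of a parameter to name an arbitrary position) and which exercises the ``track indistinguishability through a long change sequence'' technique that the section is explicitly advertising as the barrier. The only caveat is that if one insists on reading the theorem as implicitly about a parameter-free $\rho$ (as the paper's own witness is), your construction would not qualify; the statement as printed does not impose that restriction, so your proof stands.
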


\section{Conclusion}\label{section:conclusion}
In this paper, we studied the maintainability of queries in the Dynamic Complexity setting under first-order defined replacement queries. The main insight of this study is that many maintainability results carry over  from the single-tuple world to settings with more general change operations. We were actually quite surprised to see that so many positive results survive this transition. However, many questions remain open, for instance: To which extent can the reachability query for (undirected or acyclic) graphs be maintained under definable deletions? What about reachability for unrestricted directed graphs under definable insertions? What about other queries? Are binary auxiliary relations sufficient in Theorem~\ref{theorem:tc:undirected:insertion}?

We were less surprised by the fact that stronger change operations can yield inexpressibility, but even these results required some care. Our main contribution in that respect is the proof that \DynProp cannot maintain the reachability query under quantifier-free replacement queries.

From Theorem~\ref{theorem:ac1}  about parameter-free changes and its proof, we take another insight regarding inexpressibility proofs: the squirrel technique is quite powerful to prepare an update program for a non-constant (i.e., logarithmic) number of changes. Inexpressibility proofs need to take that into account and to argue ``around it''.

 \bibliography{schwentick}

\section*{Appendix}\label{section:appendix}
\begin{proofof}{Lemma~\ref{lemma:tc:acyclic:boundnewedges}}
Let $\ell$ be the arity of the parameter tuple of $\rho$ and $m'$ the number of $\FO[0,\ell+1]$-types of graphs.  We choose $m=m'+1$. Let further $G$ be a directed, acyclic graph, and let $u,v$ be nodes of $G$.

As in the proof Lemma~\ref{lemma:tc:undirected:boundnewedges}, we show that each path $\pi$ from $u$ to $v$ with $q>m'$ bridges can be transformed into a path with less than $q$ bridges, unless a cycle with at most $m$ bridges is introduced.

To this end, let  $(u_1,v_1),\ldots,(u_q,v_q)$ be the bridges in $\pi$. Since $q$ is larger than the number $m'$ of $\FO[0,\ell+1]$-types, there are $i,j$ with $1 \leq i < j \leq m'$ such that $(G,v_i,\tpl a)$ has the same $\FO[0]$-type as $(G,v_j,\tpl a)$.
We distinguish three cases. In case (1), the edge $(v_i,u_i)$ is in $G$ and thus $\delta$ introduces a cycle of length 2. In case (2), the edge $(v_j,u_i)$ is in $G$ and, together with the sub-path from $u_i$ to $v_j$, constitutes a cycle with at most $m$ bridges. In case (3), $u_i$ is neither connected to $v_i$ nor to $v_j$ by an edge. Therefore  $(u_i,v_i,\tpl a)$ and $(u_i,v_j,\tpl a)$ have the same $\FO[0]$-type and $\delta$ inserts an edge $(u_i,v_j)$ as well, the desired shortcut.
\end{proofof}

\begin{proofof}{Theorem~\ref{theorem:ac1}, continued}
  For simplicity, 
we present a more detailed proof for ordered graphs ($\inpSchema = \{E, \leq\}$), Boolean queries, $d=2$ and $k=1$. That is, there are only
two change operations, $\rho_0$ and $\rho_1$, and $q(G)$ can be obtained by $\log n$ applications of $\varphi_q$. The proof can easily be generalized to the case of general structures as well as arbitrary $d$ and $k$.

Let $G$ be an ordered graph with $n$ vertices. 
For simplicity we assume that $n$ is a power of~$2$.

We encode change sequences by elements of the domain as follows:  A sequence $\alpha=\delta_1\cdots \delta_{\log n}$ is encoded by the node $w_\alpha$, whose bit string representation (when viewed as a number) has 1 at position $i$ if and only if $\delta_i=\rho_1$.  We denote the change sequence encoded by node $w$ as $\alpha_w$.

We first describe the auxiliary relations needed for time points $t\ge \log n$. 
We denote the length $i$ prefix of a change sequence $\alpha$ by $\alpha[..i]$. 
In the following, we assume that relations for the arithmetic operations $+$, $\times$ and the $\BIT$-predicate are provided by the initialization\footnote{In fact, the $\BIT$-predicate is sufficient, as $+$ and $\times$ are \FO-definable from $\BIT$ \cite{ImmermanDC}.}. 

\begin{itemize}
 \item Relation $F$ represents the graphs that result when  particular change sequences occur during the next $\log n$ steps.  The tuple $(w,u,v)$ is in $F$ if after applying $\alpha_w$ to the current graph $G$, the edge $(u,v)$ is included in the resulting graph $\alpha_w(G)$. 
 \item Relation $T$ contains the temporary query result after some applications of $\varphi_q$ to a modified graph: The tuple $(i,w,\tpl a)$ is in $T$ if after $i+1$ applications of $\varphi_q$ to $\alpha_w[..j](G)$, the tuple $\tpl a$ is included in the defined relation, for $i \leq \log n$ and $j=(\log n)-i-1$. Here, $j$ is the length of the prefix of $\alpha_w$ that still needs to be applied to the current graph $G$ to obtain the target graph. In particular, a tuple $\tpl a$ is in the query result for the current graph if $(\log n,w,\tpl a) \in T$ for all $w$.
 \item The query relation $Q$.
\end{itemize}

It turns out that maintaining $F$ and $T$ is even easier than indicated in the above sketch since the computation that starts at time $t+1$ can reuse information computed by that at time $t$ and so forth.

Relation $F$ is very easy to \emph{maintain}. We recall that its first version (for time $t=0$) is given by the initialization. Whether a tuple $(w,u,v)$ is in $F$ \emph{after} change operation $\delta$ can be determined as follows: Let $\delta'$ be the last symbol (operation) of $\alpha_w$ and $\alpha'$ be its prefix of length $(\log n)-1$, hence $\alpha_w=\alpha'\delta'$. Let  $G'$ be the graph represented by all tuples of the form $(w_{\delta \alpha'},\cdot,\cdot)$. Then $(w,u,v)$ is in $F$ after applying $\delta$ if and only if $(u,v)$ is in $\delta'(G')$.

Relation $T$ can be maintained similarly.  A tuple $(0,w,\tpl a)$ is in $T$ if $\tpl a$ is in $\varphi_q(\emptyset)$ applied to the graph $F(w,\cdot, \cdot)$. A tuple $(i+1,w,\tpl a)$ is in $T$ after operation $\delta$, for $i<\log n$, if and only if the tuple $\tpl a$ is in the relation $\varphi_q(A)$, where $A$ consists of all tuples $\tpl b$, for which $(i,w_{\delta\alpha'},\tpl b)$ is in $T$ before operation $\delta$, where $\alpha'$ is as above. Here we assume, without loss of generality, that $\varphi_q(R)$ uses the graph relation only when $R$ is empty.

The query relation $Q$ can be obtained from the tuples of the form $(\log n,w,\tpl a)$ in $T$.
\end{proofof}

\begin{proofof}{Theorem~\ref{theorem:languages:EFO}}
We proceed in two steps. We first prove the result for quantifier-free change operations and show afterwards that the case of $\EFO$-definable change operations can be replaced  reduced by quantifier-free change operations.

For the first step, we simply adapt the proof of \cite[Theorem 4.1]{GeladeMS12}. 

Let $L$ be a context-free language over alphabet $\Sigma$ and $G$ a corresponding context-free grammar in Chomsky normal form, where we additionally allow rules of the form $X \rightarrow \epsilon$ and assume the existence of a non-terminal $E$ with $E \rightarrow \epsilon$ and $X \rightarrow XE, X \rightarrow EX$, for every non-terminal $X$.\footnote{This non-terminal is basically needed for deletions, see \cite{GeladeMS12}.}

In \cite{GeladeMS12}, auxiliary relations $S_{X,Y}(i_1, j_1, i_2, j_2)$ are used\footnote{Actually, the choice of indices is slightly different in \cite{GeladeMS12} but this difference is inessential.}, for each pair $X,Y$ of non-terminals of the grammar $G$, with the intention that $(i_1,j_1,i_2,j_2) \in S_{X,Y}$ if and only if $i_1 \leq j_1 < i_2 \leq j_2$ and from $X$ the sentential form $w_{i_1}\cdots w_{j_1}Yw_{i_2}\cdots w_{j_2}$ can be derived using rules of $G$.  In particular, if $(i_1,j_1,i_2,j_2) \in S_{X,Y}$ and $Y \Rightarrow^* w_{j_1+1}\cdots w_{i_2-1}$, then $X \Rightarrow^* w_{i_1}\cdots w_{j_2}$.

We adapt this approach in a similar fashion as in the proof of Theorem \ref{theorem:languages:DynProp}. To this end, we use auxiliary relations of the form $S^{f_1,f_2}_{X,Y}$ where $f_1, f_2$ are relabeling functions. The intension is that  $(i_1,j_1,i_2,j_2) \in S^{f_1,f_2}_{X,Y}$ if and only if $i_1 \leq j_1 < i_2 \leq j_2$ and $X \Rightarrow^* f_1(w_{i_1})\cdots f_1(w_{j_1})Yf_2(w_{i_2})\cdots f_2(w_{j_2})$.

We show next how to maintain these relations under quantifier-free replacement queries $\rho(p)$ with one parameter. As in the proof of Theorem~\ref{theorem:languages:DynProp}, the generalization for any number of parameters  is straightforward, but tedious.

Let $\state$ be a program state and $\delta=(\rho,a)$ a change operation. Similarly as in the proof of Theorem~\ref{theorem:languages:DynProp}, from $\rho$ and the symbol $\tau$ at position $a$, one can derive relabeling functions $f_\leftarrow$, $f_\rightarrow$ and a symbol $\sigma=\sigma(\tau)$ that represent the modifications applied to the input string.

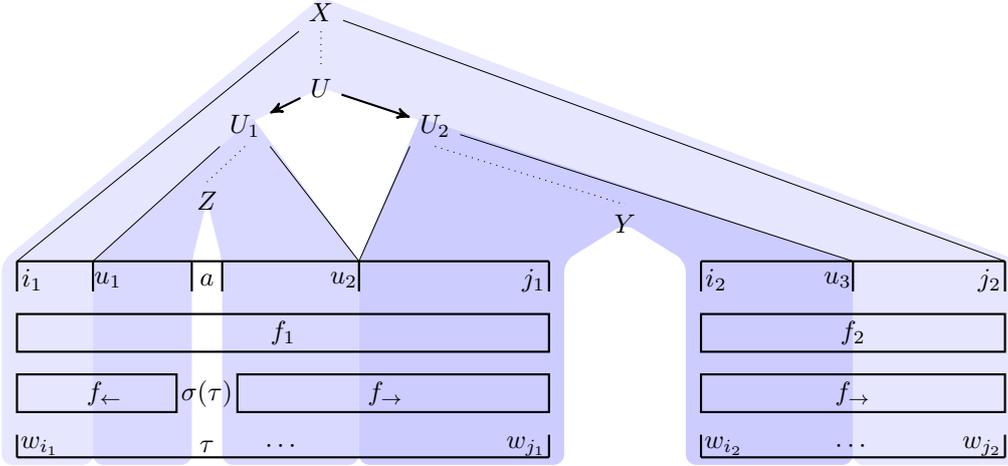
\begin{figure}
\begin{tikzpicture}[thick]
\pgfmathsetmacro{\by}{-0.3} 
\pgfmathsetmacro{\ty}{-0.55} 
\pgfmathsetmacro{\py}{-0.7} 
\fill [fill=blue!10,rounded corners] (-1,3.2) -- (-5.2,\by) -- (-5.2,-3) -- (-4,-3) --
(-4,\by) -- (-2,1.5) -- (-1,2) -- (0.5,1.5) -- (6,\by) -- (6,-3) --
(8.2,-3) -- (8.2,\by) -- (-1,3.2);
\fill [fill=blue!15,rounded corners] (-1.9,1.6) -- (-4,\by) -- (-4,-3) -- (-2.7,-3) --
(-2.7,\by) -- (-2.5,.5) -- (-2.3,\by) -- (-2.3,-3) -- (-0.5,-3) --
(-0.5,\by) -- (-1.9,1.6);
\fill [fill=blue!20,rounded corners] (0.3,1.6) -- (-0.5,\by) --
(-0.5,-3) -- (2.2,-3) -- (2.2,\by) -- (3,0.2) -- (3.8,\by) -- (3.8,-3) -- (6,-3)
-- (6,\by) -- (0.3,1.6);
\node (X) at (-1,3) {$X$};
\node (U1) at (-2,1.5) {$U_1$};
\node (U2) at (0.5,1.5) {$U_2$};
\node (U) at (-1,2) {$U$}
 edge[>=stealth', ->] (U1) edge[>=stealth', ->] (U2);
\node (Y) at (3,.2) {$Y$};
\node (Z) at (-2.5,.5) {$Z$};
\draw [very thin] (X.south west) -- (-5,\by);
\draw [very thin]  (X.340) -- (8,\by);
\draw [very thin]  (U1.south west) -- (-4,\by);
\draw [very thin]  (U1.south east) -- (-.5,\by);
\draw [very thin]  (U2.south west) -- (-.5,\by);
\draw [very thin]  (U2.340) -- (6,\by);
\draw [thin,dotted] (X.south) -- (U.north);
\draw [thin,dotted] (U1.south) -- (Z.north);
\draw [thin,dotted] (U2.south) -- (Y.north);
\draw(-5,\by) -- (2,\by);
\draw(4,\by) -- (8,\by);
\foreach \x in {-5,-4,-.5,6,8,-2.7,-2.3,2,4} {\draw (\x,\by) -- (\x,\py);} 
\node at (-4.8,\ty) {$i_1$};
\node at (-3.8,\ty) {$u_1$};
\node at (-2.5,\ty) {$a$};
\node at (-.7,\ty) {$u_2$};
\node at (1.8,\ty) {$j_1$};
\node at (4.2,\ty) {$i_2$};
\node at (5.8,\ty) {$u_3$};
\node at (7.8,\ty) {$j_2$};
\draw (-5,-1) rectangle (2,-1.5);
\draw (4,-1) rectangle (8,-1.5);
\node at (-1.5,-1.25) {$f_1$};
\node at (6,-1.25) {$f_2$};
\draw (-5,-1.8) rectangle (-2.9,-2.3);
\draw (-2.1,-1.8) rectangle (2,-2.3);
\draw (4,-1.8) rectangle (8,-2.3);
\node at (-3.85,-2.05) {$f_\leftarrow$};
\node at (-2.5,-2.05) {$\sigma(\tau)$};
\node at (-.15,-2.05) {$f_\rightarrow$};
\node at (6,-2.05) {$f_\rightarrow$};

\foreach \x in {-5,8,2,4} {\draw (\x,-2.6) -- (\x,-2.9);} 
\draw (-5,-2.9) -- (2,-2.9);
\draw (4,-2.9) -- (8,-2.9);
\node at (-4.7,-2.75) {$w_{i_1}$};
\node at (-2.5,-2.75) {$\tau$};
\node at (-1.5,-2.75) {$\cdots$};
\node at (1.7,-2.75) {$w_{j_1}$};
\node at (4.3,-2.75) {$w_{i_2}$};
\node at (6,-2.75) {$\cdots$};
\node at (7.7,-2.75) {$w_{j_2}$};
\end{tikzpicture}
\caption{Conditions for the update of $S^{f_1,f_2}_{X,Y}$ in the proof of Theorem \ref{theorem:languages:EFO} (see also \cite[Fig. 2]{GeladeMS12}). In the bottom layer there is the string before the modification. One layer above the changes of the modification are depicted, the next layer shows the transformation assumed by the auxiliary relation. On top, we see the partial derivation tree checked by the update formula. \label{fig:languages:tree}}
\end{figure}
We explain the part of the update formula of $S^{f_1,f_2}_{X,Y}$ that deals with the case~\mbox{$i_1 < a < j_1$}. The other cases are similar. Inside this case, we omit some border sub-cases.  
The update formula essentially checks whether for some non-terminals $U, U_1, U_2$ and $Z$ with $Z \rightarrow f_1(\sigma)$ and $U \rightarrow U_1U_2$ there are positions $u_1, u_2, u_3$ with $i_1 \leq u_1 < a < u_2 < j_1$ and $i_2 \leq u_3 \leq j_2$ such that 
$S^{f_1 \circ f_\leftarrow,f_2 \circ f_\rightarrow}_{X,U}(i_1,u_1-1,u_3+1,j_2)$, 
$S^{f_1 \circ f_\leftarrow,f_1 \circ f_\rightarrow}_{U_1,Z}(u_1,a-1,a+1,u_2)$, and 
$S^{f_1 \circ f_\rightarrow,f_2 \circ f_\rightarrow}_{U_2,Y}(u_2+1,j_1,i_2, u_3)$ hold.
An illustration is given in Figure \ref{fig:languages:tree}.
 
The initialization of the relations $S^{f_1,f_2}_{X,Y}$ is straightforward. If, for relabeling functions $f_1,f_2$ it holds $f_1(\epsilon)=\sigma_1$ and  $f_2(\epsilon)=\sigma_2$, then a tuple $(i_1,j_1,i_2,j_2)$ is in $S^{f_1,f_2}_{X,Y}$ if and only if $X\Rightarrow^* \sigma_1^{j_1-i_1+1}Y\sigma_2^{j_2-i_2+1}$.
\medskip

It remains to show how the case of  $\EFO$-definable change operations can be reduced to the quantifier-free case. More precisely, we show that for each $\EFO$-definable change operation $\rho(p)$ there exists a quantifier-free change operation $\rho'(p,\tpl{q})$ with many additional parameters collected in $\tpl{q}$ such that, for each string $w$ and each position $a$, there is a tuple $\tpl{c}$ of  positions such that $\delta(w)=\delta'(w)$, where $\delta=(\rho,a)$ and $\delta'=(\rho',(a,\tpl c))$. Since we already showed that  quantifier-free change operations of any arity can be handled in \DynFO, the theorem then follows.

To this end, let $\mu_\sigma(p;x)$ be a $\EFO$-formula which expresses whether after change operation $\rho(p)$, position $x$ will carry symbol $\sigma$. Without loss of generality, we can assume that  $\mu_\sigma(p;x)=\bigvee_{m} \theta_m$, where each $\theta_m(p;x)$ is of the form  $\exists y_1\ldots,y_k \psi_m$, for some $k$, and each $\psi_m(p,\tpl y;x)$ describes a full atomic type over $p, y_1\ldots,y_k,x$ with respect to the linear order and the letter relations. That is, $\psi_m$ completely specifies the relative order of the positions bound to the variables and the symbols they carry. 

We claim that, for every $m$, for each word $w$ and each position $a$ there exists a tuple $\tpl c=(c_1,\ldots,c_k)$ of positions in $w$ such that for every position $i$ it holds $w\models \theta_m(a,i)$ if and only if $w\models \psi_m(a,\tpl c;i)$.

We can assume without loss of generality that the order-type specified by $\psi_m$ fulfills $y_1<\cdots y_\ell<x<y_{\ell+1}<\cdots<y_k$ and $y_j=p$, for some $j$ and $\ell$. This can be achieved by adding or removing and renaming variables accordingly.
 
Then, if $w\models \theta_m(a,i)$ holds, positions $c_1,\ldots,c_\ell$ for variables $y_1,\ldots,y_\ell$ can be chosen such that the tuple $(c_1,\ldots,c_\ell)$ is lexicographically minimal and $c_{k},\ldots,c_{\ell+1}$ for $y_{k},\ldots,y_{\ell+1}$ such  that the tuple $(c_k,\ldots,c_{\ell+1})$ is lexicographically maximal. It is easy to see that this tuple $\tpl c$ fulfills the condition of the claim. We call $\tpl c$ the canonical tuple for $\theta_m$ with respect to $w$ and $a$.

The update formulas for $\rho(p)$ can now be obtained as follows. First, the update program for the replacement queries given by the formulas $\bigvee_m \psi_m(p,\tpl c_m;x)$, where all tuples $\tpl c_m$ are pairwise disjoint, is determined. In the resulting update formulas, a block of existential quantifiers for the tuples  $\tpl c_m$ is added and it is verified that each tuple is bound to a canonical tuple of positions.%
\end{proofof}

\begin{proofof}{Theorem~\ref{theorem:tc:general:insertion}, continued}
  
Both graphs $G$ and $G'$ are not acyclic and thus not
suitable for (b). Yet a slight modification of the above construction
yields acyclic graphs $G$ and $G'$ but it uses an existential
quantifier in the definition of the change operation. The graphs are
depicted in Figure \ref{figure:tc:general:insertionb}. The graph $G'$
is obtained by applying the operation $\delta = \mf{E}{x,y}{} \df
E(x,y) \lor \exists z \big(E(z, x) \land E(z, y)\big)$ to $G$. The
proof is now analogous to (a) except that $G$ has to be first-order
interpreted into the path graph $G_0$, as it uses a slightly larger
domain. The rest of the argument for (b) is
analogous, but $\Qreach(G_0)$ is obtained from $\Qreach(G')$ as 
$\Qreach(G_0)=\Qreach(G')\cap (V_0\times V_0)$.

  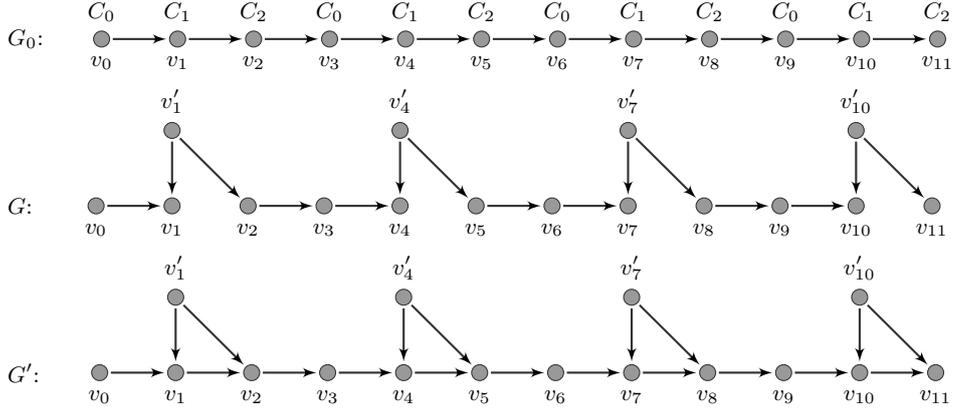
\begin{figure}[ht] 
\begin{tikzpicture}[
      xscale=1.0,
      yscale=1.0,
      font=\footnotesize,
    ]
          \node (tmp) at (-1,0) {$G_0$:};
          \node (0) at (0,0)[mnode, label=below:$v_0$, label=above:$C_0$] {};
          \node (1) at (1,0)[mnode, label=below:$v_1$, label=above:$C_1$] {};
          \node (2) at (2,0)[mnode, label=below:$v_2$, label=above:$C_2$] {};
          \node (3) at (3,0)[mnode, label=below:$v_3$, label=above:$C_0$] {};
          \node (4) at (4,0)[mnode, label=below:$v_4$, label=above:$C_1$] {};
          \node (5) at (5,0)[mnode, label=below:$v_5$, label=above:$C_2$] {};
          \node (6) at (6,0)[mnode, label=below:$v_6$, label=above:$C_0$] {};
          \node (7) at (7,0)[mnode, label=below:$v_7$, label=above:$C_1$] {};
          \node (8) at (8,0)[mnode, label=below:$v_8$, label=above:$C_2$] {};
          \node (9) at (9,0)[mnode, label=below:$v_9$, label=above:$C_0$] {};
          \node (10) at (10,0)[mnode, label=below:$v_{10}$, label=above:$C_1$] {};
          \node (11) at (11,0)[mnode, label=below:$v_{11}$, label=above:$C_2$] {};
 
           \draw [dEdge] (0) to (1);
           \draw [dEdge] (1) to (2);
           \draw [dEdge] (2) to (3);
           \draw [dEdge] (3) to (4);
           \draw [dEdge] (4) to (5);
           \draw [dEdge] (5) to (6);
           \draw [dEdge] (6) to (7);
           \draw [dEdge] (7) to (8);
           \draw [dEdge] (8) to (9);
           \draw [dEdge] (9) to (10);
           \draw [dEdge] (10) to (11);

    \end{tikzpicture}  
  
    \begin{tikzpicture}[
      xscale=1.0,
      yscale=1.0,
      font=\footnotesize,
    ]
          \node (tmp) at (-1,0) {$G$:};
          \node (0) at (0,0)[mnode, label=below:$v_0$] {};
          \node (1) at (1,0)[mnode, label=below:$v_1$] {};
          \node (1a) at (1,1)[mnode, label=above:$v'_1$] {};
          \node (2) at (2,0)[mnode, label=below:$v_2$] {};
          \node (3) at (3,0)[mnode, label=below:$v_3$] {};
          \node (4) at (4,0)[mnode, label=below:$v_4$] {};
          \node (4a) at (4,1)[mnode, label=above:$v'_4$] {};
          \node (5) at (5,0)[mnode, label=below:$v_5$] {};
          \node (6) at (6,0)[mnode, label=below:$v_6$] {};
          \node (7) at (7,0)[mnode, label=below:$v_7$] {};
          \node (7a) at (7,1)[mnode, label=above:$v'_7$] {};
          \node (8) at (8,0)[mnode, label=below:$v_8$] {};
          \node (9) at (9,0)[mnode, label=below:$v_9$] {};
          \node (10) at (10,0)[mnode, label=below:$v_{10}$] {};
          \node (10a) at (10,1)[mnode, label=above:$v'_{10}$] {};
          \node (11) at (11,0)[mnode, label=below:$v_{11}$] {};
 
           \draw [dEdge] (0) to (1);
           \draw [dEdge] (1a) to (1);
           \draw [dEdge] (1a) to (2);           
           \draw [dEdge] (2) to (3);
           \draw [dEdge] (3) to (4);
           \draw [dEdge] (4a) to (4);
           \draw [dEdge] (4a) to (5);
           \draw [dEdge] (5) to (6);
           \draw [dEdge] (6) to (7);
           \draw [dEdge] (7a) to (7);
           \draw [dEdge] (7a) to (8);
           \draw [dEdge] (8) to (9);
           \draw [dEdge] (9) to (10);
           \draw [dEdge] (10a) to (10);
           \draw [dEdge] (10a) to (11);
                      
    \end{tikzpicture}

        \begin{tikzpicture}[
      xscale=1.0,
      yscale=1.0,
      font=\footnotesize
    ]
          \node (tmp) at (-1,0) {$G'$:};
          \node (0) at (0,0)[mnode, label=below:$v_0$] {};
          \node (1) at (1,0)[mnode, label=below:$v_1$] {};
          \node (1a) at (1,1)[mnode, label=above:$v'_1$] {};
          \node (2) at (2,0)[mnode, label=below:$v_2$] {};
          \node (3) at (3,0)[mnode, label=below:$v_3$] {};
          \node (4) at (4,0)[mnode, label=below:$v_4$] {};
          \node (4a) at (4,1)[mnode, label=above:$v'_4$] {};
          \node (5) at (5,0)[mnode, label=below:$v_5$] {};
          \node (6) at (6,0)[mnode, label=below:$v_6$] {};
          \node (7) at (7,0)[mnode, label=below:$v_7$] {};
          \node (7a) at (7,1)[mnode, label=above:$v'_7$] {};
          \node (8) at (8,0)[mnode, label=below:$v_8$] {};
          \node (9) at (9,0)[mnode, label=below:$v_9$] {};
          \node (10) at (10,0)[mnode, label=below:$v_{10}$] {};
          \node (10a) at (10,1)[mnode, label=above:$v'_{10}$] {};
          \node (11) at (11,0)[mnode, label=below:$v_{11}$] {};
 
           \draw [dEdge] (0) to (1);
           \draw [dEdge] (1a) to (1);
           \draw [dEdge] (1a) to (2);           
           \draw [dEdge] (1) to (2);
           \draw [dEdge] (2) to (3);
           \draw [dEdge] (3) to (4);
           \draw [dEdge] (4a) to (4);
           \draw [dEdge] (4a) to (5);
           \draw [dEdge] (4) to (5);
           \draw [dEdge] (5) to (6);
           \draw [dEdge] (6) to (7);
           \draw [dEdge] (7a) to (7);
           \draw [dEdge] (7a) to (8);
           \draw [dEdge] (7) to (8);
           \draw [dEdge] (8) to (9);
           \draw [dEdge] (9) to (10);
           \draw [dEdge] (10a) to (10);
           \draw [dEdge] (10a) to (11);
           \draw [dEdge] (10) to (11);
    \end{tikzpicture}

    \caption{The graphs from the proof of Theorem \ref{theorem:tc:general:insertion}(b).}
      \label{figure:tc:general:insertionb}
      
  \end{figure}

This completes the proof for \DynFO. For the full statement, with
ordered graphs, it suffices to replace Lemma \ref{lemma:tc:pathgraphs} by the following Lemma~\ref{lemma:lower-reach-order}.
\end{proofof}

\begin{lemma}\label{lemma:lower-reach-order}
  For every first-order formula $\varphi(x,y)$ over a signature $\sigma=\{E,<,A_1,\ldots,A_m\}$, where $A_1,\ldots,A_m$ are unary, there is a graph $G$, consisting of a single path, and a linear order $<$ on its vertices, such that for all sets $A_1,\ldots,A_m$ of nodes of $G$, there are nodes $u,v_1,v_2$ such that $(G,<,A_1,\ldots,A_m)\models \varphi(u,v_1) \liff \varphi(u,v_2)$, but $v_2$ is reachable from $u$ and $v_1$ is not.
\end{lemma}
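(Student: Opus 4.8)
The plan is to prove this strengthening of Lemma~\ref{lemma:tc:pathgraphs} by an Ehrenfeucht--Fra\"{\i}ss\'e argument. Locality, which underlies Lemma~\ref{lemma:tc:pathgraphs}, is useless here because the built-in order $<$ makes the Gaifman graph complete; instead, $<$ must be chosen so that no first-order formula can recover the path order $\preceq$ (the transitive closure of $E$) from $<$ together with boundedly many unary colours. A robust way to force this is a product construction. Let $q=\qr(\varphi)$ and let $N=N(q,m)$ be sufficiently large, to be fixed below. Take the vertex set to be the grid $[N]\times[N]$, let $<$ be the lexicographic order on it, and let $G$ be the directed Hamiltonian path that runs through the grid column by column, traversing each column from row $N$ down to row $1$. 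Thus $E$ connects $(i,j)$ to $(i-1,j)$ for $i>1$ and $(1,j)$ to $(N,j+1)$ for $j<N$, and $(i,j)\preceq(i',j')$ holds iff $j<j'$, or $j=j'$ and $i\ge i'$. One checks directly that from a vertex $u=(a,b)$ exactly the vertices in columns $>b$ together with the vertices $(a',b)$ with $a'\le a$ are reachable. So reachability hinges on comparing \emph{column indices}, and since no column is a $<$-interval, the column index is precisely the quantity that first-order logic cannot compute on this structure.

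First I would fix $N$ so large that $N/3$ exceeds both $2^{q+1}$ and the number of possible colourings of a $(2\cdot 2^{q}+1)\times(2\cdot 2^{q}+1)$ block, and let an arbitrary colouring $A_1,\dots,A_m$ be given. Pick a middle row $\rho:=\lceil N/2\rceil$. As $b$ ranges over the middle third of $[N]$, the colouring of the block of all $(\rho+s,b+t)$ with $|s|,|t|\le 2^{q}$ can take at most that many values, so there are two columns $b<b'$ in the middle third whose surrounding blocks are coloured identically. Put $\Delta:=2^{q+1}+1$ and set $u:=(\rho-\Delta,b)$, $v_1:=(\rho,b)$, $v_2:=(\rho,b')$. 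By the reachability description, $v_1\prec u$ (same column, larger row), so $v_1$ is not reachable from $u$, whereas $v_2\succ u$ (a later column), so $v_2$ is reachable from $u$. Moreover $v_1$ and $v_2$ are both $<$-above $u$; the coloured neighbourhoods of $v_1$ and $v_2$ of $<$-radius and $E$-radius $2^{q}$ are isomorphic; and $u$, $v_1$, $v_2$ have pairwise disjoint such neighbourhoods and are far (in both the $<$-metric and the $E$-metric) from all four grid corners, hence from the $<$-extremes $(1,1),(N,N)$ and the $\preceq$-extremes $(N,1),(1,N)$ --- in particular $u$ and $v_1$ lie in the same column but at $E$-distance $\Delta>2^{q+1}$.

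It remains to show $(G,<,\bar A,u,v_1)\equiv_{q}(G,<,\bar A,u,v_2)$; since $\qr(\varphi)\le q$ this yields $(G,<,\bar A)\models\varphi(u,v_1)\liff\varphi(u,v_2)$, which together with the reachability facts is exactly the assertion of the lemma. I would play the $q$-round game and have Duplicator maintain two interleaved strategies: the standard \emph{string} strategy for the linear order $<$ (after round $r$, all $<$-distances among the played pebbles and to the $<$-extremes agree up to threshold $2^{q-r}$), and the \emph{line} strategy for $E$ (every pebble within $E$-distance $2^{q-r}$ of $v_1$, resp.\ of $u$, is matched $E$- and colour-isomorphically to the corresponding pebble near $v_2$, resp.\ near $u$, and pebbles $E$-far from all previously pebbled vertices stay $E$-far). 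The key point is that a column index is invisible to $q$ rounds: the jump edges are first-order recognisable --- they point forward in $<$, and their sources form (essentially) an initial $<$-interval --- but the $E$-path from $v_i$ down to the bottom of its column, as well as the $E$-path from $v_1$ down to $u$, has length exceeding $2^{q}$ and cannot be traversed in $q$ rounds, while ``lying in the same column'' is not a $<$-interval, so Spoiler cannot shortcut to it via $<$ either. Hence after $q$ rounds the played tuples induce isomorphic substructures and Duplicator wins.

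The routine parts are the reachability characterisation of $G$, the pigeonhole placement of $u,v_1,v_2$, and reading off the conclusion. The genuinely delicate step --- the main obstacle --- is the Ehrenfeucht--Fra\"{\i}ss\'e analysis of the previous paragraph: one must verify that combining the string game for $<$ with the line game for the path $E$ on the product structure never lets Spoiler extract a column index, and that the block width $2^{q}$, the separation $\Delta$, and the size $N$ are large enough for Duplicator's invariant to survive all $q$ rounds against the adversarial colouring.
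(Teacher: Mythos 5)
Your construction is, at heart, the same as the paper's: a grid whose built-in linear order slices it one way while the Hamiltonian path slices it the other way, so that the ``slice index'' governing reachability is not a $<$-interval; the placement of $u,v_1,v_2$ and the distance bookkeeping (separation $\Delta>2^{q+1}$, all relevant path- and order-distances exceeding $2^q$) are fine. The gap is in the pigeonhole. You choose $b<b'$ so that only the \emph{local} $(2\cdot 2^{q}+1)\times(2\cdot 2^{q}+1)$ colour blocks around $(\rho,b)$ and $(\rho,b')$ coincide, and then conclude that Duplicator wins the $q$-round game on $(G,<,A_1,\dots,A_m,u,v_1)$ and $(G,<,A_1,\dots,A_m,u,v_2)$. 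That implication is false for an adversarial colouring: the colouring is quantified universally \emph{after} $G$ and $<$ are fixed, and a single coloured element lying $<$-between $v_1$ and $v_2$ but outside both blocks already breaks equivalence at quantifier rank $1$. Concretely, take $A_1=\{(\rho,c_0)\}$ with $c_0$ the middle column; your pigeonhole may legitimately return $b<c_0-2^{q}$ and $b'>c_0+2^{q}$ (both blocks are uncoloured, hence identical), and then $\exists z\,\bigl(y<z\wedge A_1(z)\bigr)$ holds at $y=v_1$ but fails at $y=v_2$. Your ``string strategy'' invariant (matching $<$-distances to pebbles and to the extremes) cannot absorb this: Duplicator's answer to the pebble $(\rho,c_0)$ is forced by the colour and then violates the order constraint relative to $v_2$.

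The repair is exactly the move the paper makes. One must pigeonhole not on local block colourings but on the full quantifier-rank-$q$ $1$-types of the candidate nodes in the entire coloured, ordered structure (enriched with unary markers for the slices), of which there are only finitely many; equality of these \emph{global} types is what actually licenses a winning strategy in the region containing $v_1$ versus the one containing $v_2$. The paper then invokes the Extension Theorem of \cite{Schwentick96} to glue that strategy to the identity on the rest of the structure; this also replaces your informal ``combine the string game for $<$ with the line game for $E$'' step, which you yourself flag as the main obstacle and which, as sketched, does not record the coloured order type of the intervals between pebbles. With that replacement your grid would work just as well as the paper's; as written, the choice of $v_1,v_2$ does not support the claimed $\equiv_q$-equivalence.
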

\begin{proof}
Let $\varphi(x,y)$ be a formula as stated in the lemma and let $k$ be
its quantifier rank. Let $\tau$ be the signature $\sigma\cup
\{B_1,\ldots,B_{r}\}$ with $r=2^{k+1}+1$, and let $\ell$ be the number
of $\FO[k,1]$-types of $\tau$-structures. The 
additional relations $B_1,\ldots,B_{r}$ will be useful later in the
proof, when
the so-called Extension Theorem is applied.

Let $G$ be the graph with vertex set $V=\{(i,j)\mid 0\le i\le r, 1\le j\le \ell+1
\}$ and edges between $(i_1,j_1)$ and $(i_2,j_2)$ if
$i_1+1=i_2$ and $j_1=j_2$ or if $i_1=r$, $i_2=0$, and
$j_1+1=j_2$. That is, $G$ consists of a path with $(\ell+1) (r+1)$
nodes from $(0,1)$ to $(r,\ell+1)$ that consists of  $\ell+1$
sub-paths of length $r$ from nodes $(0,i)$ to $(r,i)$, which are
connected by edges of the form $((r,i),(0,i+1))$. We refer to $i$ in
$(i,j)$ as its \emph{column number} and to $j$ as its \emph{row
  number} and to \emph{columns} and \emph{rows} of the graph,
accordingly. This graph is depicted in Figure \ref{figure:tc:general:order}.

  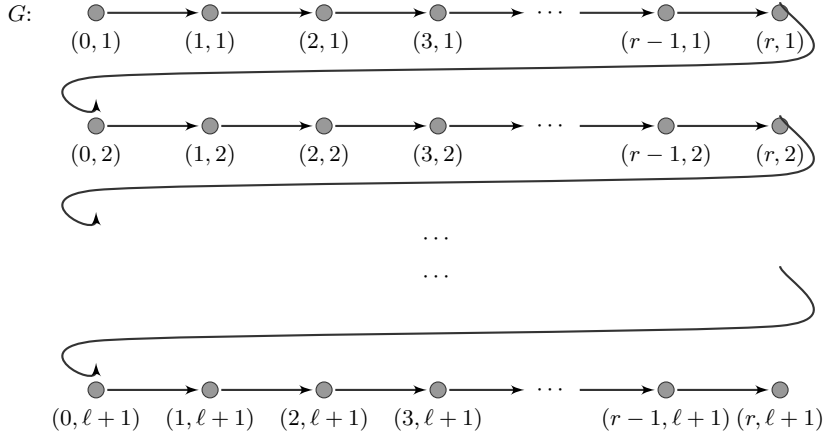
\begin{figure}[ht] 
      \begin{tikzpicture}[
      xscale=1.0,
      yscale=1.0,
      font=\footnotesize
    ]
          \node (tmp) at (-1,0) {$G$:};
          \node (0) at (0,0)[mnode, label=below:{$(0,1)$}] {};
          \node (1) at (1.5,0)[mnode, label=below:{$(1,1)$}] {};
          \node (2) at (3,0)[mnode, label=below:{$(2,1)$}] {};
          \node (3) at (4.5,0)[mnode, label=below:{$(3,1)$}] {};
          \node (4) at (6,0) {$\cdots$};
          \node (5) at (7.5,0)[mnode, label=below:{$(r-1,1)$}] {};
          \node (6) at (9,0)[mnode, label=below:{$(r,1)$}] {};
          \node (10) at (0,-1.5)[mnode, label=below:{$(0,2)$}] {};
          \node (11) at (1.5,-1.5)[mnode, label=below:{$(1,2)$}] {};
          \node (12) at (3,-1.5)[mnode, label=below:{$(2,2)$}] {};
          \node (13) at (4.5,-1.5)[mnode, label=below:{$(3,2)$}] {};
          \node (14) at (6,-1.5) {$\cdots$};
          \node (15) at (7.5,-1.5)[mnode, label=below:{$(r-1,2)$}] {};
          \node (16) at (9,-1.5)[mnode, label=below:{$(r,2)$}] {};
          \node (20) at (0,-3){};
          \node (23) at (4.5,-3) {$\cdots$};
          \node (33) at (4.5,-3.5) {$\cdots$};
          \node (36) at (9,-3.5){};
          \node (40) at (0,-5)[mnode, label=below:{$(0,\ell+1)$}] {};
          \node (41) at (1.5,-5)[mnode, label=below:{$(1,\ell+1)$}] {};
          \node (42) at (3,-5)[mnode, label=below:{$(2,\ell+1)$}] {};
          \node (43) at (4.5,-5)[mnode, label=below:{$(3,\ell+1)$}] {};
          \node (44) at (6,-5) {$\cdots$};
          \node (45) at (7.5,-5)[mnode, label=below:{$(r-1,\ell+1)$}] {};
          \node (46) at (9,-5)[mnode, label=below:{$(r,\ell+1)$}] {};
 
           \draw [dEdge] (0) to (1);        
           \draw [dEdge] (1) to (2);
           \draw [dEdge] (2) to (3);
           \draw [dEdge] (3) to (4);
           \draw [dEdge] (4) to (5);
           \draw [dEdge] (5) to (6);
           \draw [dEdge,  shorten >=4pt, shorten <=4pt] plot [smooth, tension=0.4] coordinates { (6) ($ (6) +(0,-0.65)$) ($ (10) +(-0,0.65)$) (10)};
           \draw [dEdge] (10) to (11);        
           \draw [dEdge] (11) to (12);
           \draw [dEdge] (12) to (13);
           \draw [dEdge] (13) to (14);
           \draw [dEdge] (14) to (15);
           \draw [dEdge] (15) to (16);           
            \draw [dEdge,  shorten >=4pt, shorten <=4pt] plot [smooth, tension=0.4] coordinates { (16) ($ (16) +(0,-0.65)$) ($ (20) +(-0,0.65)$) (20)};
           \draw [dEdge,  shorten >=4pt, shorten <=4pt] plot [smooth, tension=0.4] coordinates { (36) ($ (36) +(0,-0.65)$) ($ (40) +(-0,0.65)$) (40)}; 
           \draw [dEdge] (40) to (41);        
           \draw [dEdge] (41) to (42);
           \draw [dEdge] (42) to (43);
           \draw [dEdge] (43) to (44);
           \draw [dEdge] (44) to (45);
           \draw [dEdge] (45) to (46);
    \end{tikzpicture}

    \caption{The graph from the proof of Lemma \ref{lemma:lower-reach-order}.}
      \label{figure:tc:general:order}
      
  \end{figure}

Let $<$ be just the lexicographic order (or stated otherwise: column-major order) on $V$. 

Let now $A_1,\ldots,A_m$ be arbitrary sets of nodes of $G$. Let
$\calB$ be the $\tau$-structure that is obtained from
$(G,<,A_1,\ldots,A_m)$ by removing column 0 and its adjacent edges,
and by adding the unary relations $B_1,\ldots,B_r$, where
each set $B_i$ is just the set of nodes of column $i$.
Since $G$ has more rows than there are $\FO[k,1]$-types of $\tau$-structures, there must be two nodes $v_1\df (2^k+1,i)$ and $v_2\df (2^k+1,j)$, $i<j$, which have the same $\FO[k,1]$-type in the $\tau$-structure $\calB$.

Clearly, there is a path from $u\df (0,j)$ to $v_2$ in $G$, but not from $u$ to $v_1$. However, we will show in the following that $(G,<,A_1,\ldots,A_m)\models \varphi(u,v_1) \liff \varphi(u,v_2)$. To this end, we show that the duplicator has a winning strategy in the $k$-round Ehrenfeucht game on the two structures $(G,<,A_1,\ldots,A_m,u,v_1)$ and $(G,<,A_1,\ldots,A_m,u,v_2)$. This follows with the help of the Extension Theorem (Theorem 8) from \cite{Schwentick96}, as we explain next. For the convenience of readers, we repeat  it as Theorem~\ref{theorem:extension} below.

In a nutshell, the Extension Theorem guarantees the existence of a
winning strategy for the duplicator by combining two strategies. To
explain the first strategy, let
$H$ denote column $2^k+1$ of $\calB$ with the distinguished
element $v_1$ and $H'$ the same column but with $v_2$ as distinguished element. 
The
first winning strategy is for the game on the neighborhoods of diameter
$2^k$ of $H$ and $H'$, that is on $(\calB,v_1)$ and
$(\calB,v_2)$. Such a winning strategy  exists, because $v_1$ and $v_2$ have the same
$\FO[k,1]$-type in $\calB$. Thanks to the $B_i$-relations this
strategy has the additional property that the duplicator answers each
move of the spoiler by a move (node) in the same column. 
The second winning strategy is trivial: it
is for the game on two identical copies of the structure obtained from
$(G,<,A_1,\ldots,A_m,u)$ by removing column $H$. The Extension Theorem
allows to combine these two strategies, thanks to the additional
property and because the linear order is very homogeneous with respect
to the column structure.

As signature $S$ in the application of the Extension Theorem we choose
$\sigma\cup \{c_1,c_2\}$. The structures $\calA$ and $\calA'$ are
$(G,<,A_1,\ldots,A_m,u,v_1)$ and $(G,<,A_1,\ldots,A_m,u,v_2)$,
respectively. The distance function is just the distance with respect
to $E$, in both structures. The sets $H$ $H'$ are chosen as above (and
there are no other $H_i$ or $H'_i$ required). Clearly, if we remove $H$ and $H'$ the resulting
structures $\calA-H$ and $\calA'-H$ are isomorphic via the identity
mapping~$\id$. To apply the Extension Theorem it remains to verify
that its conditions (i)--(iii) are fulfilled. 

Condition (i) just states that the duplicator has a winning structure
on $\calB$ and $\calB'$ that has the additional, ``distance from
$H$''-respecting property. It holds as explained above.

Condition (ii) holds, because the structures are identical and thus
$\id$ is the required isomorphism.

Condition (iii) is more complicated. For nodes $x_1,x_2$ from $\calA$ and $x'_1,x'_2$ from $\calA'$, let ($*$) be the condition that

$(x_1,x_2)\in E$ if and only if $(x'_1,x'_2)\in E$, and $x_1<x_2$ if and only if   $x'_1<x'_2$.

To establish Condition (iii), we have to show that ($*$) holds  under
the assumption that there exists a number $e< 2^k$ such that the
following statements hold.
\begin{enumerate}[(a)]
\item None of $x_1,x_2,x'_1,x'_2$ has distance $e$ from $H$ (or $H'$). (Henceforth, we call nodes whose distance is smaller than $e$ \emph{inner nodes} and the others \emph{outer nodes}).
\item If $x_1$ is an inner node then $x'_1$ has the same distance from $H'$. Otherwise, even $x_1=x'_1$. And likewise for $x_2$ and $x'_2$.
\item For the inner nodes from $x_1,x_2$ the (joint) atomic $\sigma$-type is the same as for the respective nodes from $x'_1,x'_2$.
\end{enumerate}
To conclude ($*$) from (a)--(c), we distinguish three cases: If $x_1$ and $x_2$ are both inner nodes, then ($*$) follows from (c). If both are outer nodes, it follows from (b). Finally, if $x_1$ is an inner node and $x_2$ an outer node then (a) guarantees that neither $(x_1,x_2)$ nor $(x'_1,x'_2)$ are edges and the relations $B_i$ guarantee that  $x_1<x_2$ if and only if   $x'_1<x'_2$.

This concludes the proof of the lemma.
\end{proof}

\begin{theorem}[Extension Theorem, Theorem 8 in \cite{Schwentick96}]\label{theorem:extension}
  Let $k>0$.
Let $S$ be a signature with relational symbols $R_1,\ldots,R_s$ of arities $a_1,\ldots,a_s$ and constant symbols $c_1,\ldots,c_t$.

Let $\calA,\calA'$ be $S$-structures. Let $\delta,\delta'$ be distance functions on $\calA$ and $\calA'$, respectively.

Let $H_1,\ldots,H_l$ and $H'_1,\ldots,H'_l$ be sequences of subsets of $U^{\calA}$ and $U^{\calA'}$ respectively, such that $N^{\delta}_{2^k}(H_i) \cap N^{\delta}_{2^k}(H_j) = \emptyset$ and $N^{\delta'}_{2^k}(H'_i) \cap N^{\delta'}_{2^k}(H'_j) = \emptyset$ for $i \not = j$.

Let $\alpha$ be an isomorphism from $\calA \downarrow (U^{\calA}-(H_1 \cup \cdots \cup H_l))$ to\\
$\calA'\downarrow (U^{\calA'}-(H'_1 \cup \cdots \cup H'_l))$.

Let $\delta_j(x):= \delta(x,H_j)$ and $\delta'_j(x):= \delta'(x,H'_j)$.

{\bf Duplicator has a winning strategy in the $k$-round FO Ehrenfeucht  game on $\calA$ and $\calA'$, if the following conditions are fulfiled.}
\begin{itemize}
\item[(i)]
For every $j \le l$, Duplicator has a winning strategy in the $k$-round Ehrenfeucht  game on
$(\calA \downarrow N_{2^k}^{\delta}(H_j),\delta_j)$ and $(\calA' \downarrow N_{2^k}^{\delta'}(H'_j),\delta'_j)$. 
\item[(ii)]
For every $x \in U^{\calA}-(H_1 \cup \cdots \cup H_l)$ and every $j \le l$
\[
\delta_j(x) = \delta'_j(\alpha(x)) \qquad \mbox{or} \qquad (\delta_j(x)>2^k \; \mbox{and} \; \delta'_j(\alpha(x))>2^k).
\]
\item[(iii)]
For every $p \le s$, all sequences $x_1,\ldots,x_{a_p}\in U^{\calA}$ and $x'_1,\ldots,x'_{a_p}\in U^{\calA'}$ it holds that $R_p^{\calA}(x_1,\ldots,x_{a_p}) \Longleftrightarrow R_p^{\calA'}(x'_1,\ldots,x'_{a_p})$ if there is $e < 2^k$ such that
\begin{itemize}
\item[(a)]
for every $i \le a_p$ and $j \le l$:  $\delta_j(x_i) \not = e+1 \not = \delta'_j(x'_i)$,
\item[(b)]
for every $i \le a_p$:
\begin{tabular}[t]{l}
if $\delta_j(x_i) \le e$ for some $j \le l$,  then $\delta_j(x_i) = \delta'_j(x'_i)$,\\  otherwise $\alpha(x_i)= x'_i$,
\end{tabular}
\item[(c)]
for every $j \le l$: if $x_{j_1},\ldots,x_{j_q}$ are exactly the elements of $x_1,\ldots,x_{a_p}$ in  $N_e^{\delta}(H_j)$ then $\calA \downarrow [x_{j_1},\ldots,x_{j_q}]  \cong \calA' \downarrow [x'_{j_1},\ldots,x'_{j_q}] $.
\end{itemize}
\end{itemize}

\end{theorem}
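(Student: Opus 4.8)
The plan is to prove the Extension Theorem the way such locality-style composition theorems for Ehrenfeucht games are usually proved: by describing, move by move, a winning strategy for Duplicator in the $k$-round game on $\calA$ and $\calA'$ that \emph{interleaves} the local strategies granted by~(i) on the neighbourhoods of the $H_j$ with the global isomorphism $\alpha$ granted by~(ii) on their complement. The game is run with a \emph{shrinking radius schedule}: a decreasing sequence of radii starting at $\rho_0=2^k$ (the radius appearing in the hypotheses), where the $H_j$ are regarded as ``moves'' made at round~$0$. Because $N^\delta_{2^k}(H_i)\cap N^\delta_{2^k}(H_j)=\emptyset$ for $i\neq j$ (and likewise on the $\calA'$ side) and all radii are at most $2^k$, every element of $\calA$ lies in the relevant neighbourhood of at most one $H_j$; this is what makes the commitments below unambiguous.

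Concretely, Duplicator maintains a classification of the played pairs $(a_i,b_i)$ — including the constant pairs $(c_t^{\calA},c_t^{\calA'})$, classified at ``round~$0$'' — into \emph{$j$-internal} pairs and \emph{external} pairs, subject to the invariant: (1)~the restriction of $a_i\mapsto b_i$ to the $j$-internal pairs is a position in the distance-game on $(\calA\downarrow N^\delta_{2^k}(H_j),\delta_j)$ versus $(\calA'\downarrow N^{\delta'}_{2^k}(H'_j),\delta'_j)$ from which Duplicator still wins the remaining rounds, obtained by following the strategy of~(i); in particular the $\delta_j$-distances to $H_j$ are matched by the $\delta'_j$-distances of the responses; and (2)~for each external pair, $b_i=\alpha(a_i)$ and, for every $j$, $\delta_j(a_i)=\delta'_j(b_i)$ whenever $\delta_j(a_i)\le 2^k$ (this uses~(ii)). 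When Spoiler plays an element $c$ in round $r$, say in $\calA$ (the other side is symmetric, via $\alpha^{-1}$ and the symmetry of (i) and (ii)): if $c$ lies in the current-radius neighbourhood of the unique applicable $H_j$, Duplicator answers by the region-$j$ strategy applied to the $j$-internal sub-position and records the new pair as $j$-internal; otherwise $c\notin\bigcup_j H_j$ (since $c\in H_j$ forces $\delta_j(c)=0$), so $\alpha(c)$ is defined, and Duplicator answers $\alpha(c)$ and records the pair as external. One point that needs a remark is that~(i) supplies a $k$-round strategy while a given region $j$ may be visited only in a sub-sequence of the $k$ rounds; since the $s$-th visit to region~$j$ occurs no earlier than combined round~$s$, and a $k$-round winning strategy still wins the game of length $k-s$ from the position reached after $s$ of its own moves, enough rounds always remain.

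After round $k$ one must verify that the induced map $\{(a_i,b_i)\}\cup\{(c_t^{\calA},c_t^{\calA'})\}$ is a partial isomorphism of $S$-structures, i.e.\ that $R_p^{\calA}(\bar y)\liff R_p^{\calA'}(\bar y')$ for every relation symbol $R_p$, every tuple $\bar y=(y_1,\dots,y_{a_p})$ of played elements and constants, and its image $\bar y'$. If all $y_i$ are $j$-internal for one common $j$, this follows from invariant~(1), as the region-$j$ strategy yields a partial isomorphism of the substructures $N^\delta_{2^k}(H_j)$ and an atomic fact all of whose arguments lie in a substructure holds there iff it holds in the ambient structure; if all $y_i$ are external, it follows because $\alpha$ is an isomorphism of the complement substructures. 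The remaining, mixed case (elements internal to different $H_j$, or a mixture of internal and external elements) is exactly what hypothesis~(iii) is built to dispatch: it suffices to produce a single $e<2^k$ for which conditions~(a), (b), (c) of~(iii) hold, and then~(iii) gives $R_p^{\calA}(\bar y)\liff R_p^{\calA'}(\bar y')$. The radius schedule is arranged so that, at the end of the game, the $\delta_j$-distances realised by $j$-internal elements and those realised by all other elements are separated by a ``gap'' of admissible values; picking $e$ in this gap (and avoiding the finitely many realised distances so that nothing sits at distance precisely $e+1$ from any $H_j$) makes ``$\delta_j(y_i)\le e$'' coincide with ``$y_i$ is $j$-internal'', so that (a) holds, (b) follows from invariants~(1) and~(2), and (c) follows from invariant~(1).

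I expect the genuine difficulty to lie precisely in this radius-and-gap bookkeeping, and in pinning down the schedule (and, where necessary, choosing the region strategies compatibly with $\alpha$ near the boundary of each $N^\delta_{2^k}(H_j)$) so that the separation between ``internal'' and ``external'' holds simultaneously for every atomic query that could be posed at the end of the game; this is the step that makes hypothesis~(iii) usable, and it is where the off-by-one care with $2^k$ versus $2^k-1$ and with open versus closed neighbourhoods must be taken. Everything else is routine: pairwise disjointness of the $2^k$-neighbourhoods keeps distinct region commitments from interfering, the triangle inequality together with the shrinking of the radii keeps an early commitment valid at later rounds, monotonicity of Ehrenfeucht strategies lets the $k$-round region strategies be reused, and the all-internal and all-external cases, as well as the treatment of constants, are immediate.
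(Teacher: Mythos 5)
This statement is not proved in the paper at all: it is Theorem~8 of \cite{Schwentick96}, reproduced verbatim ``for the convenience of readers'' and used as a black box in the proof of Lemma~\ref{lemma:lower-reach-order}. So there is no in-paper proof to compare your proposal against; I can only assess the proposal against what such a proof must accomplish. Your architecture --- run the $k$-round game with a shrinking radius schedule, answer moves near some $H_j$ by the local strategy from~(i) and all other moves by $\alpha$ using~(ii), and discharge the mixed atomic facts at the end via hypothesis~(iii) --- is the standard and surely the intended one. The problem is that the one step you defer (``the radius-and-gap bookkeeping'') is the entire content of the theorem, and the separation claim you rely on is false as stated for the schedule you describe.

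Concretely: with thresholds $2^{k-r}$ at round $r$, an element classified \emph{internal} to $H_j$ at round $1$ may sit at distance $2^{k-1}$ from $H_j$, while an element classified \emph{external} at round $k$ (threshold $1$) may sit at distance $2$ from the same $H_j$. For an atomic fact involving both, condition~(b) of~(iii) forces $e\ge 2^{k-1}$ (an internal element with $\delta_j>e$ would have to be $\alpha$-matched, which it is not), but then the external element lies in $N_e^{\delta}(H_j)$, and condition~(c) demands that the induced substructure on this \emph{mixed} set be isomorphic to its image --- i.e.\ it demands exactly the atomic equivalence between a locally-matched and an $\alpha$-matched element that you are invoking~(iii) to establish. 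So there is no admissible $e$ for that tuple, and the argument is circular. Your hedge about ``choosing the region strategies compatibly with $\alpha$ near the boundary'' gestures at the right repair, but nothing in hypotheses~(i)--(iii) hands you such compatibility for free; one must either change the classification rule (e.g.\ so that late moves close to an already-activated region are still fed to that region's local strategy, with the per-region thresholds coordinated so that a single $e$ works simultaneously for all regions and still dodges the finitely many realised distances for condition~(a)), or maintain a stronger invariant than the two you state. Until that is done, the proposal is an outline with the crux missing rather than a proof.
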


The notation $(\calA \downarrow N_{2^k}^{\delta}(H_j),\delta_j)$ in condition (i) is an abbreviation for a structure  that encodes the $\{0,\ldots, 2^k\}$ valued function $\delta_j$ by using $2^k+1$ fresh unary relations.

\begin{proofof}{Theorem~\ref{theorem:languages:EAFO}}
Let $L$ be the regular language $L((b^*ab^*ab^*)^*)$ over alphabet $\{a,b\}$ that contains all strings with an even number of $a$'s. This language can be maintained in \DynProp (and hence \DynFO) when only one position can change at a time. Indeed, already one nullary auxiliary relation (so, an auxiliary bit) suffices, namely the query relation~$Q$: every time an $a$ is inserted or deleted the bit $Q$ is flipped.

Let $\rho$ be the parameter-free replacement query that changes the label of a position $p$ from $b$ to $a$ if the position $p-1$ carries an $a$.
We show that $L$ cannot be maintained in \DynFO with only auxiliary
bits under single-tuple changes and $\rho$. To obtain a contradiction,
we assume there is a dynamic program $\prog$ with $m$ auxiliary bits and maximum quantifier depth $k$ of update formulas that achieves this. 

We consider strings of the form $(ab)^{n_1} (abb)^{n_2} \cdots (ab^{m+1})^{n_{m+1}}$ with $n_i \in \N$. For $I \subseteq \{1, \ldots, m+1\}$, let $w(I)$ be the string of this form such that $n_i = 2^k$ if $i \in I$ and $n_i = 2^k+1$ otherwise.
There are $2^m$ different valuations of $m$ auxiliary bits and $2^{m+1}$ different strings~$w(I)$, so let $I_1, I_2 \subseteq \{1,\ldots,m+1\}$ be different index sets such that in the states $\state_1 = (w(I_1),\aux)$ and $\state_2 = (w(I_2), \aux)$ reached by $\prog$ using only single insertions from an initially empty string, the auxiliary bits are valuated equally.

We show that $\prog$ cannot maintain the query for these instances under  $\rho$. Our strategy is as follows: the two strings will not be distinguishable from each other by first-order formulas with $k$ quantifiers after each change, so the auxiliary bits will be the same after each update. 
But at some point, the query answer will differ for the modified strings: by applying $\rho$, substrings of the form $b^*$ get smaller and eventually disappear. If an even number of $b^*$-substrings disappears, the parity of the number of $a$'s that are inserted by $\rho$ stays the same: if an even number of $a$'s are inserted by $\rho$, also the next change inserts an even number of $a$'s. If an odd number of $b^*$-substrings disappears, this parity changes.
By the choice of the two strings, at some point in one string an even number of $b^*$-substrings disappears, while in the other string an odd number of $b^*$-substrings disappears. But as the auxiliary bits and hence the query relation are the same at all time, the answer after the next change will be wrong for one string.

We first prove that for every number $\ell$ of times we apply $\rho$ to the strings, (1) $\prog$ will assign the same auxiliary bits to both strings, so for the states $(\updateDB{\rho^\ell}{w(I_1)}, \aux_1) = \updateState{P}{\rho^\ell}{\state_1}$ and $(\updateDB{\rho^\ell}{w(I_2)}, \aux_2) = \updateState{P}{\rho^\ell}{\state_2}$ for the update program $P$ of $\prog$, it holds that $\aux_1 = \aux_2$; and (2) $\updateDB{\rho^\ell}{w(I_1)} \equiv_k \updateDB{\rho^\ell}{w(I_2)}$, so the two strings cannot be distinguished by first-order formulas of quantifier depth $k$.
By choice of $I_1$ and $I_2$ (1) is the case for $\ell=0$. For (2), observe that if $w(I_1)$ and $w(I_2)$ have a different number of substrings of the form $ab^j$, then both numbers $n_j^1$ and $n_j^2$ are at least $2^k$. By a standard Ehrenfeucht-Fra\"isse argument, these numbers cannot be distinguished by first-order formulas of quantifier depth $k$. There are no other differences that could help differentiating $w(I_1)$ and $w(I_2)$.
For the inductive step, observe that (1) has to be true after applying
$\rho$ for the $\ell$-th time, as the update formulas can only access
the auxiliary bits and the strings after $\ell-1$ changes, but the
former are equal and the latter are not distinguishable by the
first-order update formulas of quantifier depth $k$. Therefore, the result of applying the update formulas has to be equal for every auxiliary bit. (2) is true with an analogous argumentation as in the base case.

It remains to prove that after some number of applications of $\rho$, one modified string is in $L$ while the other is not. As $\prog$ gives the same answer for both instances after each change, this shows that $\prog$ is not correct.

Assume that $w(I_1)$ and $w(I_2)$ are either both in $L$ or both not on $L$. Otherwise, $\prog$ is not correct and we are done.
Let $i$ be smallest index on which $I_1$ and $I_2$ differ. Without loss of generality, $i \in I_1$ and $i \notin I_2$.
Assume further that after each of the first $i$ applications of $\rho$, either both modified strings are in $L$, or both are not. Again, otherwise we are done.
It follows that before $\rho$ has been applied for the $i$-th time, the number of substrings $ab$ was either even for both strings or odd for both strings.
By applying $\rho$ for the $i$-th time, in the first string an even number of substring $a^iba$ are modified to $a^{i+1}$, so an even number of substrings $ab$ is removed, while in the second string an odd number of substrings $ab$ is removed.
So, when $\rho$ is applied for the $i+1$-th time, the number of positions whose label changes from $b$ to $a$ has for the first string the same parity as that number from the application before, while for the second string the parity of that number changes. As we assumed that after $i$ applications of $\rho$ both strings are in $L$ or both are not, after $i+1$ applications of $\rho$ exactly one of the modified strings is in $L$.
\end{proofof}

\end{document}